\newtheorem{theorem}{Theorem} 
\newtheorem{proposition}{Proposition} 
\newtheorem{corollary}{Corollary} 
\newtheorem{lemma}{Lemma}
\newtheorem{remark}{Remark}
\newtheorem{definition}{Definition} 
\newtheorem{assumption}{Assumption}
\newtheorem{example}{Example}
\def\BibTeX{{\rm B\kern-.05em{\sc i\kern-.025em b}\kern-.08em
		T\kern-.1667em\lower.7ex\hbox{E}\kern-.125emX}}
\begin{document}
	\title{Adaptive Identification with Guaranteed Performance Under Saturated-Observation and Non-Persistent Excitation}
	\author{Lantian Zhang and Lei Guo, \IEEEmembership{Fellow, IEEE}
		\thanks{This work was supported by the National Natural Science Foundation of China under Grant No. 12288201.}
		\thanks{Lantian Zhang and Lei Guo are with the Key Laboratory of Systems and Control, Academy of Mathematics and Systems Science, Chinese Academy of Sciences, Beijing 100190, China, and also with the School of Mathematical Science, University of Chinese Academy of Sciences, Beijing 100049, China. (e-mails: zhanglantian@amss.ac.cn, Lguo@amss.ac.cn). }}
	
	\maketitle
	
	\begin{abstract}
		This paper investigates adaptive identification and prediction problems for stochastic dynamical systems with saturated output observations, which arise from various fields in engineering and social systems, but up to now still lack comprehensive theoretical studies including guarantees for the estimation performance needed in practical applications. With this impetus, the paper has made the following main contributions: (i) To introduce an adaptive two-step quasi-Newton algorithm to improve the performance of the identification, which is applicable to a typical class of nonlinear stochastic systems with outputs observed under possibly varying saturation. (ii) To establish the global convergence of both the parameter estimators and adaptive predictors and to prove the asymptotic normality,  under the weakest possible non-persistent excitation condition, which can be applied to stochastic feedback systems with general non-stationary and correlated system signals or data. (iii) To establish useful probabilistic estimation error bounds for any given finite length of data, using either martingale inequalities or Monte Carlo experiments. A numerical example is also provided to illustrate the performance of the proposed identification algorithm. 	
	\end{abstract}
	
	\begin{IEEEkeywords}
		Asymptotic normality,  convergence, non-PE condition, stochastic systems, saturated observations.
	\end{IEEEkeywords}
	
	\section{Introduction}
	\label{sec:introduction}	
	Identifying the input-output relationship and predicting the future behavior of dynamical systems based on observation data are fundamental problems in various fields including control systems, signal processes, machine learning, etc. This paper considers identification and prediction problems for stochastic dynamical systems with saturated output observation data. Here, by saturated output observations, we mean that the observations for the output are produced through the following mechanism: at each time, the noise-corrupted output can be observed precisely only when its value lies in a certain range, however, when the output value exceeds this range, its observation becomes saturated, leading to imprecise information. The relationship between the system output and its observation is illustrated in Fig.\ref{fig1}, where $v_{k+1}$ and $y_{k+1}$ represent the system output and its observation respectively, the interval $[l_{k}, u_{k}]$ is the precise observation range, when the system output exceeds this range,  the only possible observation is a constant,  either $L_{k}$ or $U_{k}$. Note that if we take  $L_{k}=l_{k}=0, \;u_{k}=U_{k}=\infty$, then the saturation function will become the  ReLu function widely used in machine learning; and if we take $L_{k}=l_{k}=u_{k}=0,\; U_{k}=1$, the saturation function will turn to be a binary-valued function widely used in classification problems(\cite{mc1943}, \cite{gs1990}).
	\begin{figure}[htbp]
		\centerline{\includegraphics[width=7cm]{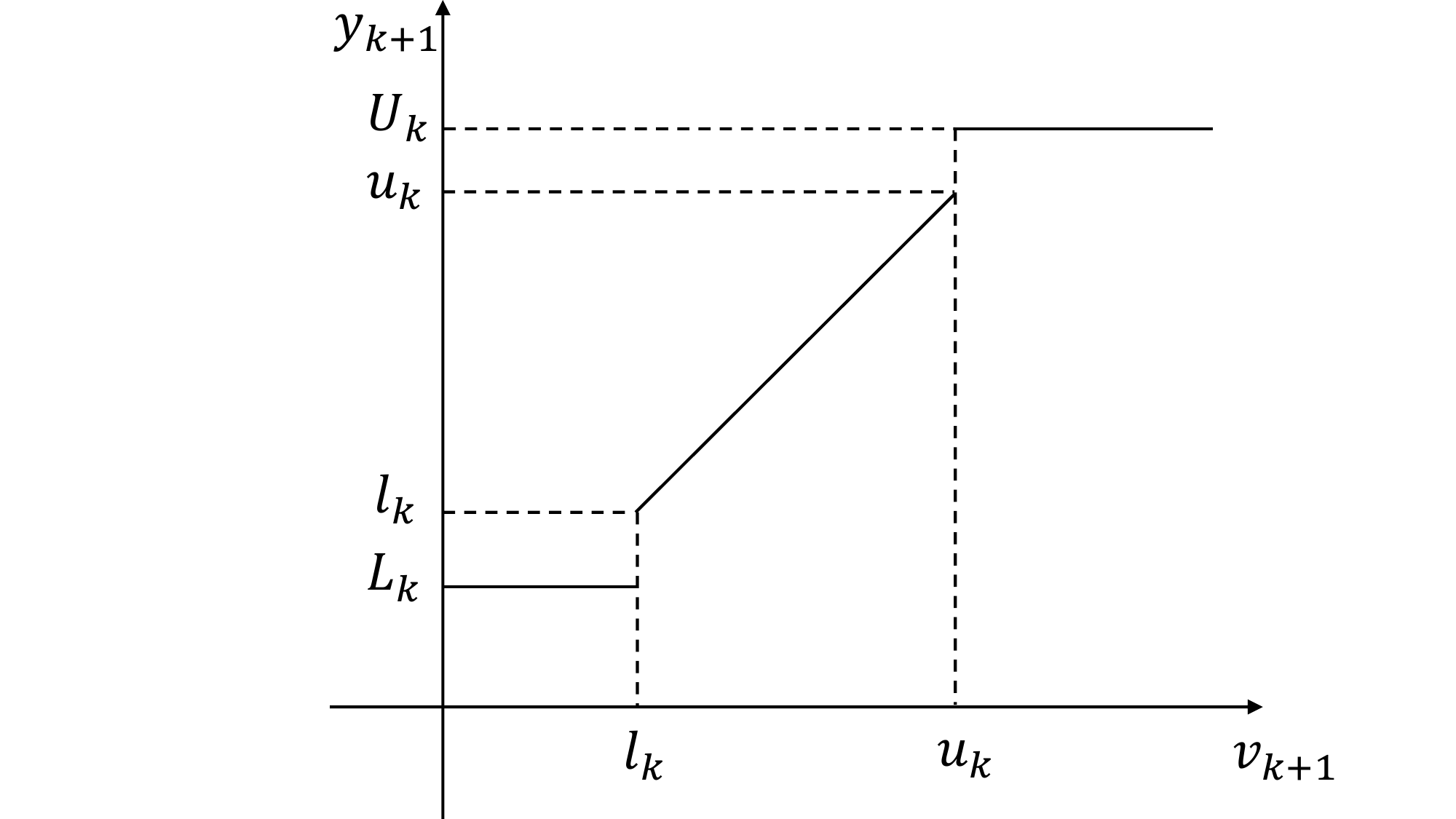}}
		\caption{Saturated output observations.}
		\label{fig1}
	\end{figure}
	
	Saturated output observations in stochastic dynamical systems exist widely in various fields including engineering (\cite{sj2004}\cite{hf2009}), economics (\cite{tj1958}-\cite{cm2013}), and social systems\cite{judical}.  We only mention  several examples in three different
	application areas.
	The first example is from control engineering (\cite{sj2004}), where $y_{k}$ represents the sensor observation of the system output, which can be considered as a saturated output observation since it becomes saturated if the output is too large to exceed the observation range of the available sensors. The second example is from economics (\cite{tj1958}), where $v_{k}$ is interpreted as an index of the consumer's intensity of desire to purchase a durable, $y_{k}$ is the true purchase which can be regarded as a saturated observation, since the intensity $v_{k}$ can be observed only if it exceeds a certain threshold where the true purchase takes place; The third example is from sentencing (\cite{judical}), where $y_{k}$ is the pronounced penalty and can also be regarded as a saturated observation since it is constrained within the statutory range of penalty according to the related basic criminal facts.

Since the emergence of saturation changes the structure of the original systems and may degrade system performance, providing a theoretical analysis for the performance guarantee of the identification algorithm is one of the most important issues addressed in this paper. Compared with the unsaturated case of observations, the key challenge of the current saturated case is the inherent nonlinearity in the observations of the underlying stochastic dynamical systems. In the past several decades, various identification methods with saturated observations have been studied intensively  with  both asymptotic and non-asymptotic results, on which we give a brief review separately in the following:
	
	First, most of the existing theoretical results are asymptotic in nature, where the number of observations needs to increase unboundedly or at least to be sufficiently large. For example,  the least absolute deviation methods were considered in \cite{PJ1984}, and the strong consistency and asymptotic normality of the estimators were proven for independent signals satisfying the usual persistent excitation (PE) condition where the condition number of the information matrix is bounded. Besides, the maximum likelihood (ML) method was considered in \cite{ly1992}, where consistency and asymptotic efficiency were established for independent or non-random signals satisfying a stronger PE condition. Moreover, a two-stage procedure based on ML was proposed in \cite{hj1976} to deal with two coupled models with saturated observations. Furthermore, the empirical measure approach was employed in \cite{ZG2003}-\cite{YG2007}, where strong consistency and asymptotic efficiency were established under periodic signals with binary-valued observations.  Such observations were also considered in \cite{GZ2013}, where a strongly consistent recursive projection algorithm was given under a condition stronger than the usual PE condition.  
	
	Second, there are also a number of non-asymptotic estimation results in the literature. Despite the importance of the asymptotic estimation results as mentioned above,  non-asymptotic results appear to be more practical, because one usually only has a finite number of data available for identification in practice. 
	However, obtaining non-asymptotic identification results, which are usually given under high probability, is quite challenging especially when the structure comes to nonlinear. 
Most of the existing results are established under assumptions that the system data are independently and identically distributed (i.i.d), e.g., the analysis of the stochastic gradient descent methods in \cite{kv2019},\cite{dc2021}. For the dependent data case, an online Newton method was proposed in \cite{po2021},  where a probabilistic error bound was given for linear stochastic dynamical systems where the usual PE  condition is satisfied.
	
	In summary, almost all of the existing identification results for stochastic dynamical systems with saturated observations need at least the usual PE condition on the system data, and actually, most need i.i.d assumptions. Though these idealized conditions are convenient for theoretical investigation, they are hardly satisfied or verified for general stochastic dynamical systems with feedback signals (see, e.g. \cite{1412020}).  This inevitably brings challenges for establishing an identification theory on either asymptotic or non-asymptotic results with saturated observations under more general (non-PE) signal conditions.

	Fortunately, there is a great deal of research on adaptive identification for linear or nonlinear stochastic dynamical systems with uncensored or unsaturated observations in the area of adaptive control, where the system data used can include those generated from stochastic feedback control systems. By adaptive identification, we mean that the identification algorithm is constructed recursively,  where the parameter estimates are updated online based on both the current estimate and the new observation, and thus the iteration instance depends on the time of the data observed. In comparison with offline algorithms such as those widely used in statistics and machine learning where the iteration instance is the number of search steps in numerical optimization (\cite{yy2015},\cite{sm2018}), the adaptive algorithm has at least two advantages: one is that the algorithm can be updated conveniently when new data come in without restoring the old data, and another is that general non-stationary and correlated data can be handled conveniently due to the structure of the adaptive algorithm. In fact, extensive investigations have been conducted in adaptive identification in the area of control systems for the design of adaptive control laws, where the system data are generated from feedback systems that are far from stationary and hard to be analyzed  \cite{g1995}.  Many adaptive identification methods have been introduced in the existing literature where the convergence has also been analyzed under certain non-PE conditions (see e.g. \cite{cg1991}-\cite{lw1982}). Among these methods, we mention that Shadab et al. \cite{ss2023} considered the first-order gradient estimator for linear regression models with some finite-time parameter estimation techniques, where the PE condition is replaced with a condition enforced to the determinant of an extended regressor matrix. Ljung \cite{Lj1977} established a convergence theory via the celebrated ordinary differential equation (ODE) method which can be applied to a wide class of adaptive algorithms, where the conditions for regressors are replaced by some stability conditions for the corresponding ODE. Lai and Wei \cite{lw1982} considered the classical least squares algorithm for linear stochastic regression models, and established successfully the strong consistency under the weakest possible non-PE condition. Of course, these results are established for the traditional non-saturated observation case.
	
The first paper that establishes the strong consistency of estimators for stochastic regression models under general non-PE conditions for a special class of saturated observations (binary-valued observations) appears to be \cite{ZZ2022}, where a single-step adaptive quasi-Newton-type algorithm was proposed and analyzed. The non-PE condition used in \cite{ZZ2022}  is similar to the weakest possible signal condition for a stochastic linear regression model with uncensored observations (see \cite{lw1982}), which can be applied to non-stationary stochastic dynamical systems with feedback control. However,  there are still some unresolved fundamental problems, for instance, a) How should a globally convergent estimation algorithm be designed for stochastic systems under general saturated observations and non-PE conditions?  b) What is the asymptotic distribution of the estimation error under non-PE conditions? c) How to get a  useful and computable probabilistic estimation error bound under non-PE conditions when the length of data is finite?

	The main purpose of this paper is to solve these problems by introducing an adaptive two-step quasi-Newton-type identification algorithm, refining the stochastic Lyapunov function approach, and applying some martingale inequalities and convergence theorems.  Besides, the Monte Carlo method is also found quite useful in computing the estimation error bound. The key feature of the current adaptive two-step quasi-Newton (TSQN) identification algorithm compared to the adaptive single-step quasi-Newton method is that the TSQN algorithm has improved performance under non-PE conditions. The main reasons behind this fact are as follows: (i) The scalar adaptation gain of the single-step quasi-Newton method is constructed by using only the fixed given {\it a priori} information about the parameter set, whereas the scalar adaptation gain of the current TSQN algorithm is designed by using the online information to have it improved adaptively. (ii) A regularization factor is also introduced in the TSQN algorithm, which can be taken as a ``noise variance" estimate to improve the adaptation algorithm further. To be specific, the main contributions of this paper can be summarized as follows:
	
	\begin{itemize}
		\item A new two-step quasi-Newton-type adaptive identification algorithm is proposed for stochastic dynamical systems with saturated observations.  The first step is to produce consistent parameter estimates based on the available ``worst case'' information, which is then used to construct the adaptation gains in the second step for improving the performance of the adaptive identification. 
		
		\item Asymptotic results on the proposed new identification algorithm, including strong consistency and asymptotic normality,  are established for stochastic dynamical systems with saturated observations under quite general non-PE conditions. The optimality of adaptive prediction or regrets is also established without resorting to any excitation conditions. This paper appears to be the first for adapted identification of stochastic systems with guaranteed performance under both saturated observations and general non-PE conditions.
				
		\item Non-asymptotic error bounds for both parameter estimation and output prediction are also provided, when only a finite length of data is available,  for stochastic dynamical systems with saturated observations and no PE conditions. Such bounds can be applied to sentencing computation problems based on practical judicial data \cite{judical}.
		
	\end{itemize}
	
	The remainder of this paper is organized as follows. Section \ref{ss2} gives the problem formulation; The main results are stated in Section \ref{ss3}; Section \ref{ss4} presents the proofs of the main results. A numerical example is provided in Section \ref{ss5}. Finally, we conclude the paper with some remarks in Section \ref{ss6}.
	
	\section{Problem Formulation}\label{ss2}
	
	Let us consider the following piecewise linear regression model: 
	
	\begin{equation}\label{eq1}
		y_{k+1}=S_{k}(\phi_{k}^{\top}\theta+e_{k+1}),\;\;k=0,1,\cdots,	
	\end{equation}
	where  $\theta\in \mathbb{R}^{m} (m\geq1)$ is an unknown parameter vector to be estimated;  $y_{k+1}\in \mathbb{R}$, $\phi_{k}\in \mathbb{R}^{m}$, $e_{k+1}\in \mathbb{R}$ represent the system output observation, stochastic regressor, and random noise at time $k$, respectively. Besides, $S_{k}(\cdot):\mathbb{R} \rightarrow \mathbb{R}$ is a non-decreasing time-varying saturation function defined as follows: 
	\begin{equation}\label{eq2}
		S_{k}(x)=\left\{
		\begin{array}{rcl}
			&L_{k}             & {x     <     l_{k}}\\
			&x         & {l_{k} \leq x \leq u_{k}}\\
			&U_{k}      & {x > u_{k}}
		\end{array} \right.,\;\;\;k=0,1,\cdots,
	\end{equation}
	where $[l_{k}, u_{k}]$ is the given precise observable range,  $L_{k}$ and  $U_{k}$ are the only observations when the output value exceeds this range.

	\subsection{Notations and Assumptions}	
	{\bf Notations.}  By $\|\cdot\|$, we denote the Euclidean norm of vectors or matrices. The spectrum of a matrix $M$ is denoted by $\left\{\lambda_{i}\left\{M\right\}\right\}$, where the maximum and minimum eigenvalues  are denoted by $\lambda_{max}\left\{M\right\}$ and $\lambda_{min}\left\{M\right\}$  respectively. Besides, let $tr(M)$ denote the trace of the matrix $M$, and by $|M|$  we mean the determinant of the matrix $M$. Moreover, $\left\{\mathcal{F}_{k},k\geq 0\right\}$ is the sequence of $\sigma -$algebra together  with that of conditional mathematical expectation operator $\mathbb{E}[\cdot \mid \mathcal{F}_{k}]$, in the sequel we may employ the abbreviation $\mathbb{E}_{k}\left[\cdot\right]$ to $\mathbb{E}\left[\cdot \mid \mathcal{F}_{k}\right]$. Furthermore, a random variable $X$ belongs to $\mathcal{L}_{2}$ if $\mathbb{E}\|X\|^{2}<\infty,$ and a random sequence $\{X_{k}, k\geq 0\}$ is called $\mathcal{L}_{2}$ sequence if $X_{k}$ belongs to $\mathcal{L}_{2}$ for all $k\geq 0$.
	
	We  need the following basic assumptions:
	\begin{assumption}\label{assum1}
		The stochastic  regressor  $\{\phi_{k},\mathcal{F}_{k}\}$ is  a bounded and adapted sequence, where $\left\{\mathcal{F}_{k},k\geq 0\right\}$ is a non-decreasing sequence of $\sigma -$algebras. Besides, the true parameter $\theta$ is an interior point of a known convex compact set $D \subseteq 	\mathbb{R}^{m} $. 
	\end{assumption}
	
	By Assumption $\ref{assum1}$, we can find an almost surely bounded sequence $\{M_{k}, k\geq 0\}$ such that 
	\begin{equation}\label{Mk}
	\sup\limits_{x\in D}|\phi_{k}^{\top}x| \leq M_{k},\;\;a.s.
	\end{equation}

	\begin{assumption}\label{assum2}
		The thresholds $\left\{l_{k},\mathcal{F}_{k}\right\}$, $\left\{u_{k},\mathcal{F}_{k}\right\}$, $\left\{L_{k},\mathcal{F}_{k}\right\}$ and $\left\{U_{k},\mathcal{F}_{k}\right\}$ are known adapted stochastic sequences, satisfying for any $k\geq 0$,
		\begin{equation}\label{5}
			l_{k}-c \leq L_{k}\leq l_{k}\leq u_{k}\leq U_{k}\leq u_{k}+ c,\;\;\;a.s.,	
			\end{equation}	
			where $c$ is a $\mathcal{L}_{2}$ non-negative random variable, and
			\begin{equation}\label{6}
			\sup_{k\geq 0} l_{k}<\infty, \;\;\inf_{k\geq 0} u_{k}>-\infty,\;\;\;a.s.
			\end{equation}			
	\end{assumption}
	\begin{remark}
		We note that the inequalities $L_{k}\leq l_{k} \leq u_{k} \leq U_{k}$ are determined by the non-decreasing nature of the saturation function used to characterize the saturated output observations as illustrated in Fig. 1, and that Assumption $\ref{assum2}$ will be automatically satisfied if $\{L_{k}\}$ and $\{U_{k}\}$ are bounded stochastic sequences. The conditions $(\ref{5})$ and $(\ref{6})$ are general assumptions that are used to guarantee the boundedness of the variances of the output prediction errors in the paper.
	\end{remark}

	\begin{assumption}\label{assum4}
		The noise $\{e_{k},\mathcal{F}_{k}\}$ is an $\mathcal{L}_{2}$ martingale difference sequence and there exists a constant $\eta>0$, such that
				\begin{equation}\label{e}	
		\inf\limits_{k\geq0}\mathbb{E}_{k}\left[|e_{k+1}|^{2}\right]>0,\;\;	\sup_{k\geq 0}\mathbb{E}_{k}\left[|e_{k+1}|^{2+\eta}\right]<\infty,\;a.s.
		\end{equation}		
		Besides, the conditional expectation function $G_{k}(x)$, defined by $G_{k}(x)=\mathbb{E}_{k}\left[S_{k}(x+e_{k+1})\right]$, is known and differentiable with derivative denoted by $G'_{k}(\cdot)$. Further, there exist a random variable $M>\sup\limits_{k\geq 0} M_{k}$ such that 
		\begin{equation}\label{iinf}
			0<\inf _{|x| \leq M, k\geq 0} G'_{k}(x)\leq \sup _{|x| \leq M, k\geq 0} G'_{k}(x)<\infty, \;\;a.s.
		\end{equation}
			\begin{equation}\label{iinft}
 |G'_{k}(x)-G'_{k}(y)|\leq \rho|x-y|,\;a.s., \;\;\;\forall |x|,|y|\leq M,
	\end{equation}
where $\rho$ is a non-negative variable, $M_{k}$ is defined in $(\ref{Mk})$.
		\end{assumption}

	\begin{remark}	
		It is worth to mention that under condition $(\ref{e})$, 	
		the function $G_{k}(\cdot)$ in Assumption $\ref{assum4}$ is well-defined for any $k\geq 0$, and can be calculated given the conditional probability distribution of the noise $e_{k+1}$.
		In Appendix \ref{i}, we have provided three typical examples to illustrate how to concretely calculate the function $G_{k}(\cdot)$, which includes the classical linear stochastic regression models, models with binary-valued sensors, and censored regression models. 
Moreover,	 Assumption $\ref{assum4}$  can be easily verified if the noise $\{e_{k+1}, \; k\geq 0\}$ is i.i.d Gaussian and if $\inf\limits_{k\geq 0}(U_{k}-L_{k})>0,\;a.s$.
		Besides, when  $l_{k}=-\infty$ and $u_{k}=\infty$, the system $(\ref{eq1})$-$(\ref{eq2})$ will degenerate to linear stochastic regression models,  and Assumption $\ref{assum4}$ will degenerate to the standard noise assumption for the strong consistency of the classical least squares (\cite{lw1982}) since $G_{k}(x)\equiv x$. 		
	\end{remark}	

		For simplicity of notation, denote
	\begin{equation}\label{g}
		\inf _{|x| \leq M_{k}} G'_{k}(x)=\underline{g}_{k}, \;\;
		\sup _{|x| \leq M_{k}} G'_{k}(x)=\overline{g}_{k}.
	\end{equation}
	Under Assumption $\ref{assum4}$,  $\{\overline{g}_{k}, k\geq 0\}$ and $\{\underline{g}_{k}, k\geq 0\}$ have upper bound and positive lower bound respectively, i.e. 
	\begin{equation}\label{zh}
		\inf\limits_{k\geq 0}\underline{g}_{k}>0,\;\;\;\;\;\;
	\sup\limits_{k\geq 0} \overline{g}_{k}<\infty,\;\;\;\;a.s.
	\end{equation}

	\subsection{Algorithm}
	
	Because of its ``optimality" and fast convergence rate, the classical LS algorithm is one of the most basic and widely used ones in the adaptive estimation and adaptive control of linear stochastic systems. Inspired by the analysis of the LS recursive algorithm, we have introduced an adaptive quasi-Newton-type algorithm to estimate the parameters in linear stochastic regression models with binary-valued observations in \cite{ZZ2022}. However, we find that a direct extension of the quasi-Newton algorithm introduced in \cite{ZZ2022} from binary-valued observation to saturated observation does not give satisfactory performance, which motivates us to introduce a two-step quasi-Newton-type identification algorithm as described shortly.
	
	At first, we introduce a suitable projection operator, to ensure the boundedness of the estimates while keeping other nice properties. For  the linear space $\mathbb{R}^{m}$, we define a norm $\|\cdot \|_{Q}$ associated with a positive definite matrix $Q$  as $\| x\|_{Q}^{2}=x^{\top}Qx$.
 	 A projection operator based on $\| \cdot\|_{Q}$ is defined as follows:
	\begin{definition}\label{def2}
		For the  convex compact set $D$ defined  in Assumption  $\ref{assum1}$,  the projection operator $\Pi_{Q}(x)(\cdot)$ is defined as
		\begin{equation}\label{8}
			\Pi_{Q}(x)=\mathop{\arg\min}_{y \in D}\|x-y\|_{Q}, \quad \forall x \in 	\mathbb{R}^{m}.
		\end{equation}
	\end{definition}

	We then introduce our new adaptive two-step quasi-Newton (TSQN) identification algorithm,	 where the gain matrix is constructed by using the gradient information of the quadratic loss function.
	\begin{algorithm}[htb]  %调整间距
		\caption{ Adaptive Two-Step Quasi-Newton (TSQN) Algorithm } 
		\label{alg1} 	
		\textbf{Step 1.} Recursively calculate the preliminary estimate $\bar{\theta}_{k+1}$ for $k\geq 0$:
		\begin{equation}\label{be1}
			\begin{aligned}
				\bar{\theta}_{k+1}&=\Pi_{\bar{P}_{k+1}^{-1}}\{\bar{\theta}_{k}+\bar{a}_{k}\bar{\beta}_{k} \bar{P}_{k}\phi_{k}[y_{k+1}-G_{k}(\phi_{k}^{\top}\bar{\theta}_{k})]\},\\
				\bar{P}_{k+1}&=\bar{P}_{k}-\bar{a}_{k}\bar{\beta}_{k}^{2}\bar{P}_{k}\phi_{k}\phi_{k}^{\top}\bar{P}_{k},\\
				\bar{\beta}_{k}&=\min\left(\underline{g}_{k}, \frac{1}{1+2\overline{g}_{k}\phi_{k}^{\top}\bar{P}_{k}\phi_{k}}\right),\\						\bar{a}_{k}&=\frac{1}{1+\bar{\beta}_{k}^{2}\phi_{k}^{\top}\bar{P}_{k}\phi_{k}},
			\end{aligned}
		\end{equation}
		where  $\underline{g}_{k}$ and $\overline{g}_{k}$ are defined as in $(\ref{g})$, $\Pi_{\bar{P}_{k+1}^{-1}}$ is the projection operator defined as in Definition $\ref{def2}$, $G_{k}(\cdot)$ is defined in Assumption $\ref{assum4}$, 
		the initial values $\bar{\theta}_{0}$ and $\bar{P}_{0}$ can be chosen arbitrarily in $D$ and with $\bar{P}_{0}>0$,  respectively.
		
		\textbf{Step 2.} Recursively define  the accelerated estimate $\hat{\theta}_{k+1}$ based on $\bar{\theta}_{k+1}$ for $k\geq 0$:
		\begin{equation}\label{be2}
			\begin{aligned}
				\hat{\theta}_{k+1}=&\Pi_{P_{k+1}^{-1}}\{\hat{\theta}_{k}+a_{k}\beta_{k} P_{k}\phi_{k}[y_{k+1}-G_{k}(\phi_{k}^{\top}\hat{\theta}_{k})]\},\\
				P_{k+1}=&P_{k}-a_{k}\beta_{k}^{2}P_{k}\phi_{k}\phi_{k}^{\top}P_{k},\\
				\beta_{k}=&\frac{G_{k}(\phi_{k}^{\top}\overline{\theta}_{k})-G_{k}(\phi_{k}^{\top}\hat{\theta}_{k})}{\phi_{k}^{\top}\overline{\theta}_{k}-\phi_{k}^{\top}\hat{\theta}_{k}}I_{\{\phi_{k}^{\top}\hat{\theta}_{k}-\phi_{k}^{\top}\overline{\theta}_{k}\not=0\}}\\
				&+G'_{k}(\phi_{k}^{\top}\hat{\theta}_{k})I_{\{\phi_{k}^{\top}\hat{\theta}_{k}-\phi_{k}^{\top}\overline{\theta}_{k}=0\}},\\
				a_{k}=&\frac{1}{\mu_{k}+\beta_{k}^{2}\phi_{k}^{\top}P_{k}\phi_{k}},
			\end{aligned}
		\end{equation}
		where $\{\mu_{k}\}$ can be any positive random process adapted to  $\{\mathcal{F}_{k}\}$ with $0<\inf\limits_{k\geq 0} \mu_{k}\leq \sup\limits_{k\geq 0}\mu_{k}<\infty$, the initial values $\hat{\theta}_{0}$ and $P_{0}$ can be chosen arbitrarily in $D$ and with $P_{0}>0$,
		respectively.
	\end{algorithm}	
	
	\begin{remark}
		As described above, our identification algorithm is defined by two successive steps, between which the main difference is the construction of the adaptation gains. In the first step, the scalar adaptation gain $\bar{\beta}_{k}$ is constructed by using the bounds $\underline{g}_{k}$ and $\overline{g}_{k}$ defined in $(\ref{g})$, in a similar way as that constructed in the identification algorithm of \cite{ZZ2022}. Though the strong consistency of the preliminary estimate $\bar{\theta}_{k}$ in the first step may be established following a similar argument as in  \cite{ZZ2022}, its convergence speed appears to be not good enough, and its asymptotic normality also appears to be hard to establish, because the scalar adaptation gain $\bar{\beta}_{k}$ is simply constructed by using the  ``worst case" information $\underline{g}_{k}$ and $\overline{g}_{k}$.  To overcome these shortcomings, the second step estimation is introduced with the following two features:  (i) To improve the performance of the estimation algorithm, the scalar adaptation gain $\beta_{k}$ is defined in an adaptive way by using the preliminary estimates $\bar{\theta}_{k}$ generated in the first step, and (ii) To ensure the asymptotic normality of the estimation errors under non-PE condition, the regularization factor $\mu_{k}$ is taken as a  ``noise variance" estimate constructed by using the online estimates (see Theorem $\ref{thm3}$).  Simulations in Section 5 also demonstrate that the convergence speed of the parameter estimates given in the second step outperforms that of the first step. 
	\end{remark}
	
	\section{Main results}\label{ss3}
	In this section, we give some asymptotic results of the TSQN identification algorithm.  To be specific, we will establish asymptotic upper bounds for both the parameter estimation errors and the adaptive prediction errors in Subsection \ref{sub1},  study the asymptotic normality of the TSQN algorithm in Subsection \ref{sub2}, and give high probabilistic error bounds for any given finite number of data in Subsection \ref{sub3}. 
	
	\subsection{Asymptotic error bounds}\label{sub1}		
	\begin{theorem} \label{thm2}
		Under Assumptions $\ref{assum1}$-$\ref{assum4}$,  the estimate $\hat{\theta}_{k}$ given by the TSQN Algorithm  has the following upper bound almost surely as $k\rightarrow \infty$:
		
		\begin{equation}\label{theta2}
			\|\tilde{\theta}_{k+1}\|^{2}=O\left(\frac{\log \lambda_{max}(k)}{\lambda_{min}(k)}\right),\;\;\;a.s.,
		\end{equation}			
		where $\tilde{\theta}_{k+1}=\theta-\hat{\theta}_{k+1}$, $\lambda_{\min}(k)$ and $\lambda_{\max}(k)$ are the minimum and maximum eigenvalues of  the matrix $\sum\limits_{i=0}^{k}  \phi_{i} \phi_{i}^{\top}+P_{0}^{-1}$ respectively.  
	\end{theorem}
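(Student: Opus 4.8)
The plan is to run a stochastic Lyapunov function argument in the spirit of the Lai--Wei analysis of least squares \cite{lw1982}, but adapted to the output nonlinearity $G_k$ and to the data-dependent gain $\beta_k$. Throughout I will assume the strong consistency of the first-step estimate, $\bar{\theta}_k\to\theta$ a.s., established for Step~1 by arguments analogous to \cite{ZZ2022}; this is precisely what makes the second step work. The natural Lyapunov function is $V_k=\tilde{\theta}_k^{T}P_k^{-1}\tilde{\theta}_k$, and the goal is a recursion of the form $V_{k+1}\le V_k-(\text{negative quadratic})+(\text{martingale difference})+(\text{noise term})$.

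First I would set up the error recursion. Writing $w_{k+1}=y_{k+1}-G_k(\phi_k^{T}\theta)$, Assumption~\ref{assum4} guarantees that $\{w_{k+1},\mathcal{F}_{k+1}\}$ is a martingale difference sequence with a.s.\ bounded conditional variance, and the mean value theorem gives $y_{k+1}-G_k(\phi_k^{T}\hat{\theta}_k)=\delta_k\phi_k^{T}\tilde{\theta}_k+w_{k+1}$, where $\delta_k=G'_k(\zeta_k)$ lies in $[\underline{g}_k,\overline{g}_k]$ by (\ref{g}) and is therefore bounded above and below by (\ref{zh}). Since $\theta\in D$ (Assumption~\ref{assum1}), the non-expansiveness of $\Pi_{P_{k+1}^{-1}}$ in the $\|\cdot\|_{P_{k+1}^{-1}}$-norm (Definition~\ref{def2}) lets me replace $\hat{\theta}_{k+1}$ by its unprojected version without increasing $V_{k+1}$. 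Applying the Sherman--Morrison identity to the $P_{k+1}$ recursion in (\ref{be2}) yields the clean relation $P_{k+1}^{-1}=P_k^{-1}+(\beta_k^{2}/\mu_k)\phi_k\phi_k^{T}$, and, combined with the identity $a_k(\mu_k+\beta_k^{2}\phi_k^{T}P_k\phi_k)=1$, this permits an exact expansion of $V_{k+1}$.

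The decisive step, which I expect to be the main obstacle, is controlling the sign of the resulting quadratic term in $z_k:=\phi_k^{T}\tilde{\theta}_k$. A direct computation shows that the coefficient of $z_k^{2}$ equals exactly $-a_k\delta_k^{2}$ in the ideal case $\beta_k=\delta_k$, so the apparently dangerous noise cross-term is cancelled by part of the drift; in general the coefficient differs from $-a_k\delta_k^{2}$ only by a quantity controlled by $|\beta_k-\delta_k|$. Here the two-step design is essential: because $\beta_k$ and $\delta_k$ are difference quotients of $G_k$ sharing the endpoint $\phi_k^{T}\hat{\theta}_k$, an integral-form mean value estimate together with the Lipschitz bound (\ref{iinft}) gives $|\beta_k-\delta_k|\le\frac{\rho}{2}|\phi_k^{T}(\bar{\theta}_k-\theta)|$, which tends to zero by Step-1 consistency and the boundedness of $\phi_k$. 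Hence for all large $k$ the coefficient of $z_k^{2}$ lies below $-c_0<0$, and I obtain
\[
V_{k+1}\le V_k-c_0 z_k^{2}+\xi_{k+1}+\gamma_k w_{k+1}^{2},
\]
where $\xi_{k+1}$ is a martingale difference that is linear in $w_{k+1}$ and $\gamma_k=a_k\beta_k^{2}\phi_k^{T}P_k\phi_k/\mu_k$.

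Finally I would sum this inequality and invoke martingale limit theory. The matrix determinant lemma gives $\det P_{k+1}/\det P_k=a_k\mu_k$, so the elementary bound $1-x\le\log(1/x)$ yields $\sum_k\gamma_k=O(\log\det P_{N}^{-1})=O(\log\lambda_{\max}(N))$, whence $\sum_k\gamma_k w_{k+1}^{2}=O(\log\lambda_{\max}(N))$ a.s.\ by a martingale strong law that uses the $2+\eta$ moment bound in (\ref{e}). The martingale $\sum\xi_{k+1}$, whose predictable variation is $O(\sum z_k^{2})$, is handled by the martingale convergence theorem to give $\sum_{k\le N}\xi_{k+1}=o(\sum_{k\le N}z_k^{2})+O(1)$; absorbing this $o(\cdot)$ term into $c_0\sum z_k^{2}$ then produces both $\sum_k z_k^{2}=O(\log\lambda_{\max}(N))$ and $V_{N+1}=O(\log\lambda_{\max}(N))$ a.s. The bound (\ref{theta2}) follows from $V_{N+1}\ge\lambda_{\min}(P_{N+1}^{-1})\|\tilde{\theta}_{N+1}\|^{2}$ together with the comparison $P_{N+1}^{-1}\ge c'\big(\sum_{i\le N}\phi_i\phi_i^{T}+\lambda_0 I\big)$, which is valid because $\beta_k^{2}/\mu_k$ is bounded below and $\lambda_0>\|P_0^{-1}\|$.
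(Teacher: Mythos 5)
Your overall architecture is the same as the paper's: the Lyapunov function $V_k=\tilde{\theta}_k^{T}P_k^{-1}\tilde{\theta}_k$, the non-expansiveness of the projection, the rank-one update $P_{k+1}^{-1}=P_k^{-1}+\mu_k^{-1}\beta_k^{2}\phi_k\phi_k^{T}$, the martingale estimates for the noise terms, and the log-determinant bound for $\sum_k a_k\beta_k^{2}\phi_k^{T}P_k\phi_k$. However, there is one genuine gap: you assume at the outset that the first-step estimate is strongly consistent, $\bar{\theta}_k\to\theta$ a.s., and you use this qualitatively to conclude that $|\beta_k-\delta_k|\le\tfrac{\rho}{2}|\phi_k^{T}\tilde{\bar{\theta}}_k|\to 0$, so that the coefficient of $z_k^{2}$ is eventually negative. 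But Theorem \ref{thm2} is stated under Assumptions \ref{assum1}--\ref{assum4} only; consistency of the Step-1 estimator requires the additional excitation condition $(\ref{re})$ (cf.\ Remark \ref{re77} and \cite{ZZ2022}), which is \emph{not} among the hypotheses. When $\lambda_{\min}(k)$ stays bounded, $\bar{\theta}_k$ need not converge, yet the bound $(\ref{theta2})$ is still asserted (it then just fails to be $o(1)$), and your argument breaks exactly there.

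The paper avoids this by never invoking consistency. It uses the exact algebraic identity $\psi_k-\beta_k\phi_k^{T}\tilde{\theta}_k=\bar{\psi}_k-\beta_k\phi_k^{T}\tilde{\bar{\theta}}_k$ (a consequence of $\beta_k$ being the difference quotient of $G_k$ between $\phi_k^{T}\bar{\theta}_k$ and $\phi_k^{T}\hat{\theta}_k$), so the mismatch term in the $V_{k+1}$ expansion is bounded by $O\bigl((\phi_k^{T}\tilde{\bar{\theta}}_k)^{2}\bigr)$, and then runs a \emph{second} Lyapunov analysis on $\bar{V}_k=\tilde{\bar{\theta}}_k^{T}\bar{P}_k^{-1}\tilde{\bar{\theta}}_k$ to obtain the summability bound $\sum_{k\le n}(\phi_k^{T}\tilde{\bar{\theta}}_k)^{2}=O(\log\lambda_{\max}(n))$ a.s., which holds with no excitation whatsoever. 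Your Lipschitz estimate actually points in the right direction: since $z_k=\phi_k^{T}\tilde{\theta}_k$ is a.s.\ bounded (both $\theta$ and $\hat{\theta}_k$ lie in the compact set $D$ and $\phi_k$ is bounded), the offending term satisfies $\mu_k^{-1}(\delta_k-\beta_k)^{2}z_k^{2}=O\bigl((\phi_k^{T}\tilde{\bar{\theta}}_k)^{2}\bigr)$, so you do not need each summand to be small relative to $-a_k\delta_k^{2}z_k^{2}$ --- you only need the \emph{sum} to be $O(\log\lambda_{\max}(n))$, which is exactly what the Step-1 Lyapunov inequality delivers. Replacing your consistency assumption by that quantitative summability argument closes the gap; as written, the proof proves a weaker statement than the theorem.
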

	\begin{remark}\label{re77}
	From Theorem \ref{thm2}, it is easy to see that the algorithm will  converge to the true parameter almost surely if 
	\begin{equation}\label{re}
		\frac{\log \lambda_{max}(k)}{\lambda_{min}(k)}=o(1),\;\;a.s.
	\end{equation}
	This condition does not need the independence and stationarity conditions on the system regressors and is hence applicable to stochastic  feedback control systems, and is known to be the weakest possible convergence condition of the classical least squares in the linear case (see \cite{lw1982},\cite{cg1991}), which is much weaker than the traditional
	PE condition,  i.e. $\lambda_{\max}(k)=O(\lambda_{\min}(k))$. Furthermore, since the true parameter is the interior of $D$, by the strong consistency of the parameter estimates, it is easy to see that after some finite time,  the projection operator will become not necessary in the computation process.
	\end{remark}

	\begin{corollary}\label{re7}
		Under conditions of Theorem $\ref{thm2}$, if the excitation condition $(\ref{re})$ is strengthened to  $k=O(\lambda_{\min}(k))$, then the convergence rate of the TSQN algorithm can be improved to the following   iterated logarithm rate:
		\begin{equation}\label{rate}
			\|\tilde{\theta}_{k+1}\|^{2}=O\left(\frac{\log \log k}{k}\right), \;\;a.s.
		\end{equation}
	\end{corollary}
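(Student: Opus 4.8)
The plan is to sharpen, rather than merely invoke, the bound behind Theorem \ref{thm2}. Since $\{\phi_k\}$ is bounded by Assumption \ref{assum1}, the trace estimate $\lambda_{\max}(k)\le \mathrm{tr}(\sum_{i=0}^{k}\phi_i\phi_i^{T}+\lambda_0 I)=O(k)$ gives $\log\lambda_{\max}(k)=O(\log k)$, so inserting the strengthened condition $k=O(\lambda_{\min}(k))$ into (\ref{theta2}) yields only $\|\tilde\theta_{k+1}\|^2=O(\log k/k)$. Obtaining the sharper rate (\ref{rate}) therefore requires replacing the deterministic $\log\lambda_{\max}$ factor by the exact accumulated noise and controlling the latter through a law of the iterated logarithm (LIL), in the spirit of Lai--Wei \cite{lw1982}.

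First I would put the second-step recursion (\ref{be2}) into an exact error representation. By the strong consistency in Theorem \ref{thm2} and the fact that $\theta$ is interior to $D$, the projection is inactive for all large $k$ (Remark \ref{re77}), so the unprojected recursion may be used. The matrix inversion lemma applied to the $P$-update gives $P_{k+1}^{-1}=P_k^{-1}+(\beta_k^2/\mu_k)\phi_k\phi_k^{T}$ and the key identity $P_{k+1}^{-1}a_k\beta_k P_k\phi_k=(\beta_k/\mu_k)\phi_k$. Writing $w_{k+1}=y_{k+1}-G_k(\phi_k^{T}\theta)$, which is a martingale difference by the definition of $G_k$ in Assumption \ref{assum4}, and linearizing $G_k(\phi_k^{T}\theta)-G_k(\phi_k^{T}\hat\theta_k)=g_k^{*}\phi_k^{T}\tilde\theta_k$ with $g_k^{*}$ the corresponding divided difference of $G_k$, I would multiply the error recursion by $P_{k+1}^{-1}$ and telescope to get
\[
P_{k+1}^{-1}\tilde\theta_{k+1}=P_0^{-1}\tilde\theta_0-M_k+R_k,
\]
where $M_k=\sum_{i=0}^{k}(\beta_i/\mu_i)\phi_i w_{i+1}$ is a vector martingale and $R_k=\sum_{i=0}^{k}(\beta_i(\beta_i-g_i^{*})/\mu_i)(\phi_i^{T}\tilde\theta_i)\phi_i$ collects the mismatch between the adaptive gain $\beta_i$ and the true divided difference $g_i^{*}$.

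Next I would estimate the two terms. The conditional covariance of $M_k$ is $\sum_{i=0}^{k}(\beta_i^2/\mu_i^2)\mathbb{E}_i\{w_{i+1}^2\}\phi_i\phi_i^{T}$, whose trace is $O(k)$ because $\phi_i$ is bounded and $\beta_i,\mu_i,\mathbb{E}_i\{w_{i+1}^2\}$ are bounded by (\ref{e}), (\ref{zh}) and the construction of $\mu_k$; its coordinatewise conditional variances diverge to infinity since $\lambda_{\min}(k)\to\infty$. The $2+\eta$ moment condition in (\ref{e}) then licenses the martingale LIL coordinatewise, giving $\|M_k\|^2=O(k\log\log k)$ a.s. For the remainder, the Lipschitz bound (\ref{iinft}) yields $|\beta_i-g_i^{*}|=O(\|\theta-\bar\theta_i\|+\|\tilde\theta_i\|)$, so with the rate $\|\tilde\theta_i\|^2=O(\log i/i)$ from Theorem \ref{thm2} and the same-order bound for the first-step error $\theta-\bar\theta_i$, each summand of $R_k$ is $O(\log i/i)$ and hence $\|R_k\|=O((\log k)^2)=o(\sqrt{k\log\log k})$. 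Since $\|P_{k+1}\|=1/\lambda_{\min}(P_{k+1}^{-1})=O(1/\lambda_{\min}(k))=O(1/k)$, the representation gives $\|\tilde\theta_{k+1}\|\le\|P_{k+1}\|\,\|P_{k+1}^{-1}\tilde\theta_{k+1}\|=O(1/k)\cdot O(\sqrt{k\log\log k})$, which is exactly (\ref{rate}).

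I expect the main obstacle to be the two estimates just sketched. Establishing the LIL for $M_k$ is delicate because $\{\phi_k\}$ is adapted and possibly feedback-generated rather than i.i.d., so one must verify divergence of the conditional variances together with a Lindeberg/moment condition; here the boundedness of $\phi_k$ and the $2+\eta$ moment bound (\ref{e}) are precisely what make the adapted LIL applicable. The second difficulty is that bounding $R_k$ is self-referential: the gain mismatch $\beta_i-g_i^{*}$ must be tied, through the Lipschitz continuity (\ref{iinft}) of $G_k'$ and the consistency of both $\hat\theta_i$ and the first-step estimate $\bar\theta_i$, to quantities decaying fast enough that $R_k$ stays of strictly lower order than $M_k$, so that the iterated-logarithm term indeed dominates.
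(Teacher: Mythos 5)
Your proposal is essentially correct, but it cannot be compared line-by-line with the paper because the paper does not actually prove Corollary \ref{re7}: the corollary is stated after Theorem \ref{thm2} with only the remark that this is the best rate known for least squares in the linear case, and the proofs section contains no argument for it. What you have written is the natural adaptation of the Lai--Wei route (their Theorem 3) to this nonlinear setting, and it is the right one: a direct plug-in to (\ref{theta2}) only gives $O(\log k/k)$, and redoing the Lyapunov recursion of Lemma \ref{lem6} under $k=O(\lambda_{\min}(k))$ still leaves a $\sum_k \phi_k^T P_k\phi_k = O(\log k)$ term, so the exact telescoped representation $P_{k+1}^{-1}\tilde\theta_{k+1}=P_0^{-1}\tilde\theta_0-M_k+R_k$ together with $\lambda_{\min}(P_{k+1}^{-1})\geq c\,\lambda_{\min}(k)\geq c'k$ is genuinely needed. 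Your identity $P_{k+1}^{-1}a_k\beta_kP_k\phi_k=(\beta_k/\mu_k)\phi_k$ is exactly the paper's (\ref{aalpha}), the divided-difference bound $|\beta_i-g_i^*|=O(\|\theta-\bar\theta_i\|+\|\tilde\theta_i\|)$ follows from (\ref{iinft}) since both divided-difference intervals share the endpoint $\phi_i^T\hat\theta_i$, and the first-step rate $\|\theta-\bar\theta_i\|^2=O(\log i/i)$ that you need for $R_k$ does follow from (\ref{bb})--(\ref{aa}) even though the paper says the iterated-logarithm rate is unknown for the first step (you only need the cruder rate there, so there is no circularity).

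One point of care: the martingale estimate $\|M_k\|^2=O(k\log\log k)$ cannot be obtained from the paper's own Lemma \ref{lem2}, which for $\alpha=2$ only yields $O\bigl(s_n(\log s_n^2)^{1/2+\eta}\bigr)$ and would degrade the final rate back to $O(\log k/k)$ up to powers of $\log$. You must import the sharper statement from Lai--Wei \cite{lw1982} (their Lemma 2, the case $\alpha>2$), which under $\sup_k\mathbb{E}_k|w_{k+1}|^{2+\eta}<\infty$ gives $\sum_{i\le n} f_iw_{i+1}=O\bigl((s_n^2\log\log s_n^2)^{1/2}\bigr)$ a.s.\ directly, with no separate Lindeberg verification; the paper's condition (\ref{e}) and the boundedness of $\beta_i/\mu_i$ and $\phi_i$ put you squarely in its scope. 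With that substitution, and the observation that discarding the projection after an a.s.\ finite random time is harmless for an a.s.\ asymptotic order statement, your argument is complete and in fact supplies a proof the paper omits.
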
	

		We note that the convergence rate $(\ref{rate})$ is known to be the best rate of convergence of the classical least squares in the linear case. It is worth noting that we do not know how to establish such a best possible convergence rate for the preliminary estimate given in the first step.

	Given the parameter estimate $\hat{\theta}_{k}$ by the above TSQN algorithm, one can define an adaptive predictor for the output observation as follows: $$\hat{y}_{k+1}=G_{k}(\phi_{k}^{\top}\hat{\theta}_{k}).$$  Usually, the difference between the best predictor and the adaptive predictor along the sample path, can be measured by  the regret defined  as follows:
	\begin{equation}
		R_{k}=(\mathbb{E}_{k}\left[y_{k+1}\right]-\hat{y}_{k+1})^{2}.
	\end{equation}
	The following theorem gives an asymptotic result for the regret $R_{k}$. 
	
	\begin{theorem} \label{thm1}
		Let Assumptions $\ref{assum1}$-$\ref{assum4}$ hold. Then the sample paths of the accumulated regrets will have the following upper bound:	
		\begin{equation}\label{r}
			\sum_{i=1}^{k}R_{i}=O(\log \lambda_{\max}(k)),\;\;a.s.
		\end{equation}			
	\end{theorem}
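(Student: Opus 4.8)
The plan is to control the accumulated regret through a stochastic Lyapunov analysis of the second step $(\ref{be2})$, and --- crucially --- without invoking any excitation condition. Since $\mathbb{E}_{k}\{y_{k+1}\}=G_{k}(\phi_{k}^{T}\theta)$, the regret is $R_{k}=h_{k}^{2}$ with $h_{k}:=G_{k}(\phi_{k}^{T}\theta)-G_{k}(\phi_{k}^{T}\hat{\theta}_{k})$. I would introduce the innovation $w_{k+1}:=y_{k+1}-G_{k}(\phi_{k}^{T}\theta)$, which is a martingale difference, the residual $\delta_{k}:=y_{k+1}-G_{k}(\phi_{k}^{T}\hat{\theta}_{k})=h_{k}+w_{k+1}$ actually used by the algorithm, and $s_{k}:=\phi_{k}^{T}\tilde{\theta}_{k}$. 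By the mean value theorem $h_{k}=\beta_{k}'s_{k}$ for a secant slope $\beta_{k}'\in[\underline{g}_{k},\overline{g}_{k}]$, so the target reduces to $\sum_{i}h_{i}^{2}=O(\log\lambda_{\max}(k))$.

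The core of the argument is a one-step bound for $V_{k}:=\tilde{\theta}_{k}^{T}P_{k}^{-1}\tilde{\theta}_{k}$. The Sherman--Morrison identity applied to $(\ref{be2})$ gives $P_{k+1}^{-1}=P_{k}^{-1}+(\beta_{k}^{2}/\mu_{k})\phi_{k}\phi_{k}^{T}$; combining this with the non-expansiveness of $\Pi_{P_{k+1}^{-1}}$ in the $\|\cdot\|_{P_{k+1}^{-1}}$ norm (valid since $\theta\in D$), expanding the quadratic form, and using $1-a_{k}\beta_{k}^{2}\phi_{k}^{T}P_{k}\phi_{k}=a_{k}\mu_{k}$ to complete the square in $\beta_{k}s_{k}-\delta_{k}$, I expect to obtain
\[
V_{k+1}\le V_{k}+\frac{(\beta_{k}s_{k}-\delta_{k})^{2}}{\mu_{k}}-a_{k}\delta_{k}^{2}.
\]
Writing $\beta_{k}s_{k}-\delta_{k}=(\beta_{k}-\beta_{k}')s_{k}-w_{k+1}$ and $\delta_{k}=h_{k}+w_{k+1}$, summing, and rearranging will isolate $\sum a_{k}h_{k}^{2}$ against $V_{0}$, a noise term $\sum(a_{k}\beta_{k}^{2}\phi_{k}^{T}P_{k}\phi_{k}/\mu_{k})w_{k+1}^{2}$, the cross term $\sum(\beta_{k}-\beta_{k}')^{2}s_{k}^{2}/\mu_{k}$, and two martingale transforms in $w_{k+1}$.

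Then I would estimate each piece. For the noise term, the determinant relation $\det P_{k+1}/\det P_{k}=(1+(\beta_{k}^{2}/\mu_{k})\phi_{k}^{T}P_{k}\phi_{k})^{-1}$ telescopes to $\sum_{k}(1-\det P_{k+1}/\det P_{k})\le\log(\det P_{0}/\det P_{n})=O(\log\lambda_{\max})$; together with the uniform bound on $\mathbb{E}_{k}\{w_{k+1}^{2}\}$ (from the $1$-Lipschitzness of $S_{k}$ and $(\ref{e})$) and a martingale estimate, this term is $O(\log\lambda_{\max})$ a.s. For the cross term I would use the Lipschitz continuity of $G_{k}'$ in $(\ref{iinft})$ together with the integral representation of secants to get $|\beta_{k}-\beta_{k}'|\le(\rho/2)|\phi_{k}^{T}(\theta-\bar{\theta}_{k})|$; since $|s_{k}|$ is bounded, $(\beta_{k}-\beta_{k}')^{2}s_{k}^{2}/\mu_{k}\le C(\phi_{k}^{T}(\theta-\bar{\theta}_{k}))^{2}$, reducing it to the accumulated prediction error of the \emph{first} step, which obeys the same $O(\log\lambda_{\max})$ bound by the analogous Lyapunov argument applied to $(\ref{be1})$. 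The two martingale transforms have predictable variations dominated by $\sum a_{k}h_{k}^{2}$ and by the cross-term sum, so the strong law for martingale transforms makes them $o(\sum a_{k}h_{k}^{2})+O(\log\lambda_{\max})$ a.s. Collecting, $(1-o(1))\sum a_{k}h_{k}^{2}=O(\log\lambda_{\max})$; and since $\phi_{k}^{T}P_{k}\phi_{k}$ is bounded ($P_{k}\preceq P_{0}$), $a_{k}$ is bounded below by a positive constant, whence $\sum_{i}R_{i}=\sum_{i}h_{i}^{2}=O(\log\lambda_{\max}(k))$.

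The main obstacle is the cross term $(\beta_{k}-\beta_{k}')^{2}s_{k}^{2}$: this is exactly where the two-step design is used, because $\beta_{k}$ is the secant through the preliminary estimate $\bar{\theta}_{k}$ rather than a worst-case constant, so its deviation from the true slope $\beta_{k}'$ is governed by the first-step prediction error via the smoothness $(\ref{iinft})$. Making this rigorous requires a separate regret bound for Step 1 that holds with no PE condition, and the accompanying verification that the noise and martingale contributions aggregate to $O(\log\lambda_{\max})$ rather than $O(k)$ --- which hinges on the $\log\det$ telescoping and the uniform conditional moment bound on $w_{k+1}$. None of these steps uses an excitation condition, consistent with the statement of Theorem $\ref{thm1}$.
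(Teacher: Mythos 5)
Your proposal is correct and follows essentially the same route as the paper: the same Lyapunov function $V_k=\tilde{\theta}_k^T P_k^{-1}\tilde{\theta}_k$ with the projection non-expansiveness, the same one-step inequality isolating $-a_k\psi_k^2$, the log-det telescoping for the $w_{k+1}^2$ term, martingale-transform estimates for the cross noise terms, and the reduction of the secant-mismatch term to the first-step accumulated prediction error $\sum(\phi_k^T\tilde{\bar\theta}_k)^2=O(\log\lambda_{\max})$ handled by an analogous Lyapunov argument for $(\ref{be1})$. The only cosmetic difference is that you bound $|\beta_k-\beta_k'|$ via the Lipschitz condition $(\ref{iinft})$, whereas the paper uses the exact identity $\psi_k-\beta_k\phi_k^T\tilde\theta_k=\bar\psi_k-\beta_k\phi_k^T\tilde{\bar\theta}_k$ and mere boundedness of the secant slopes; both yield the same $O\bigl(\sum(\phi_k^T\tilde{\bar\theta}_k)^2\bigr)$ control.
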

	
	The convergence of the accumulated regrets in Theorem $\ref{thm1}$ does not require any excitation condition to hold, and thus can be easily applied to closed-loop control systems. We remark that the order $\log k$ is known to be the best possible order in the linear case (see \cite{lai1986}). 	
	
	\subsection{Asymptotic normality}\label{sub2}
	
	In this subsection, we study the asymptotic distribution properties of the estimation under a general non-PE condition and show that our algorithm is asymptotically efficient in some typical cases. 
	
	For this, let us now take the regulation factor sequence $\{\mu_{k}\}$ in the second step of the TSQN algorithm as
	\begin{equation}\label{alpha}
		\mu_{k}=\sigma_{k}(\phi_{k}^{\top}\hat{\theta}_{k}),
	\end{equation}	
	and the function $\sigma_{k}(\cdot)$ is defined by 
	\begin{equation}\label{sig}
		\sigma_{k}(x)=\mathbb{E}_{k}\left[|S_{k}(x+e_{k+1})-G_{k}(x)|^{2}\right],\;\;a.s.	
	\end{equation}
	Under Assumptions $\ref{assum1}$-$\ref{assum4}$, it is not difficult to obtain that the function $\sigma_{k}(\cdot)$ is Lipschitz continuous and has the following properties: 
	\begin{equation}\label{ssss}
		\sup _{k\geq 0}\mu_{k}\leq\sup _{|x| \leq M_{k}, k\geq 0}|\sigma_{k}(x)|<\infty, \;a.s.
	\end{equation}
	\begin{equation}\label{muu}
		\inf _{k\geq 0}\mu_{k}\geq\inf _{|x| \leq M_{k}, k\geq 0}|\sigma_{k}(x)|>0, \;a.s. 
	\end{equation}
	The proof of $(\ref{ssss})$-$(\ref{muu})$ are provided in Appendix $\ref{A}$. 
	
	We are now in a position to present a theorem on asymptotic normality of the parameter estimate $\hat{\theta}_{k}$ under a general non-PE condition.
	\begin{theorem}\label{thm3}
		Let the Assumptions $\ref{assum1}$-$\ref{assum4}$ be satisfied. Assume that $\{\phi_{k},\;k\geq 0\}$ satisfies as $k\rightarrow \infty$,
		\begin{equation}\label{phi}
			\frac{\log k}{\sqrt{\lambda_{\min}(k)}}=o(1),\;\;a.s.,
		\end{equation}	
		where 
		$\lambda_{\min}(k)$ is the same as that in Theorem $\ref{thm2}$. Besides, assume  that for each $k\geq 0$, there exists a non-random positive definite matrix $\Delta_{k}$ such that as $k\rightarrow \infty$,
		\begin{equation}\label{R1}
			\Delta_{k+1}^{-1}Q_{k+1}^{-\frac{1}{2}} \mathop{\rightarrow}\limits^{p} I,
		\end{equation}	
		where $Q_{k+1}^{-1}=\sum\limits_{i=0}^{k}\frac{(G_{i}'(\phi_{i}^{\top}\theta))^{2}}{\sigma_{i}(\phi_{i}^{\top}\theta)}\phi_{i}\phi_{i}^{\top}+P_{0}^{-1}$,  and $``\mathop{\rightarrow}\limits^{p}"$ means the convergence in probability. Then the estimate $\hat{\theta}_{k}$ given by the TSQN algorithm has the following asymptotically normal property as $k\rightarrow \infty$:
		\begin{equation}\label{theta}
			Q_{k+1}^{-\frac{1}{2}}\tilde{\theta}_{k+1}\mathop{\rightarrow}\limits^{d}\; N(0,I),
		\end{equation}
		where $\tilde{\theta}_{k+1}=\theta-\hat{\theta}_{k+1}$, and $``\mathop{\rightarrow}\limits^{d}"$ means the convergence in distribution.
	\end{theorem}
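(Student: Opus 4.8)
The plan is to derive a recursion for the normalized estimation error $Q_{k+1}^{-1/2}\tilde\theta_{k+1}$ and show it converges to a standard Gaussian by a martingale central limit theorem. First I would unfold the second-step update in Algorithm~1. Writing $\tilde\theta_{k+1}=\theta-\hat\theta_{k+1}$ and ignoring the projection (which, as noted in Remark~\ref{re77}, is inactive after a finite time by strong consistency from Theorem~\ref{thm2}), I would linearize the innovation $y_{k+1}-G_k(\phi_k^T\hat\theta_k)$. Using $y_{k+1}=S_k(\phi_k^T\theta+e_{k+1})$ and the definition $G_k(x)=\mathbb{E}_k\{S_k(x+e_{k+1})\}$, I decompose
\begin{equation}
y_{k+1}-G_k(\phi_k^T\hat\theta_k)=\varepsilon_{k+1}+\bigl[G_k(\phi_k^T\theta)-G_k(\phi_k^T\hat\theta_k)\bigr],
\end{equation}
where $\varepsilon_{k+1}:=S_k(\phi_k^T\theta+e_{k+1})-G_k(\phi_k^T\theta)$ is an $\mathcal{F}_{k+1}$-martingale difference with conditional variance $\sigma_k(\phi_k^T\theta)$ by \eqref{sig}. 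The bracketed deterministic-in-$e$ term equals $\beta_k^{*}\phi_k^T\tilde\theta_k$ for a mean-value slope $\beta_k^{*}$, which the adaptive gain $\beta_k$ in \eqref{be2} is designed to approximate; the construction via the preliminary estimate $\bar\theta_k$ together with the Lipschitz bound \eqref{iinft} and the consistency rate of Corollary~\ref{re7} should give $\beta_k\to G_k'(\phi_k^T\theta)$ fast enough that the replacement error is negligible.

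Next I would establish that $P_{k+1}^{-1}\approx Q_{k+1}^{-1}$ in the appropriate sense. From the Riccati-type recursion $P_{k+1}^{-1}=P_k^{-1}+(a_k\mu_k)^{-1}\beta_k^2\phi_k\phi_k^T$ (obtained by the matrix inversion lemma applied to the $P_{k+1}$ update), and recalling $a_k=(\mu_k+\beta_k^2\phi_k^TP_k\phi_k)^{-1}$ with the choice $\mu_k=\sigma_k(\phi_k^T\hat\theta_k)$ from \eqref{alpha}, the accumulated information matrix $P_{k+1}^{-1}$ should match $\sum_i \frac{(G_i'(\phi_i^T\theta))^2}{\sigma_i(\phi_i^T\theta)}\phi_i\phi_i^T+P_0^{-1}=Q_{k+1}^{-1}$ up to lower-order terms, again using consistency to pass from the online quantities $\beta_k,\mu_k$ evaluated at $\hat\theta_k$ to their limiting values at $\theta$. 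This reduces the target \eqref{theta} to showing that $P_{k+1}^{-1/2}\tilde\theta_{k+1}$ is asymptotically $N(0,I)$, and then \eqref{R1} lets me substitute $Q_{k+1}^{-1/2}$ for $P_{k+1}^{-1/2}$ (equivalently $\Delta_{k+1}$) by Slutsky's theorem.

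The core is then a martingale CLT. Iterating the linearized recursion produces the representation
\begin{equation}
P_{k+1}^{-1/2}\tilde\theta_{k+1}=P_{k+1}^{-1/2}\sum_{i=0}^{k} a_i\beta_i P_i\phi_i\,\varepsilon_{i+1}+(\text{negligible terms}),
\end{equation}
so I would apply a CLT for martingale difference arrays (e.g. the Helland/Brown form used by Lai--Wei) to the array $\xi_{k,i}=P_{k+1}^{-1/2}a_i\beta_i P_i\phi_i\,\varepsilon_{i+1}$. This requires verifying (a) the conditional-variance condition $\sum_i \mathbb{E}_i\{\xi_{k,i}\xi_{k,i}^T\}\xrightarrow{p} I$, which is exactly where the gain design and the matching $P_{k+1}^{-1}\sim Q_{k+1}^{-1}$ pay off, and (b) a conditional Lindeberg condition, for which I would use the $(2+\eta)$-moment bound in \eqref{e} and the eigenvalue growth \eqref{phi} to control $\lambda_{\max}$ of the summand matrices. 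I expect the main obstacle to be the rigorous control of the accumulated error between the online gains $(\beta_k,\mu_k)$ and their idealized limits and between $P_{k+1}^{-1}$ and $Q_{k+1}^{-1}$: these cross-terms must be shown to be $o_p(1)$ after normalization by $P_{k+1}^{-1/2}$, and this is delicate precisely because under the non-PE regime $\lambda_{\min}(k)$ may grow slowly, so condition \eqref{phi} (relating $\log k$ to $\sqrt{\lambda_{\min}(k)}$) is what must be invoked to guarantee that the consistency rate of Theorem~\ref{thm2}/Corollary~\ref{re7} dominates the normalization and makes these remainders vanish.
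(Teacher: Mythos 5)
The paper does not actually contain a proof of Theorem~\ref{thm3}: its ``proof'' consists of the single remark that a similar result was proven in \cite{lw1982} for linear stochastic regression models and that ``the details will be presented elsewhere.'' There is therefore no written argument to compare yours against line by line. That said, your outline matches the route the paper evidently intends: the unused Lemma~\ref{lem9} in Appendix~B is precisely the martingale-difference-array CLT (conditional variance convergence plus conditional Lindeberg condition) that your step (a)--(b) would invoke, and the decomposition of the innovation into the martingale difference $\varepsilon_{k+1}=S_k(\phi_k^T\theta+e_{k+1})-G_k(\phi_k^T\theta)$ with conditional variance $\sigma_k(\phi_k^T\theta)$, the replacement of the online gains $(\beta_k,\mu_k)$ by their values at $\theta$ via strong consistency, the identification of $P_{k+1}^{-1}$ with $Q_{k+1}^{-1}$, and the final Slutsky step through condition \eqref{R1} are all the natural adaptation of Lai--Wei's argument that the authors gesture at. So your proposal is consistent with the paper's approach, but be aware that, like the paper, it remains a sketch: the steps you yourself flag as the main obstacles --- showing $\beta_k-G_k'(\phi_k^T\theta)$ and $P_{k+1}^{-1}-Q_{k+1}^{-1}$ contribute only $o_p(1)$ after normalization by $Q_{k+1}^{-1/2}$ under the weak excitation condition \eqref{phi}, and handling the projection rigorously rather than ``ignoring'' it --- are exactly where the real work lies, and neither you nor the paper carries them out. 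One small caution: Corollary~\ref{re7} gives the iterated-logarithm rate only under the strengthened condition $k=O(\lambda_{\min}(k))$, which is not assumed in Theorem~\ref{thm3}, so your remainder estimates must be run with the weaker rate of Theorem~\ref{thm2} combined with \eqref{phi}, not with Corollary~\ref{re7}.
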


	\begin{remark}
		Notice that if $\{\phi_{k}\}$ is a deterministic sequence, then $\Delta_{k}$ can be simply chosen as $Q_{k}^{-\frac{1}{2}}$. Moreover, if $\{\frac{(G_{k}'(\phi_{k}^{\top}\theta))^{2}}{\sigma_{k}(\phi_{k}^{\top}\theta)}\phi_{k}\phi_{k}^{\top}\}$ is a random sequence with either ergodic property or $\phi$-mixing property, then under some mild regularity conditions,  it is not difficult to show that $\Delta_{k+1}$ can be taken as $\{\sum_{i=1}^{k}\mathbb{E}[\frac{(G'_{i}(\phi_{i}^{\top}\theta))^{2}}{\sigma_{i}(\phi_{i}^{\top}\theta)}\phi_{i}\phi_{i}^{\top}] \}^{\frac{1}{2}}$. 
	\end{remark}
	
	\begin{remark}
		If we take $l_{k}=-\infty$ and $u_{k}=\infty$, then the system  $(\ref{eq1})$-$(\ref{eq2})$ will degenerate to the standard linear stochastic regression model, and the matrix $Q_{k+1}$ will be the Fisher information matrix provided that $\{\phi_{k}\}$ is a deterministic sequence and $\{e_{k}\}$ is independent with Gaussian distribution. Thus, in this case, our TSQN algorithm is asymptotically efficient \cite{sm2003}. Moreover, the next corollary shows that in the typical binary-valued observation case, our algorithm is also asymptotically efficient.
	\end{remark}

	\begin{corollary}
		Let the conditions of Theorem $\ref{thm3}$ hold and $L_{k}=l_{k}=u_{k}$, where the nonlinear models degenerate to linear regression models with binary-valued observations. If $\{\phi_{k}\}$  is a deterministic sequence and $\{e_{k}\}$ is an independent sequence, then our TSQN algorithm has the following asymptotically normal property  as $k\rightarrow \infty$:
\begin{equation}
			I_{k}^{\frac{1}{2}}\tilde{\theta}_{k}\mathop{\rightarrow}^{d}\; N(0,I),
\end{equation} 		
where $I_{k}$ is the Fisher information matrix given data $\{(y_{i+1}, \phi_{i}),\;\; 0\leq i \leq k\}$. 
	\end{corollary}

	In fact, the fisher information matrix $I_{k}$ in this case can be 	calculated as follows:
	\begin{equation}
		\begin{aligned}
			I_{k+1}=-\mathbb{E}[\frac{\partial^{2}\log p(y_{1}, y_{2}, \cdots, y_{k+1})}{\partial \theta^{2}}]=Q_{k+1}^{-1}.
		\end{aligned}
	\end{equation}	

In the next Remark, we provide a concrete construction of the asymptotic confidence ellipsoid for $\theta$ by Theorem \ref{thm3}. 	
	
	\begin{remark}	\label{re4}
		\textbf{Asymptotic confidence ellipsoid.}   Let $\hat{Q}_{n+1}=(\sum\limits_{i=1}^{n}\frac{(G_{i}'(\phi_{i}^{\top}\hat{\theta}_{i}))^{2}}{\sigma_{i}(\phi_{i}^{\top}\hat{\theta}_{i})} \phi_{i}\phi_{i}^{\top})^{-1}$, from Lemma \ref{lemp} in Section IV and Theorem \ref{thm2}, it is not difficult to obtain that 
\begin{equation}\label{qqh}		
		\|\hat{Q}_{n+1}^{-\frac{1}{2}}Q_{n+1}^{\frac{1}{2}}\|\rightarrow 1, a.s.
\end{equation}		
Thus, by (\ref{qqh}) and Theorem $\ref{thm3}$, we have		
\begin{equation}
\lim_{n\rightarrow \infty} P\{\|\hat{Q}^{-\frac{1}{2}}_{n+1}\tilde{\theta}_{n+1} \|^{2}\leq \mathcal{X}_{m,\alpha}^{2}\}=P\{\mathcal{X}_{m}^{2}\leq \mathcal{X}_{m,\alpha}^{2}\}=1-\alpha,
\end{equation}	
where $\mathcal{X}_{m}^{2}$ is the standard  $\mathcal{X}^{2}-$distribution with degrees of freedom $m$, $m$ is the dimension of the parameter $\theta$, and $\mathcal{X}_{m,\alpha}^{2}$ is the $\alpha-$quantile of $\mathcal{X}_{m}^{2}$. Therefore, the $1-\alpha$ asymptotically correct confidence ellipsoids for $\theta$ take the form $\{\theta:\|\hat{Q}^{-\frac{1}{2}}_{n+1}\tilde{\theta}_{n+1} \|^{2}\leq \mathcal{X}_{m,\alpha}^{2}\}$ ( \cite{dp2014},\cite{sm2003}).	Moreover, since $\|\epsilon_{j}^{\top}\tilde{\theta}_{n+1}\| \leq \|\epsilon_{j}^{\top}\hat{Q}_{n+1}^{\frac{1}{2}}\| \cdot\|\hat{Q}_{n+1}^{-\frac{1}{2}}\tilde{\theta}_{n+1} \|,$ where $\epsilon_{j}$ is the $j^{th}$ column of the identity matrix, we can  get a more detailed  $1-\alpha$ asymptotically correct confidence intervals for the components of the estimation error vector as follows: 
		\begin{equation}
			C^{(j)}=\left(\hat{\theta}_{n+1}^{(j)}-\sqrt{\hat{Q}_{n+1}^{(j)}\mathcal{X}_{m,\alpha}^{2}}\;,\; \hat{\theta}_{n+1}^{(j)}+\sqrt{\hat{Q}_{n+1}^{(j)}\mathcal{X}_{m,\alpha}^{2}}\right)
		\end{equation}
		where $\hat{Q}_{n+1}^{(j)}$ is the  $j^{th}$ diagonal element of $\hat{Q}_{n+1}$.
	\end{remark}	

	\subsection{Non-asymptotic analysis}\label{sub3}		
	Though an asymptotic bound can be given based on Theorem $\ref{thm3}$ as discussed in Remark ${\ref{re4}}$ which, however, requires that the number of data samples is sufficiently large. As a result,  the asymptotic bound is hard to apply in real situations where one only has a finite number of data samples.  In this subsection, we will provide some upper bounds for the estimation errors with high probability when the number of data samples is given and finite.  	
	
	\subsubsection{Lyapunov function-based confidence interval}	
	In this subsection, we give a Lyapunov function-based confidence interval based on the analysis of a Lyapunov function used in the proof of Lemma $\ref{lem6}$. For convenience, let the initial values in the TSQN algorithm satisfy $\log|P_{0}^{-1}|=\log|\bar{P}_{0}^{-1}|>1$ and introduce the following notations to be used throughout the sequel:
	\begin{equation}\label{star}
		w_{k+1}=y_{k+1}-G_{k}(\phi_{k}^{\top}\theta).
	\end{equation}
	\begin{theorem}\label{thm4}
		Under Assumptions $\ref{assum1}$-$\ref{assum4}$, assume that $\{w_{k+1}^{2}\}$ is an $\mathcal{L}_{2}$ sequence. Then  for any given $N\geq 1$ and any $0<\alpha<\frac{1}{2}$, $1\leq j \leq m$, each component $\tilde{\theta}_{N+1}^{(j)}$ of $\tilde{\theta}_{N+1}$ satisfies the following inequality with probability at least $1-2\alpha$:
		\begin{equation}\label{eeee}
			\begin{aligned}
				|\tilde{\theta}_{N+1}^{(j)}|^{2} \leq & P_{N+1}^{(j)}(\sigma_{b}\log |P_{N+1}^{-1}|+\frac{2\Psi C}{\tau\lambda_{N}}+\Gamma+c_{0}), 
			\end{aligned}
		\end{equation}
	and with probability at least $1-2\alpha$, we have
		\begin{equation}\label{regret}
					\begin{aligned}
			R_{N}\leq &2\delta_{0}(\sigma_{b}\log |P_{N+1}^{-1}|+\frac{2\Psi C}{\tau\lambda_{N}}+\Gamma+c_{0}),
					\end{aligned}
		\end{equation}
	where
	\begin{equation}
	\begin{aligned}
		C=&2(\bar{\mu}\sigma_{b}+\bar{\Gamma}+1)^{2+\tau}+[(1+6\gamma\bar{\mu}^{2})(\sigma_{b}+\Gamma+ \frac{\bar{\Gamma}}{\bar{\mu}^{2}}+1)]^{2+\tau},\\
		\Gamma=&V_{0}+\sigma_{b}\log|P_{0}^{-1}|+\frac{\Phi tr(P_{0})\bar{\sigma}_{b}}{2\sigma_{b}}+\frac{18 \sigma_{b}(1-\alpha)}{\alpha},\\
		\bar{\Gamma}=&\bar{V}_{0}+\bar{\mu}\sigma_{b}\log|\bar{P}_{0}^{-1}|+\frac{\bar{\Phi}tr(\bar{P}_{0})\bar{\mu}\bar{\sigma}_{b}}{2\sigma_{b}}+\frac{10\bar{\mu}\sigma_{b}(1-\alpha)}{\alpha},
	\end{aligned}
\end{equation}
	and $P_{n+1}^{(j)}$ is the $j^{th}$ diagonal component of $P_{n+1}$, $\gamma=\sup\limits_{0\leq k \leq N}\frac{\mu_{k}^{-1}\overline{g}_{k}^{2}}{\bar{a}_{k}^{2}\bar{\beta}_{k}^{2}}$, $c_{0}=\frac{3}{2}\mu_{0}^{-1}\overline{g}_{0}^{2}(\phi_{0}^{\top}\tilde{\bar{\theta}}_{0})^{2}$, $\delta_{0}=\sup\limits_{0\leq k \leq N}\{\mu_{k}+\beta_{k}^{2}\phi_{k}^{\top}P_{k}\phi_{k}\}$, $\Phi=\sup\limits_{0\leq k\leq N}\mu_{k}^{-1}\beta_{k}^{2}\|\phi_{k}\|^{2}$, $\bar{\Phi}=\sup\limits_{0\leq k\leq N}\bar{\beta}_{k}^{2}\|\phi_{k}\|^{2}$, $\Psi=\sup\limits_{0\leq k\leq N}\frac{6\mu_{k}^{-1}\rho^{2}\|\phi_{k}\|^{2}}{\bar{\beta}_{k}^{2} max\{a_{k},\bar{a}_{k}\}},$ $\sigma_{b}=\sup\limits_{0\leq k\leq N}\mu_{k}^{-1}\mathbb{E}_{k}\left[w_{k+1}^{2}\right]$,  $\overline{\sigma}_{b}=\sup\limits_{0\leq k\leq N}\mu_{k}^{-2}\mathbb{E}_{k}\left[(w_{k+1}^{2}-\mathbb{E}_{k}\left[w_{k+1}^{2}\right])^{2}\right]$,   $\bar{\mu}=\sup\limits_{0\leq k \leq N}\mu_{k}+1$, $\lambda_{N}=\inf\limits_{0\leq k \leq N}\{\frac{\lambda_{\min}\{\bar{P}_{k}^{-1}\}}{(\log|\bar{P}_{k+1}^{-1}|)^{2+\tau}},\frac{\lambda_{\min}\{P_{k}^{-1}\}}{(\log |P_{k+1}^{-1}|)^{2+\tau}}\}$, $\tau>0$.
	\end{theorem}

	In contrast to Remark $\ref{re4}$ where the number of data samples is sufficiently large, the above Theorem $\ref{thm4}$ can provide a concrete confidence interval for any given finite number of data samples. Next, we provide an alternative confidence interval by using the  Monte Carlo method, which turns out to have some advantages also in the case of finite data samples.		
	
	\subsubsection{Monte Carlo-based confidence interval}
	
	In this subsection, we give a Monte Carlo-based confidence interval by designing a Monte Carlo experiment.
	
	Consider the nonlinear stochastic system defined by $(\ref{eq1})$-$(\ref{eq2})$ and the adaptive nonlinear TSQN algorithm defined by $(\ref{be1})$-$(\ref{be2})$. Suppose that the unknown system parameter $\theta \in D$ is a random vector with uniform distribution $U$, that the system noise $\{e_{i}\}_{i=1}^{n}$ is an i.i.d sequence which is independent of  $\theta$ with distribution $F$,  and that both the saturation functions $\{S_{i}(\cdot)\}_{i=1}^{n}$ and system regressors $\{\phi_{i}\}_{i=1}^{n}$ are deterministic sequence, where $n$ is a given fixed data length. It is easy to see that the vector
	$X=(\theta^{\top}, e_{1}, \cdots,e_{n})$ has a joint distribution $P=U \times F^{n}.
	$	
	
	To construct the Monte Carlo-based confidence interval, let $\{X_{1}, X_{2}, \cdots, X_{K}\}$ be K samples taken from the joint distribution $P$, and generate  the corresponding $n$-dimensional observation set $\{Y_{1}, Y_{2}, \cdots, Y_{K}\}$ by the model $(\ref{eq1})$-$(\ref{eq2})$ together with the given data of regressors $\{\phi_{i}\}_{i=1}^{n}$. Then compute the estimation error $\{\tilde{\theta}_{n,1}^{(j)}, \tilde{\theta}_{n,2}^{(j)}, \cdots, \tilde{\theta}_{n,K}^{(j)}\}$ by the TSQN algorithm for any $j=1,2\cdots,m$. It is easy to see that there is a measurable function $H_{j}: R^{n+m}\rightarrow R$ and a probabilistic distribution function $F^{(j)}$, such that the $j^{th}$ component of the parameter estimation error  $\tilde{\theta}_{n, i}^{(j)}$ can be expressed as follows for any $1\leq i \leq K$ and $j=1,2\cdots,m$:
	\begin{equation}
		\begin{aligned}
			&\tilde{\theta}_{n,i}^{(j)}=H_{j}(X_{i})\sim F^{(j)}.\;
		\end{aligned}
	\end{equation} 		
	Let the empirical distribution function of the generated samples for the $j^{th}$ component be 
	$$F_{K}^{(j)}(x)=\frac{1}{K}\sum_{i=1}^{K}I_{\{H_{j}(X_{i})\leq x\}}, \;\; \forall x \in R,\;\;j=1,\cdots,m.$$
	
	Under the above-mentioned assumptions and notations, we have the following proposition on the confidence interval with finite data length:
	
	\begin{proposition}\label{po1}	
		For any positive $\alpha$ and $t$ with $\alpha+t<1$, and any $j=1,2,\cdots,m$, the $j^{th}$ component of the estimation error $\tilde{\theta}_{n}^{(j)}$ generated by the TSQN algorithm  belongs to the following confidence interval with probability  at least $1-\alpha-t$: 
		\begin{equation}
			\begin{aligned}
				\tilde{\theta}_{n}^{(j)} \in [z_{K}^{(j)}(\frac{\alpha}{2}-\sqrt{\frac{\ln 2-\ln t}{2K}}), \;\;z_{K}^{(j)}(1-\frac{\alpha}{2}+\sqrt{\frac{\ln 2-\ln t}{2K}})].
			\end{aligned}
		\end{equation}
		where $z^{(j)}_{K}(\alpha)$ is the $\alpha$ quantiles of the distribution $F_{K}^{(j)}$. 
	\end{proposition}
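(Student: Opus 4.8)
The plan is to reduce the statement to two classical facts: the Dvoretzky--Kiefer--Wolfowitz (DKW) inequality, which controls the uniform gap between the empirical distribution function $F_{K}^{(j)}$ and the true law $F^{(j)}$ of the error $\tilde{\theta}_{n}^{(j)}=H_{j}(X)$, and an elementary comparison of the quantile functions of two uniformly close distribution functions. The essential structural remark is that the actual error $\tilde{\theta}_{n}^{(j)}$ is a single draw from $F^{(j)}$, whereas the quantiles $z_{K}^{(j)}(\cdot)$ are built from the $K$ Monte Carlo draws $X_{1},\dots,X_{K}$; the two sources of randomness enter additively and are combined by a union bound at the very end, so no structure beyond $\tilde{\theta}_{n}^{(j)}\sim F^{(j)}$ is actually needed.

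First I would apply the DKW inequality in Massart's sharp form, $\mathbb{P}\big(\sup_{x}|F_{K}^{(j)}(x)-F^{(j)}(x)|>\epsilon\big)\le 2e^{-2K\epsilon^{2}}$. Equating the right-hand side with $t$ forces $\epsilon=\sqrt{(\ln 2-\ln t)/(2K)}$, which is precisely the shift of the quantile levels appearing in the conclusion. Writing $A$ for the event $\{\sup_{x}|F_{K}^{(j)}(x)-F^{(j)}(x)|\le\epsilon\}$, we have $\mathbb{P}(A)\ge 1-t$. Next I would prove the quantile-comparison lemma: on $A$, and for every level $\beta$, one has $z^{(j)}(\beta-\epsilon)\le z_{K}^{(j)}(\beta)\le z^{(j)}(\beta+\epsilon)$. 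Using the generalized inverse $z(\beta)=\inf\{x:F(x)\ge\beta\}$ together with the two-sided bound $F^{(j)}(x)-\epsilon\le F_{K}^{(j)}(x)\le F^{(j)}(x)+\epsilon$, one compares the sublevel sets $\{x:F_{K}^{(j)}(x)\ge\beta\}$ with $\{x:F^{(j)}(x)\ge\beta\mp\epsilon\}$ and then takes infima. Specializing to $\beta=\tfrac{\alpha}{2}-\epsilon$ and $\beta=1-\tfrac{\alpha}{2}+\epsilon$ yields, on $A$, the inclusion $\big[z^{(j)}(\tfrac{\alpha}{2}),\,z^{(j)}(1-\tfrac{\alpha}{2})\big]\subseteq\big[z_{K}^{(j)}(\tfrac{\alpha}{2}-\epsilon),\,z_{K}^{(j)}(1-\tfrac{\alpha}{2}+\epsilon)\big]$; that is, the reported Monte Carlo interval contains the exact population interval.

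Finally, because $\tilde{\theta}_{n}^{(j)}\sim F^{(j)}$, the defining property of quantiles gives $\mathbb{P}\big(z^{(j)}(\tfrac{\alpha}{2})\le\tilde{\theta}_{n}^{(j)}\le z^{(j)}(1-\tfrac{\alpha}{2})\big)\ge 1-\alpha$; call this event $B$. On $A\cap B$ the error lies in the population interval, which is in turn contained in the Monte Carlo interval, so the desired membership holds, and the union bound $\mathbb{P}(A\cap B)\ge 1-\mathbb{P}(A^{c})-\mathbb{P}(B^{c})\ge 1-t-\alpha$ closes the argument (the hypothesis $\alpha+t<1$ merely guarantees a nontrivial confidence level). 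I expect the only delicate point to be the quantile-comparison lemma: one must handle the generalized inverse carefully (left/right continuity, possible flat pieces or jumps of $F^{(j)}$, and the degenerate cases where $\tfrac{\alpha}{2}-\epsilon\le 0$ or $1-\tfrac{\alpha}{2}+\epsilon\ge 1$, where the endpoints become the extremes of the support and the inclusion is trivial) so that the set inclusions and the passage to infima are justified without assuming $F^{(j)}$ continuous or strictly increasing. Everything else is a routine combination of DKW with a two-event union bound.
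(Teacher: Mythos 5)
Your proof is correct and reaches the same conclusion by the same overall decomposition as the paper: with probability at least $1-t$ the Monte Carlo interval contains the population interval $[z^{(j)}(\frac{\alpha}{2}),z^{(j)}(1-\frac{\alpha}{2})]$, the true error lies in the population interval with probability at least $1-\alpha$, and a union bound finishes. The difference is in the concentration step. The paper does not invoke DKW: it applies Hoeffding's inequality pointwise to $F_{K}^{(j)}(x)-F^{(j)}(x)$ at the two specific points $x=z_{K}^{(j)}(\frac{\alpha}{2}-\upsilon)$ and $x=z_{K}^{(j)}(1-\frac{\alpha}{2}+\upsilon)$, budgeting $t/2$ to each one-sided deviation, and then deduces the two quantile inequalities $z^{(j)}(\frac{\alpha}{2})\geq z_{K}^{(j)}(\frac{\alpha}{2}-\upsilon)$ and $z^{(j)}(1-\frac{\alpha}{2})\leq z_{K}^{(j)}(1-\frac{\alpha}{2}+\upsilon)$ directly from the resulting bounds on $F^{(j)}$ at those points. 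Since Massart's constant in DKW is $2e^{-2K\epsilon^{2}}$, which equals the sum of the two Hoeffding tail bounds, both routes produce exactly the same shift $\sqrt{(\ln 2-\ln t)/(2K)}$, so nothing is lost quantitatively. Your uniform-DKW route buys one genuine advantage: the paper's evaluation points $z_{K}^{(j)}(\cdot)$ are themselves functions of the samples $X_{1},\dots,X_{K}$, so a pointwise Hoeffding bound at those data-dependent locations requires additional justification that the paper does not supply, whereas the uniform bound $\sup_{x}|F_{K}^{(j)}(x)-F^{(j)}(x)|\leq\epsilon$ covers random evaluation points automatically. The price you pay is the quantile-comparison lemma with its generalized-inverse bookkeeping (flat pieces, jumps, degenerate levels), which you correctly flag as the only delicate point; the paper avoids this by working with the two explicit inequalities on $F^{(j)}$ rather than a general two-sided quantile sandwich.
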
	
	
	It is obvious that the confidence interval of the estimation error given in Proposition $\ref{po1}$ will decrease asymptotically, as the number of random samplings $K$  used in Proposition $\ref{po1}$ increases. We remark that there are at least two advantages of Proposition $\ref{po1}$, one is that it is applicable to the case where the data length $n$ is given and finite, and another is that the confidence interval may be better than that given in Theorem $\ref{thm4}$ in some applications.

	\section{Proofs of the main results}\label{ss4}
			
	\subsection{Proof of Theorem $\ref{thm2}$ and Theorem $\ref{thm1}$.}
		For convenience, we denote
	\begin{equation}\label{24}
		\bar{\psi}_{k}=G_{k}(\phi_{k}^{\top}\theta)-G_{k}(\phi_{k}^{\top}\bar{\theta}_{k}),
	\end{equation}
	\begin{equation}\label{psi}
		\psi_{k}=G_{k}(\phi_{k}^{\top}\theta)-G_{k}(\phi_{k}^{\top}\hat{\theta}_{k}).
	\end{equation}
To prove Theorem $\ref{thm2}$ and Theorem $\ref{thm1}$,  we need to establish the following lemma first.	
	
	\begin{lemma}\label{lem6}
		Let Assumptions $\ref{assum1}$-$\ref{assum4}$ be satisfied. Then the parameter estimate $\hat{\theta}_{k}$ given by  TSQN Algorithm  has the following property as  $k \rightarrow \infty$:
		\begin{equation}\label{22}
			\begin{aligned}
				\tilde{\theta}_{k+1}^{\top}P_{k+1}^{-1}\tilde{\theta}_{k+1}+\sum_{i=0}^{k}a_{i}\psi_{i}^{2}	= O\left(\log \lambda_{\max}(k)\right).
			\end{aligned}
		\end{equation}
		where $\tilde{\theta}_{k}$ is defined  as $\theta-\hat{\theta}_{k}$, $\psi_{k}$ is defined as in $(\ref{psi})$
	\end{lemma}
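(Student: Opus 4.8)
My plan is to study the stochastic Lyapunov function $V_k=\tilde{\theta}_k^{T}P_k^{-1}\tilde{\theta}_k$ along the second-step recursion $(\ref{be2})$ and to show, after summation, that $V_{k+1}+\sum_{i=0}^{k}a_i\psi_i^2=O(\log\lambda_{\max}(k))$. Since $\theta\in D$ and $\Pi_{P_{k+1}^{-1}}$ is the $\|\cdot\|_{P_{k+1}^{-1}}$-projection onto the convex set $D$, the projection is non-expansive toward the interior point $\theta$, so $V_{k+1}$ is bounded above by the value produced by the \emph{unprojected} update; I would therefore work with $\tilde{\theta}_{k+1}=\tilde{\theta}_k-a_k\beta_k P_k\phi_k(w_{k+1}+\psi_k)$, using $y_{k+1}-G_k(\phi_k^{T}\hat{\theta}_k)=w_{k+1}+\psi_k$ from $(\ref{star})$ and $(\ref{psi})$. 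First I would apply the matrix inversion lemma to the $P$-update in $(\ref{be2})$, together with the identity $a_k(\mu_k+\beta_k^2\phi_k^{T}P_k\phi_k)=1$, to obtain the clean form $P_{k+1}^{-1}=P_k^{-1}+\mu_k^{-1}\beta_k^2\phi_k\phi_k^{T}$. This yields $\log|P_{k+1}^{-1}|-\log|P_k^{-1}|=\log(1+\mu_k^{-1}\beta_k^2\phi_k^{T}P_k\phi_k)$ and hence the standard summability estimate $\sum_{i=0}^{k}a_i\beta_i^2\phi_i^{T}P_i\phi_i\le \log|P_{k+1}^{-1}|-\log|P_0^{-1}|=O(\log\lambda_{\max}(k))$, where the last equality holds because $\mu_k^{-1}\beta_k^2$ is bounded away from $0$ and $\infty$ by $(\ref{zh})$ and $(\ref{muu})$, making $P_{k+1}^{-1}$ comparable to $\sum_{i\le k}\phi_i\phi_i^{T}+P_0^{-1}$.

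The heart of the argument is an exact one-step expansion of $V_{k+1}$. Introducing the slope-mismatch $\delta_k:=\psi_k-\beta_k\phi_k^{T}\tilde{\theta}_k$ and substituting $P_{k+1}^{-1}=P_k^{-1}+\mu_k^{-1}\beta_k^2\phi_k\phi_k^{T}$ and $\tilde{\theta}_{k+1}$ into $V_{k+1}$, the cross terms telescope (after completing the square in the innovation) to
\[
V_{k+1}=V_k-a_k\psi_k^2+\frac{\delta_k^2}{\mu_k}+\frac{a_k\beta_k^2\phi_k^{T}P_k\phi_k}{\mu_k}w_{k+1}^2+m_k,
\]
where $m_k=(-2a_k\psi_k+2\mu_k^{-1}\delta_k)\,w_{k+1}$ is a martingale difference because $a_k,\psi_k,\delta_k$ are $\mathcal{F}_k$-measurable and $\mathbb{E}_k\{w_{k+1}\}=0$. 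This is precisely the step that generates the desired negative term $-a_k\psi_k^2$.

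Summing this identity from $0$ to $k$ gives $V_{k+1}+\sum_{i=0}^{k}a_i\psi_i^2$ equal to $V_0$ plus three controllable sums. The noise term splits as $\sum\frac{a_i\beta_i^2\phi_i^{T}P_i\phi_i}{\mu_i}\mathbb{E}_i\{w_{i+1}^2\}$, which is $O(\log\lambda_{\max}(k))$ by the boundedness of $\mathbb{E}_i\{w_{i+1}^2\}$ from $(\ref{e})$ and the summability estimate above, plus a martingale remainder handled by a standard martingale convergence theorem. For the martingale $m_k$, I would bound its conditional variance by $\mathbb{E}_k\{m_k^2\}\le C(a_k\psi_k^2+\delta_k^2)$, using $a_k\le\mu_k^{-1}\le\mathrm{const}$ and $(\ref{e})$; the martingale convergence theorem then gives $\sum m_i=o\!\left(\sum a_i\psi_i^2\right)+o\!\left(\sum\delta_i^2\right)+O(1)$, so its $\sum a_i\psi_i^2$-part is absorbed into the left-hand side. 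What remains is to bound $\sum_{i}\mu_i^{-1}\delta_i^2$.

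The \emph{main obstacle} is exactly this last sum, which couples the second step to the preliminary estimate of Step 1. Using the definition of $\beta_k$ in $(\ref{be2})$ and the mean-value representations of $\psi_k$ and $\bar{\psi}_k$ in $(\ref{24})$, one checks that $\delta_k=\bar{\psi}_k-\beta_k\phi_k^{T}\tilde{\bar{\theta}}_k$, and then the Lipschitz bound $(\ref{iinft})$ on $G_k'$ gives $|\delta_k|\le\rho\big(|\phi_k^{T}\tilde{\theta}_k|+2|\phi_k^{T}\tilde{\bar{\theta}}_k|\big)\,|\phi_k^{T}\tilde{\bar{\theta}}_k|$, where $\tilde{\bar{\theta}}_k=\theta-\bar{\theta}_k$. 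Because all estimates stay in the compact set $D$ and $\phi_k$ is bounded by Assumption $\ref{assum1}$, the first factor is bounded, so $\delta_k^2\le C|\phi_k^{T}\tilde{\bar{\theta}}_k|^2$ and the problem reduces to showing $\sum_k|\phi_k^{T}\tilde{\bar{\theta}}_k|^2=O(\log\lambda_{\max}(k))$. I would obtain this by running the parallel Lyapunov analysis of Step 1 $(\ref{be1})$ in the spirit of \cite{ZZ2022}, yielding $\sum_k\bar{a}_k\bar{\psi}_k^2=O(\log\lambda_{\max}(k))$; since $\bar{P}_k\le\bar{P}_0$ keeps $\phi_k^{T}\bar{P}_k\phi_k$ bounded, the gain $\bar{a}_k=(1+\bar{\beta}_k^2\phi_k^{T}\bar{P}_k\phi_k)^{-1}$ is bounded below, and $(\ref{zh})$ gives $\bar{\psi}_k^2\ge\underline{g}_k^2|\phi_k^{T}\tilde{\bar{\theta}}_k|^2$, so $\sum_k|\phi_k^{T}\tilde{\bar{\theta}}_k|^2\le C\sum_k\bar{a}_k\bar{\psi}_k^2=O(\log\lambda_{\max}(k))$. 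Combining the pieces and absorbing the $o(\sum a_i\psi_i^2)$ remainder into the left-hand side yields $V_{k+1}+(1-o(1))\sum_{i=0}^{k}a_i\psi_i^2=O(\log\lambda_{\max}(k))$, which is the claim.
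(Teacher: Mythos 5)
Your proposal is correct and follows essentially the same route as the paper: the same Lyapunov function $V_k=\tilde{\theta}_k^{T}P_k^{-1}\tilde{\theta}_k$ with the projection's non-expansiveness, the same key identity $\psi_k-\beta_k\phi_k^{T}\tilde{\theta}_k=\bar{\psi}_k-\beta_k\phi_k^{T}\tilde{\bar{\theta}}_k$ coupling the two steps, the same martingale and log-determinant estimates, and the same reduction of $\sum_k\delta_k^2$ to $\sum_k(\phi_k^{T}\tilde{\bar{\theta}}_k)^2=O(\log\lambda_{\max}(k))$ via the parallel Step-1 Lyapunov analysis. The only cosmetic difference is that you invoke the Lipschitz bound $(\ref{iinft})$ to get $\delta_k^2\le C(\phi_k^{T}\tilde{\bar{\theta}}_k)^2$, whereas the paper simply notes that $\delta_k=(\zeta_k-\beta_k)\phi_k^{T}\tilde{\bar{\theta}}_k$ with both difference quotients bounded by $\sup_k\overline{g}_k$.
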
	
	\begin{proof}Following the analysis ideas of the classical least-squares for linear stochastic regression models (see e.g., \cite{mj1978}, \cite{lw1982}, \cite{g1995}), we consider the following stochastic Lyapunov function:
		\begin{equation}\label{v}
			V_{k+1}=\tilde{\theta}_{k+1}^{\top}P_{k+1}^{-1}\tilde{\theta}_{k+1}.
		\end{equation}	
		By  $(\ref{be2})$,  we know that
		\begin{equation}\label{P-1}
			P_{k+1}^{-1}=P_{k}^{-1}+\mu_{k}^{-1}\beta_{k}^{2}\phi_{k}\phi_{k}^{\top}.
		\end{equation}
		Hence, multiplying $a_{k}\phi_{k}^{\top}P_{k}$ from the left hand side and noticing the definition of $a_{k}$, we know that
		\begin{equation}\label{aalpha}
			\begin{aligned}
				&a_{k}\phi_{k}^{\top}P_{k}P_{k+1}^{-1}\\
				=&a_{k}\phi_{k}^{\top}(I+\mu_{k}^{-1}\beta_{k}^{2}P_{k}\phi_{k}\phi_{k}^{\top})=\mu_{k}^{-1}\phi_{k}^{\top}.
			\end{aligned}
		\end{equation}
		Also by $(\ref{psi})$ and the definition of $\beta_{k}$ in $(\ref{be2})$,  we know that
		\begin{equation}
			\begin{aligned}
				&G_{k}(\phi_{k}^{\top}\theta)-G_{k}(\phi_{k}^{\top}\bar{\theta}_{k})\\
				=&G_{k}(\phi_{k}^{\top}\theta)-G_{k}(\phi_{k}^{\top}\hat{\theta}_{k})+G_{k}(\phi_{k}^{\top}\hat{\theta}_{k})-G_{k}(\phi_{k}^{\top}\bar{\theta}_{k})\\
				=&\psi_{k}+\beta_{k}\phi_{k}^{\top}(\hat{\theta}_{k}-\bar{\theta}_{k})
				=\psi_{k}+\beta_{k}\phi_{k}^{\top}(\theta-\bar{\theta}_{k}-\tilde{\theta}_{k})
			\end{aligned}
		\end{equation}
		Hence,
		\begin{equation}\label{barpsi}
			\psi_{k}-\beta_{k}\phi_{k}^{\top}\tilde{\theta}_{k}=\bar{\psi}_{k}-\beta_{k}\phi_{k}^{\top}\tilde{\bar{\theta}}_{k}.
		\end{equation}
		Moreover, by Lemma $\ref{lem1}$ in Appendix \ref{ii}, $(\ref{v})$, $(\ref{aalpha})$,  and $(\ref{barpsi})$, we know that
		\begin{equation}\label{61}
			\begin{aligned}
				V_{k+1}\leq&[\tilde{\theta}_{k}-a_{k}\beta_{k}P_{k}\phi_{k}(\psi_{k}+w_{k+1})]^{\top}P_{k+1}^{-1}\cdot\\
				&[\tilde{\theta}_{k}-a_{k}\beta_{k}P_{k}\phi_{k}(\psi_{k}+w_{k+1})]\\
				=&V_{k}+\mu_{k}^{-1}\beta_{k}^{2}(\phi_{k}^{\top}\tilde{\theta}_{k})^{2}-2a_{k}\beta_{k}\phi_{k}^{\top}P_{k}P_{k+1}^{-1}\tilde{\theta}_{k}\psi_{k}\\
				&+a_{k}^{2}\beta_{k}^{2}\phi_{k}^{\top}P_{k}P_{k+1}^{-1}P_{k}\phi_{k}\psi_{k}^{2}-2a_{k}\beta_{k}\phi_{k}^{\top}P_{k}P_{k+1}^{-1}\tilde{\theta}_{k}w_{k+1}\\
				&+2a_{k}^{2}\beta_{k}^{2}\phi_{k}^{\top}P_{k}P_{k+1}^{-1}P_{k}\phi_{k}\psi_{k}w_{k+1}\\
				&+\beta_{k}^{2}a_{k}^{2}\phi_{k}^{\top}P_{k}P_{k+1}^{-1}P_{k}\phi_{k}w_{k+1}^{2}\\
				=&V_{k}-\mu_{k}^{-1}\psi_{k}^{2}+\mu_{k}^{-1}(\psi_{k}-\beta_{k}\phi_{k}^{\top}\tilde{\theta}_{k})^{2}\\
				&+\mu_{k}^{-1}a_{k}\beta_{k}^{2}\phi_{k}^{\top}P_{k}\phi_{k}\psi_{k}^{2}+2\mu_{k}^{-1}(\psi_{k}-\beta_{k}\phi_{k}^{\top}\tilde{\theta}_{k})w_{k+1}\\
				&-2\mu_{k}^{-1}\psi_{k}w_{k+1}+2\mu_{k}^{-1}a_{k}\beta_{k}^{2}\phi_{k}^{\top}P_{k}\phi_{k}\psi_{k}w_{k+1}\\ &+\mu_{k}^{-1}a_{k}\beta_{k}^{2}\phi_{k}^{\top}P_{k}\phi_{k}\psi_{k}w_{k+1}^{2}\\
				=& V_{k}-a_{k}\psi_{k}^{2}+\mu_{k}^{-1}(\bar{\psi}_{k}-\beta_{k}\phi_{k}^{\top}\tilde{\bar{\theta}}_{k})^{2}-2a_{k}\psi_{k}w_{k+1}\\		
				+&2\mu_{k}^{-1}(\psi_{k}-\beta_{k}\phi_{k}^{\top}\tilde{\theta}_{k})w_{k+1}
				+\mu_{k}^{-1}a_{k}\beta_{k}^{2}\phi_{k}^{\top}P_{k}\phi_{k}w_{k+1}^{2},\\
			\end{aligned}
		\end{equation}
		where $w_{k+1}$ is defined in $(\ref{star})$. Summing up both sides of $(\ref{61})$ from $0$ to $n$ and using $(\ref{barpsi})$, we have
		\begin{equation}\label{VV}
			\begin{aligned}
				V_{n+1}\leq &V_{0}-\sum_{k=0}^{n}a_{k}\psi_{k}^{2}+\sum_{k=0}^{n}\mu_{k}^{-1}(\bar{\psi}_{k}-\beta_{k}\phi_{k}^{\top}\tilde{\bar{\theta}}_{k})^{2}\\
				&-\sum_{k=0}^{n}2a_{k}\psi_{k}w_{k+1}
				+\sum_{k=0}^{n}2\mu_{k}^{-1}(\bar{\psi}_{k}-\beta_{k}\phi_{k}^{\top}\tilde{\bar{\theta}}_{k})w_{k+1}\\
				&+\sum_{k=0}^{n}\mu_{k}^{-1}a_{k}\beta_{k}^{2}\phi_{k}^{\top}P_{k}\phi_{k}w_{k+1}^{2},\;\;a.s.
			\end{aligned}
		\end{equation}			 
		
		We now analyze the RHS of $(\ref{VV})$ term by term.

		First, by $(\ref{wwww})$ in Appendix \ref{A}, we have  
		\begin{equation}\label{www}
		\sup\limits_{|x|\leq M_{k}, k\geq 0}\mathbb{E}_{k}\left[|S_{k}(x+e_{k+1})|^{2+\eta}\right]<\infty, \;\;a.s.
		\end{equation}
	        Since $\phi_{k}^{\top}\theta$ is $\mathcal{F}_{k}-$measurable and $|\phi_{k}^{\top}\theta|\leq M_{k},\;a.s.$ by $(\ref{Mk})$, we can easily have $\sup\limits_{k\geq 0}\mathbb{E}_{k}\left[|w_{k+1}|^{2+\eta}\right]<\infty,\; a.s.$ 
	         Thus, by  Lemma $\ref{lem2}$ in  Appendix \ref{ii}, we know that
		\begin{equation}\label{e1}
			\begin{aligned}
				\sum_{k=1}^{n}2a_{k}\psi_{k}w_{k+1}
				=o(\sum_{k=1}^{n}a_{k}\psi_{k}^{2})+O(1),\;\;\;a.s.,\;\;\forall \gamma >0,
			\end{aligned}
		\end{equation} 
		where we have used the fact that $\sup\limits_{k\geq 0}\{a_{k}\}\leq\sup\limits_{k\geq 0}\{\mu_{k}^{-1}\}<\infty,a.s$. Similarly, we have
		\begin{equation}\label{ee2}
			\begin{aligned}
				&\sum_{k=1}^{n}2\mu_{k}^{-1}[\bar{\psi}_{k}-\beta_{k}\phi_{k}^{\top}(\theta-\bar{\theta}_{k})]w_{k+1}\\
				=&o(\sum_{k=1}^{n}\mu_{k}^{-1}[\bar{\psi}_{k}-\beta_{k}\phi_{k}^{\top}(\theta-\bar{\theta}_{k})]^{2})+O(1),\;\;\;\;\;a.s.	
			\end{aligned}
		\end{equation}
		
		For the last term on the RHS  of $(\ref{VV})$, 
		let  us take $X_{k}=\beta_{k}\phi_{k}$ in Lemma $\ref{lem3}$ in Appendix \ref{ii}, we get
		\begin{equation}\label{35}
			\sum_{k=0}^{n}a_{k}\beta_{k}^{2}\phi_{k}^{\top}P_{k}\phi_{k}=O(\log \lambda_{\max}(n) ),\;\;\;a.s.
		\end{equation}
		Moreover, from Lyapunov inequality,  we have for any $\delta\in (2, \min(\eta,4))$
		\begin{equation}\label{final}
			\sup_{k\geq 0} \mathbb{E}_{k}\left[\left|w_{k+1}^{2}-\mathbb{E}_{k}\left[w_{k+1}^{2}\right]\right|^{\frac{\delta}{2}}\right] < \infty, \;a.s.
		\end{equation}
		Denote $\Lambda_{n} =(\sum_{k=0}^{n}\left(a_{k}\beta_{k}^{2} \phi_{k}^{\top} P_{k} \phi_{k}\right)^{\frac{\delta}{2}})^{\frac{2}{\delta}}$, by Lemma $\ref{lem2}$ in Appendix \ref{ii} with $\alpha = \frac{\delta}{2}$, we get
		\begin{equation}\label{36}
			\begin{aligned}
				&\sum_{k=0}^{n}\mu_{k}^{-1}a_{k}\beta_{k}^{2}\phi_{k}^{\top}P_{k}\phi_{k}(w_{k+1}^{2}-	\mathbb{E}_{k}\left[w_{k+1}^{2}\right])\\
				=&O\left(\Lambda_{n} \log^{\frac{1}{2}+\gamma} (\Lambda_{n}^{2}+e)\right)\\
				=&o(\log \lambda_{\max}(n) )+O(1), \quad \text { a.s. } \forall \gamma>0.
			\end{aligned}
		\end{equation}
		Hence, from $(\ref{35})$ and $(\ref{36})$
		\begin{equation}\label{39}
			\begin{aligned}
				& \sum_{k=0}^{n}\mu_{k}^{-1} a_{k}\beta_{k}^{2} \phi_{k}^{\top} P_{k} \phi_{k} w_{k+1}^{2} \\
				\leq& \sum_{k=0}^{n}\mu_{k}^{-1}a_{k}\beta_{k}^{2}\phi_{k}^{\top}P_{k}\phi_{k}\left(w_{k+1}^{2}-	\mathbb{E}_{k}\left[w_{k+1}^{2}\right]\right)\\
				&+ \sup_{k\geq 0}\mathbb{E}_{k}\left[w_{k+1}^{2}\right]\left( \sum_{k=0}^{n}\mu_{k}^{-1}a_{k}\beta_{k}^{2}\phi_{k}^{\top}P_{k}\phi_{k}\right) \\
				=& O(\log \lambda_{\max}(n) )\quad \text { a.s. }
			\end{aligned}
		\end{equation}
		The analysis of the third term on the RHS of $(\ref{VV})$ is a key feature for analyzing the TQSN algorithm since in the single-step algorithm this term does not exist by the construction of the scalar adaptation gain. Now let
		\begin{equation}
			\zeta_{k}=\frac{\bar{\psi}_{k}}{\phi_{k}^{\top}\tilde{\bar{\theta}}_{k}}I_{\{\phi_{k}^{\top}\tilde{\bar{\theta}}_{k}\not=0\}}+\underline{g}_{k}I_{\{\phi_{k}^{\top}\tilde{\bar{\theta}}_{k}=0\}},
		\end{equation}
by $(\ref{g}$), we then have
\begin{equation}
0<\zeta_{k}\leq\overline{g}_{k},\;\;0<\beta_{k}\leq\overline{g}_{k},\;\;a.s.
\end{equation}
Hence, we can obtain that
		\begin{equation}\label{fr}
				\begin{aligned}
			&\sum_{k=1}^{n}\mu_{k}^{-1}(\bar{\psi}_{k}-\beta_{k}\phi_{k}^{\top}\tilde{\bar{\theta}}_{k})^{2}\\
			=&\sum_{k=1}^{n}\mu_{k}^{-1}(\zeta_{k}-\beta_{k})^{2}(\phi_{k}^{\top}\tilde{\bar{\theta}}_{k})^{2}=O(\sum_{k=1}^{n}(\phi_{k}^{\top}\tilde{\bar{\theta}}_{k})^{2}),
		\end{aligned}
		\end{equation}
		where we have used the fact that $|\zeta_{k}-\beta_{k}|\leq \sup\limits_{k\geq 0}\overline{g}_{k}<\infty$. We now prove 
		\begin{equation}\label{ffr}
			\sum_{k=1}^{n}(\phi_{k}^{\top}\tilde{\bar{\theta}}_{k})^{2}=O(\log \lambda_{\max}(n)).
		\end{equation}
		For this, we consider the Lyapunov function 
		\begin{equation}\label{558}
			\bar{V}_{k+1}=\tilde{\bar{\theta}}_{k+1}^{\top}\bar{P}_{k+1}^{-1}\tilde{\bar{\theta}}_{k+1},
		\end{equation}
Similarly to $(\ref{61})$, we have the following property:
	\begin{equation}\label{su}
		\begin{aligned}
			\bar{V}_{n+1}\leq & \bar{V}_{0}-\sum_{k=0}^{n}(\bar{\beta}_{k}\tilde{\bar{\theta}}_{k}^{\top}\phi_{k}\bar{\psi}_{k}-\bar{a}_{k}\bar{\beta}_{k}^{2}\phi_{k}^{\top}\bar{P}_{k}\phi_{k}\bar{\psi}_{k}^{2})\\
			&+\sum_{k=0}^{n}\bar{a}_{k}\bar{\beta}_{k}^{2}\phi_{k}^{\top}\bar{P}_{k}\phi_{k}\mathbb{E}_{k}\left[w_{k+1}^{2}\right]\\
			&-2\sum_{k=0}^{n}(\bar{\beta}_{k}\phi_{k}^{\top}\tilde{\bar{\theta}}_{k}-\bar{a}_{k}\bar{\beta}_{k}^{2}\bar{\psi}_{k}\phi_{k}^{\top}\bar{P}_{k}\phi_{k})w_{k+1}\\
			&+\sum_{k=0}^{n}\bar{a}_{k}\bar{\beta}_{k}^{2}\phi_{k}^{\top}\bar{P}_{k}\phi_{k}(w_{k+1}^{2}-\mathbb{E}_{k}\left[w_{k+1}^{2}\right]),\;\;a.s.
		\end{aligned}
	\end{equation} 
By the definition of $\bar{\beta}_{k}$,  we have $|\bar{\beta}_{k}\phi_{k}^{\top}\tilde{\bar{\theta}}_{k}-\bar{a}_{k}\bar{\beta}_{k}^{2}\bar{\psi}_{k}\phi_{k}^{\top}\bar{P}_{k}\phi_{k}|\leq |\bar{\psi}_{k}|$. Besides, following the similar analysis for the noise term as  in $(\ref{e1})$-$(\ref{39})$, we will have
\begin{equation}\label{bb}
		\begin{aligned}
&\bar{V}_{n+1}+\sum_{k=0}^{n}(\bar{\beta}_{k}\tilde{\bar{\theta}}_{k}^{\top}\phi_{k}\bar{\psi}_{k}-\bar{a}_{k}\bar{\beta}_{k}^{2}\phi_{k}^{\top}\bar{P}_{k}\phi_{k}\bar{\psi}_{k}^{2})\\
=&O(\log \lambda_{\max}(n)),\;a.s.
	\end{aligned}
\end{equation}
Also by the definition of $\bar{\beta}_{k}$ in $(\ref{be1})$ and $\bar{\psi}_{k}$ in $(\ref{24})$, we have
\begin{equation}
	\bar{\psi}_{k}^{2}\geq\underline{g}_{k}^{2}(\phi_{k}^{\top}\tilde{\bar{\theta}}_{k})^{2}\geq \bar{\beta}_{k}^{2}(\phi_{k}^{\top}\tilde{\bar{\theta}}_{k})^{2},\;a.s.
\end{equation}
and 
	\begin{equation}\label{com}
			\begin{aligned}
	&(\bar{\beta}_{k}\tilde{\bar{\theta}}_{k}^{\top}\phi_{k}\bar{\psi}_{k}-\bar{a}_{k}\bar{\beta}_{k}^{2}\phi_{k}^{\top}\bar{P}_{k}\phi_{k}\bar{\psi}_{k}^{2})\\
	\geq & \frac{1}{2}\bar{a}_{k}\bar{\beta}_{k}\phi_{k}^{\top}\tilde{\bar{\theta}}_{k}\bar{\psi}_{k}	\geq  \frac{1}{2}\bar{a}_{k}\bar{\beta}_{k}^{2}(\phi_{k}^{\top}\tilde{\bar{\theta}}_{k})^{2},\;a.s.
		\end{aligned}
\end{equation}
Moreover, Since $\{\bar{\beta}_{k}\}$ and $\{\phi_{k}\}$ are bounded, we obtain that 
\begin{equation}\label{aa}
	\begin{aligned}
		\inf_{k \geq 0}\{\bar{a}_{k}\} &\geq \inf_{k \geq 0}\{\frac{1}{1+\bar{\beta}_{k}^{2}\phi_{k}^{\top}\bar{P}_{0}\phi_{k}}\}>0, \;\;a.s. .
	\end{aligned}
\end{equation} 
Note that $\{\bar{\beta}_{k}\}$ has  a  positive lower bounded almost surely, $(\ref{ffr})$ can be obtained  by $(\ref{bb})$  $(\ref{com})$ and $(\ref{aa})$ .
	
		Finally, combining $(\ref{VV})$, $(\ref{e1})$, $(\ref{ee2})$, $(\ref{39})$, $(\ref{fr})$ and $(\ref{ffr})$, we get the desired result $(\ref{22})$.
	\end{proof}		

\noindent\hspace{2em}{\itshape Proof of Theorem $\ref{thm2}$ and Theorem $\ref{thm1}$: }	
For the proof of Theorem $\ref{thm2}$, by $(\ref{v})$ and $(\ref{P-1})$, we have
\begin{equation}
\begin{aligned}
V_{n+1}
 \geq \epsilon_{0}\lambda_{\min}\{\sum_{k=0}^{n}\phi_{k}\phi_{k}^{\top}+P_{0}^{-1}\}\|\tilde{\theta}_{n+1}\|^{2}, \;\;a.s.,
 \end{aligned}
 \end{equation}
 where $\epsilon_{0}=\min\{1, \inf\limits_{k\geq 0}(\mu_{k}^{-1}\underline{g}_{k}^{2})\}$, which is positive by $(\ref{zh})$. Hence,
 Theorem $\ref{thm2}$ follows immediately from Lemma $\ref{lem6}$ and $\epsilon_{0}>0$.

 For the proof of Theorem $\ref{thm1}$,  from the definition of $\psi_{k}$ in $(\ref{psi})$, we have 
	\begin{equation}\label{pssi}
	\psi_{k}^{2}\geq \underline{g}_{k}^{2}(\phi_{k}^{\top}\tilde{\theta}_{k})^{2},
	\end{equation}
	 where $\inf\limits_{k\geq 0}\underline{g}_{k}>0$ by $(\ref{zh})$. Besides, Since $\{\mu_{k}\}$, $\{\beta_{k}\}$ and $\{\phi_{k}\}$ are bounded, we obtain that 
	\begin{equation}
			\begin{aligned}
	 \inf_{k \geq 0}a_{k}=\inf_{k \geq 0}\frac{1}{\mu_{k}+\beta_{k}^{2}\phi_{k}^{\top}P_{k}\phi_{k}}>0, \;\;a.s. 
	\end{aligned}
	 	\end{equation}
	 		Thus  Theorem $\ref{thm1}$ also follows from Lemma $\ref{lem6}$.
	\hspace*{\fill}~\QED\par\endtrivlist\unskip
	\subsection{ Proof of Theorem $\ref{thm3}$.}
	
	\begin{lemma}\label{lemp}
		Let $\{X_{i},  i=1,2,\cdots\}$ be a sequence of random variables in $\mathbb{R}^{p} (p\geq 1)$, and $\{a_{i},  i=1,2,\cdots\}$ be a sequence of random variables in $\mathbb{R}$. Also, let $A_{n}=\sum_{i=1}^{n}X_{i}X_{i}^{\top}+X_{0}$. If
		\begin{equation}
			\begin{aligned}
				\lambda_{max}\{A_{n}^{-\frac{1}{2}}\}\rightarrow 0,\;\;\;
				a_{i}\rightarrow 1,\;\;\;a.s.
			\end{aligned}
		\end{equation} 
		then
		\begin{equation}
			A_{n}^{-\frac{1}{2}}(\sum_{i=1}^{n}X_{i}X_{i}^{\top}a_{i}^{2}+X_{0})A_{n}^{-\frac{1}{2}}\rightarrow I,\;\;\;a.s.		\end{equation}	 		
	\end{lemma}	
	\begin{proof}	
		For every $m\geq 1$, $n\geq m$, we obtain that
		\begin{equation}\label{xx}
			\begin{aligned}
				&\|A_{n}^{-\frac{1}{2}}(\sum_{i=1}^{n}X_{i}X_{i}^{\top}a_{i}^{2}+X_{0})A_{n}^{-\frac{1}{2}}-I\|	\\
				= & \|A_{n}^{-\frac{1}{2}}(\sum_{i=1}^{n}X_{i}X_{i}^{\top}(a_{i}^{2}-1))	A_{n}^{-\frac{1}{2}}\|	\\
				\leq & \|A_{n}^{-\frac{1}{2}}(\sum_{i=1}^{m}X_{i}X_{i}^{\top}|a_{i}^{2}-1|)A_{n}^{-\frac{1}{2}}\|\\
				&+\|A_{n}^{-\frac{1}{2}}(\sum_{i=m}^{n}X_{i}X_{i}^{\top})A_{n}^{-\frac{1}{2}}\|\cdot \sup_{m+1 \leq i \leq n}|a_{i}^{2}-1|,\;\;a.s.,
			\end{aligned}
		\end{equation}
		Let $n\rightarrow \infty$ and $m\rightarrow \infty$, since $a_{i}\rightarrow 1$ almost surely, we have the RHS of $(\ref{xx})$ converges to 0 almost surely. Lemma $\ref{lemp}$ thus be proven.
	\end{proof}	
	
	\begin{lemma}\label{lempp}
		Under Assumptions $\ref{assum1}$-$\ref{assum4}$ and condition $(\ref{re})$, let
				\begin{equation}\label{sss} s_{k}=\hat{\theta}_{k}+a_{k}\beta_{k}P_{k}\phi_{k}[y_{k+1}-G_{k}(\phi_{k}^{\top}\hat{\theta}_{k})],		\end{equation} 	
		and let $\mathcal{A}_{k}=\{s_{k}\not\in D\}$. Then
		\begin{equation}
			P\{\omega: \omega \in \mathcal{A}_{k}, i.o.\}=0,
		\end{equation}
	where $i.o.$ means the related event occurs infinitely often.
	\end{lemma}	
\begin{proof}
From Assumption $\ref{assum1}$, there exists a ball centered at $\theta$ with radius $r>0$, such that $B(\theta, r) \subseteq   int(D)$, the interior of the set $D$.  Since by Remark $\ref{re77}$ the estimate $\hat{\theta}_{k}$ is strongly consistent,  there exists a random integer  $N$ such that for any $k\geq N$,
	\begin{equation}\label{bbb}
	 \hat{\theta}_{k} \in B(\theta, r)\subseteq   int(D),\;\;a.s.
	 \end{equation}

	Now,  let 
	\begin{equation}
	\mathcal{H}=\{\omega: \lim\limits_{k\rightarrow\infty}\frac{\log \lambda_{\max}(k)}{\lambda_{\min}(k)}=0\}.
	\end{equation}
	For any $\omega_{0} \in \mathcal{H}$, we prove that $\omega_{0}\not \in \{\omega: \omega \in \mathcal{A}_{k}, i.o.\}$ by contradiction.  If $\omega_{0}\in \{\omega: \omega \in \mathcal{A}_{k}, i.o.\}$, then there exists a $k_{0}>N$, such that $s_{k_{0}}\not\in D$. Thus $\hat{\theta}_{k_{0}+1}=\Pi_{P_{k_{0}+1}^{-1}}\{s_{k_{0}}\}\not\in int(D)$, which contradicts with $(\ref{bbb})$. Hence 	 $\omega_{0} \not\in \{\omega: \omega \in \mathcal{A}_{k}, i.o.\}$, and thus $\mathcal{H}\subseteq \{\omega: \omega \in \mathcal{A}_{k}, i.o.\}^{c}$,  which means $P\{\omega: \omega \in\mathcal{A}_{k}, i.o.\}=0$.
\end{proof}	

We now give the proof of Theorem $\ref{thm3}$.

	\noindent\hspace{2em}{\itshape Proof of Theorem \ref{thm3}: }
	Let $p_{i}=\frac{\beta_{i}}{\sqrt{\sigma_{i}(\phi_{i}^{\top}\hat{\theta}_{i})}}$, $q_{i}=\frac{G'_{i}(\phi_{i}^{\top}\theta)}{\sqrt{\sigma_{i}(\phi_{i}^{\top}\theta)}}$. Then by $(\ref{P-1})$ we have $P_{n+1}=(\sum_{i=1}^{n}p_{i}^{2}\phi_{i}\phi_{i}^{\top}+P_{0}^{-1})^{-1}$, $Q_{n+1}=(\sum_{i=1}^{n}q_{i}^{2}\phi_{i}\phi_{i}^{\top}+P_{0}^{-1})^{-1}$. We first prove that $q_{i}/p_{i}\rightarrow 1, \;a.s.$ For this, we need only to show $\frac{\sigma_{k}(\phi_{k}^{\top}\bar{\theta}_{k})}{\sigma_{k}(\phi_{k}^{\top}\theta)}\rightarrow 1,\;a.s.$ and $\frac{\beta_{k}}{G'_{k}(\phi_{k}^{\top}\theta)}\rightarrow 1,\;a.s$. Notice that
		\begin{equation}\label{s}
		\begin{aligned}
			&\|\frac{\sigma_{k}(\phi_{k}^{\top}\hat{\theta}_{k})}{\sigma_{k}(\phi_{k}^{\top}\theta)}-1\|
			=\|\frac{\sigma_{k}(\phi_{k}^{\top}\hat{\theta}_{k})-\sigma_{k}(\phi_{k}^{\top}\theta)}{\sigma_{k}(\phi_{k}^{\top}\theta)}\|,\;\;\;a.s.
		\end{aligned}	
	\end{equation}	
Let $T_{k}=S_{k}(\phi_{k}^{\top}\hat{\theta}_{k}+e_{k+1})-S_{k}(\phi_{k}^{\top}\theta+e_{k+1})$ and  $B_{k}=\{\omega:\phi_{k}^{\top}\hat{\theta}_{k}\geq \phi_{k}^{\top}\theta \}$, we have $I_{B_{k}}$ is $\mathcal{F}_{k}-$measurable. By Assumption $\ref{assum2}$, it is not difficult to obtain that
\begin{equation}
\begin{aligned}
	0<|T_{k}|
	\leq2c+|\phi_{k}^{\top}\tilde{\theta}_{k}|\leq 2(c+M_{k})=O(1),\;\;\;a.s.
	\end{aligned}
\end{equation}
Moreover, by the fact that  $S_{k}(\cdot)$ is monotonically increasing and Assumption $\ref{assum4}$, we have
\begin{equation}
		\begin{aligned}
	&\mathbb{E}_{k}\left[|T_{k}|^{2}\right]\leq 2(c+M_{k})\mathbb{E}_{k}\left[|T_{k}|\right]\\
	\leq &2(c+M_{k})\left[\mathbb{E}_{k}\left[T_{k}I_{B_{k}}\right]+\mathbb{E}_{k}\left[-T_{k}I_{B_{k}^{c}}\right]\right]\\
	=&2(c+M_{k})[G_{k}(\phi_{k}^{\top}\hat{\theta}_{k})-G_{k}(\phi_{k}^{\top}\theta)]I_{B_{k}} \\
	&+	2(c+M_{k})[G_{k}(\phi_{k}^{\top}\theta)-G_{k}(\phi_{k}^{\top}\hat{\theta}_{k})]I_{B_{k}^{c}} \\
	=&2(c+M_{k})|G_{k}(\phi_{k}^{\top}\hat{\theta}_{k})-G_{k}(\phi_{k}^{\top}\theta)|(I_{B_{k}}+I_{B_{k}^{c}})\\
	\leq&2(c+M_{k})\overline{g}_{k}|\phi_{k}^{\top}\tilde{\theta}_{k}|=O(|\phi_{k}^{\top}\tilde{\theta}_{k}|),\;\;\;a.s.
		\end{aligned}
\end{equation} 
Thus by Theorem $\ref{thm2}$ and the condition $(\ref{phi})$,  we have 
\begin{equation}\label{oooo}
	\mathbb{E}_{k}\left[|T_{k}|^{2}\right]=o(1),\;\;\;a.s.
\end{equation}
Furthermore, by $(\ref{wwww})$ in Appendix \ref{A}, we have 
\begin{equation}\label{oo}
\mathbb{E}_{k}\left[[S_{k}(\phi_{k}^{\top}\hat{\theta}_{k}+e_{k+1})+S_{k}(\phi_{k}^{\top}\theta+e_{k+1})]^{2}\right]=O(1),\;a.s.	
\end{equation}
Hence, by Cauchy-Schwarz inequality, $(\ref{oooo})$ and $(\ref{oo})$, we have
\begin{equation}
		\begin{aligned}
	&\left|\mathbb{E}_{k}[S_{k}^{2}(\phi_{k}^{\top}\hat{\theta}_{k}+e_{k+1})-S_{k}^{2}(\phi_{k}^{\top}\theta+e_{k+1})]\right|^{2}\\
	\leq& \mathbb{E}_{k}\left[|S_{k}(\phi_{k}^{\top}\hat{\theta}_{k}+e_{k+1})-S_{k}(\phi_{k}^{\top}\theta+e_{k+1})|^{2}\right]\cdot \\
	&\mathbb{E}_{k}\left[|S_{k}(\phi_{k}^{\top}\hat{\theta}_{k}+e_{k+1})+S_{k}(\phi_{k}^{\top}\theta+e_{k+1})|^{2}\right]\\
	= & o(1),\;a.s.
	\end{aligned}
\end{equation}
Therefore, by the definition of $\sigma_{k}(\cdot)$, we have
\begin{equation}\label{dddd}
		\begin{aligned}
	&|\sigma_{k}(\phi_{k}^{\top}\hat{\theta}_{k})-\sigma_{k}(\phi_{k}^{\top}\theta)|\\
	\leq&|\mathbb{E}_{k}[S_{k}^{2}(\phi_{k}^{\top}\hat{\theta}_{k}+e_{k+1})-S_{k}^{2}(\phi_{k}^{\top}\theta+e_{k+1})]|\\
	&+|G_{k}^{2}(\phi_{k}^{\top}\hat{\theta}_{k})-G_{k}^{2}(\phi_{k}^{\top}\theta)|=o(1),\;\;a.s.
		\end{aligned}
\end{equation}
Hence by $(\ref{muu})$, $(\ref{s})$ and  $(\ref{dddd})$, we obtain 
\begin{equation}\label{fin}
\frac{\sigma_{k}(\phi_{k}^{\top}\hat{\theta}_{k})}{\sigma_{k}(\phi_{k}^{\top}\theta)}\rightarrow 1,\;\;a.s.
\end{equation}
	Besides, from Lagrange mean value theorem and the definition of $\beta_{k}$, we can easily have 
	\begin{equation}
		|\beta_{k}-G'_{k}(\phi_{k}^{\top}\theta)|\leq \rho\max(|\phi_{k}^{\top}\tilde{\theta}_{k}|, |\phi_{k}^{\top}\tilde{\bar{\theta}}_{k}|))
	\end{equation}
	where $\rho$ is the Lipschitz constant of $G'_{k}(\cdot)$ as defined in Assumption $\ref{assum4}$. Thus by $(\ref{phi})$, $(\ref{22})$ and $(\ref{bb})$, we have
	\begin{equation}\label{bet}
		\begin{aligned}
			&|\frac{\beta_{k}}{G'_{k}(\phi_{k}^{\top}\theta)}-1|=|\frac{\beta_{k}-G'_{k}(\phi_{k}^{\top}\theta)}{G'_{k}(\phi_{k}^{\top}\theta)}|\\
			\leq&\frac{\rho\max(|\phi_{k}^{\top}\tilde{\theta}_{k}|, |\phi_{k}^{\top}\tilde{\bar{\theta}}_{k}|))}{\underline{g}_{k}}
			=O(\sqrt{\frac{\log k}{\lambda_{\min}(k)}})=o(1).\;\;\;a.s.
		\end{aligned}
	\end{equation}
	From $(\ref{fin})$ and $(\ref{bet})$, we finally have $q_{i}/p_{i} \rightarrow 1$ almost surely. Hence by $(\ref{phi})$ and Lemma \ref{lemp}, we obtain that
	
	\begin{equation}\label{PQ}
		Q_{k+1}^{-\frac{1}{2}}P_{k+1}Q_{k+1}^{-\frac{1}{2}} \rightarrow I,\;\;a.s.
	\end{equation}
	Moreover, from $(\ref{sss})$ and $(\ref{aalpha})$, we have
	\begin{equation}
			\begin{aligned}
		\theta-s_{k}=&	\theta-\hat{\theta}_{k}-a_{k}\beta_{k}P_{k}\phi_{k}[y_{k+1}-G_{k}(\phi_{k}^{\top}\hat{\theta}_{k})]\\
		=&\tilde{\theta}_{k}-a_{k}\beta_{k}P_{k}\phi_{k}(\psi_{k}+w_{k+1})\\
		=&(I-\mu_{k}^{-1}\beta_{k}\xi_{k}P_{k+1}\phi_{k}\phi_{k}^{\top})\tilde{\theta}_{k}-\mu_{k}^{-1}\beta_{k}P_{k+1}\phi_{k}w_{k+1},
				\end{aligned}
	\end{equation}	
where  $w_{k+1}$ is defined in $(\ref{star})$, and
\begin{equation} \xi_{k}=\frac{\psi_{k}}{\phi_{k}^{\top}\tilde{\theta}_{k}}I_{\{\phi_{k}^{\top}\tilde{\theta}_{k}\not=0\}}+G'(\phi_{k}^{\top}\hat{\theta}_{k})I_{\{\phi_{k}^{\top}\tilde{\theta}_{k}=0\}}.
\end{equation}	
Furthermore, from $(\ref{be2})$, we have
	\begin{equation}
		\begin{aligned}
			\hat{\theta}_{k+1}=&s_{k}I_{\{s_{k}\in D\}}+\Pi_{P_{k+1}^{-1}}\{s_{k}\}\cdot I_{\{s_{k}\not\in D\}}\\
			=&s_{k}-(s_{k}-\Pi_{P_{k+1}^{-1}}\{s_{k}\})I_{\{s_{k}\not\in D\}}.			
		\end{aligned}
	\end{equation}

	Thus, we obtain that
		\begin{equation}\label{Pr}
		\begin{aligned}
			\tilde{\theta}_{k+1}	=&\theta-s_{k}+(s_{k}-\Pi_{P_{k+1}^{-1}}\{s_{k}\})I_{\{s_{k}\not\in D\}}\\		
			=&(I-\mu_{k}^{-1}\beta_{k}\xi_{k}P_{k+1}\phi_{k}\phi_{k}^{\top})\tilde{\theta}_{k}-\mu_{k}^{-1}\beta_{k}P_{k+1}\phi_{k}w_{k+1}\\
			&+(s_{k}-\Pi_{P_{k+1}^{-1}}\{s_{k}\})I_{\{s_{k}\not\in D\}}\\	
			=&(I-\mu_{k}^{-1}\beta_{k}^{2}P_{k+1}\phi_{k}\phi_{k}^{\top})\tilde{\theta}_{k}-\mu_{k}^{-1}\beta_{k}P_{k+1}\phi_{k}w_{k+1}\\
			&-\mu_{k}^{-1}\beta_{k}(\xi_{k}-\beta_{k})P_{k+1}\phi_{k}\phi_{k}^{\top}\tilde{\theta}_{k}\\
				&+(s_{k}-\Pi_{P_{k+1}^{-1}}\{s_{k}\})I_{\{s_{k}\not\in D\}}\\
			=&P_{k+1}P_{k}^{-1}\tilde{\theta}_{k}-P_{k+1}\mu_{k}^{-1}\beta_{k}(\xi_{k}-\beta_{k})\phi_{k}\phi_{k}^{\top}\tilde{\theta}_{k}\\
			&-P_{k+1}\mu_{k}^{-1}\beta_{k}\phi_{k}w_{k+1}
			+(s_{k}-\Pi_{P_{k+1}^{-1}}\{s_{k}\})I_{\{s_{k}\not\in D\}}.
		\end{aligned}
	\end{equation}
From $(\ref{Pr})$, we have
	\begin{equation}\label{kkk}
			\begin{aligned}
		&P_{k+1}^{-1}\tilde{\theta}_{k+1}\\
		=&P_{k}^{-1}\tilde{\theta}_{k}-\mu_{k}^{-1}\beta_{k}\phi_{k}w_{k+1}-\mu_{k}^{-1}\beta_{k}(\xi_{k}-\beta_{k})\phi_{k}\phi_{k}^{\top}\tilde{\theta}_{k}\\
		&+P_{k+1}^{-1}(s_{k}-\Pi_{P_{k+1}^{-1}}\{s_{k}\})I_{\{s_{k}\not\in D\}}\\
		=&P_{0}^{-1}\tilde{\theta}_{0}-\sum_{i=0}^{k}\mu_{i}^{-1}\beta_{i}\phi_{i}w_{i+1}-\sum_{i=0}^{k}\mu_{i}^{-1}\beta_{i}(\xi_{i}-\beta_{i})\phi_{i}\phi_{i}^{\top}\tilde{\theta}_{i}\\
		&+\sum_{i=0}^{k}P_{i+1}^{-1}(s_{i}-\Pi_{P_{i+1}^{-1}}\{s_{i}\})I_{\{s_{i}\not\in D\}}.
			\end{aligned}
		\end{equation}	
	Therefore,  multiplying $Q_{k+1}^{-\frac{1}{2}}P_{k+1}$ from the left, we have
	\begin{equation}\label{P}
		\begin{aligned}
			Q_{k+1}^{-\frac{1}{2}}\tilde{\theta}_{k+1}
			=&Q_{k+1}^{-\frac{1}{2}}P_{k+1}P_{0}^{-1}\tilde{\theta}_{0}-Q_{k+1}^{-\frac{1}{2}}P_{k+1 }\sum_{i=0}^{k}\mu_{i}^{-1}\beta_{i}\phi_{i}w_{i+1}\\
			&-Q_{k+1}^{-\frac{1}{2}}P_{k+1}\sum_{i=0}^{k}\mu_{i}^{-1}\beta_{i}(\xi_{i}-\beta_{i})\phi_{i}\phi_{i}^{\top}\tilde{\theta}_{i}\\
			&+Q_{k+1}^{-\frac{1}{2}}P_{k+1}\sum_{i=0}^{k}P_{i+1}^{-1}(s_{i}-\Pi_{P_{i+1}^{-1}}\{s_{i}\})I_{\{s_{i}\not\in D\}}.
		\end{aligned}
	\end{equation}
	We now proceed to show that the main term on the RHS of $(\ref{P})$ is the second term and other terms can be neglected asymptotically.  First, by $(\ref{PQ})$ and the fact that $\|P_{k+1}^{\frac{1}{2}}\|\rightarrow 0$  almost surely by $(\ref{phi})$, we have
	\begin{equation}\label{k1}
		\|Q_{k+1}^{-\frac{1}{2}}P_{k+1}P_{0}^{-1}\tilde{\theta}_{0}\|\leq \|Q_{k+1}^{-\frac{1}{2}}P_{k+1}^{\frac{1}{2}}\| \|P_{k+1}^{\frac{1}{2}}\|\|P_{0}^{-1}\tilde{\theta}_{0}\|\rightarrow 0,\;a.s.
	\end{equation}
Next, by Lemma $\ref{lempp}$, for any $\omega \in \mathcal{H}$, we have $I_{\{s_{i}\not\in D\}}\not=0$ for only finite number of $i$. Therefore, $\|\sum_{i=0}^{k}P_{i+1}^{-1}(s_{i}-\Pi_{P_{i+1}^{-1}}\{s_{i}\})I_{\{s_{i}\not\in D\}}\|$ is finite almost surely as $k$ tends to infinity. Hence
\begin{equation}\label{k2}
			\begin{aligned}
	&\|Q_{k+1}^{-\frac{1}{2}}P_{k+1}\sum_{i=0}^{k}P_{i+1}^{-1}(s_{i}-\Pi_{P_{i+1}^{-1}}\{s_{i}\})I_{\{s_{i}\not\in D\}}\|\\
	\leq & \|Q_{k+1}^{-\frac{1}{2}}P_{k+1}^{\frac{1}{2}}\| \|P_{k+1}^{\frac{1}{2}}\| \|\sum_{i=0}^{k}P_{i+1}^{-1}(s_{i}-\Pi_{P_{i+1}^{-1}}\{s_{i}\})I_{\{s_{i}\not\in D\}}\|\\
	&\rightarrow 0,\;\;a.s.,
			\end{aligned}
\end{equation}
	 Moreover, from Lagrange mean value theorem, for any $k \geq 0$, there exist $\iota_{k} \in R$ and $\kappa_{k} \in R$, where $\iota_{k}$ is between $\phi_{k}^{\top}\theta$ and $\phi_{k}^{\top}\hat{\theta}_{k}$, $\kappa_{k}$ is between $\phi_{k}^{\top}\bar{\theta}_{k}$ and $\phi_{k}^{\top}\hat{\theta}_{k}$,  such that $\xi_{k}=G_{k}'(\iota_{k})$ and $\beta_{k}=G_{k}'(\kappa_{k})$. Therefore, we obtain that
	 \begin{equation}\label{73}
	 	\begin{aligned}
	 		|\xi_{k}-\beta_{k}|&=|G_{k}'(\iota_{k})-G_{k}'(\kappa_{k})|\leq \rho|\iota_{k}-\kappa_{k}|\\
	 		&\leq \rho|\iota_{k}-\phi_{k}^{\top}\theta|+\rho|\phi_{k}^{\top}\theta-\kappa_{k}|\\	
	 		&\leq 2\rho\cdot\max(|\phi_{k}^{\top}\tilde{\bar{\theta}}_{k}|,|\phi_{k}^{\top}\tilde{\theta}_{k}|),
	 	\end{aligned}
	 \end{equation}
	 where $\rho$ is the lipschitz constant as defined in Assumption $\ref{assum4}$.	Thus, we have
	 	\begin{equation}\label{lei}
		\begin{aligned}
			&\|Q_{k+1}^{-\frac{1}{2}}P_{k+1}\sum_{i=0}^{k}\mu_{i}^{-1}\beta_{i}(\xi_{i}-\beta_{i})\phi_{i}\phi_{i}^{\top}\tilde{\theta}_{i}\|\\
			=&O( \|Q_{k+1}^{-\frac{1}{2}}P_{k+1}^{\frac{1}{2}}\|\cdot \|P_{k+1}^{\frac{1}{2}}\| \sum_{i=0}^{k}\|\phi_{i}^{\top}\tilde{\bar{\theta}}_{i}\|^{2})\\
			&+O( \|Q_{k+1}^{-\frac{1}{2}}P_{k+1}^{\frac{1}{2}}\|\cdot \|P_{k+1}^{\frac{1}{2}}\| \sum_{i=0}^{k}\|\phi_{i}^{\top}\tilde{\theta}_{i}\|^{2}),
		\end{aligned}	
	\end{equation}
	where we have used the fact that $\mu_{i}^{-1}\beta_{i}\|\phi_{i}\|$ are bounded from above. By $(\ref{ffr})$, $(\ref{22})$ and the condition $(\ref{phi})$,  we conclude that the RHS of the $(\ref{lei})$ tends to $0$ almost surely. Similar to the reasons explained in the proof of Lemma 1, the analysis of the third term on the RHS of $(\ref{P})$ is an essential feature of the current two-step identification algorithm.
	
	Now, it only remains to consider the second term of the RHS of $(\ref{P})$. By $(\ref{PQ})$ and $(\ref{R1})$, we have 
	\begin{equation}\label{noi}
	Q_{k+1}^{-\frac{1}{2}}P_{k+1}\Delta_{k+1}=(Q_{k+1}^{-\frac{1}{2}}P_{k+1}Q_{k+1}^{-\frac{1}{2}})(Q_{k+1}^{\frac{1}{2}}\Delta_{k+1})\mathop{\rightarrow}\limits^{p}I.
	\end{equation}
	Hence by  $(\ref{noi})$,  we need only to show that
	\begin{equation}\label{R}
		\Delta_{k+1}^{-1}\sum_{i=0}^{k}\mu_{i}^{-1}\beta_{i}\phi_{i}w_{i+1}\mathop{\rightarrow}\limits^{d}N(0,I).
	\end{equation}
	We now prove that for any non-random $m\times 1$ vector $v$, 
	\begin{equation}\label{c}
		v^{\top}\Delta_{k+1}^{-1}\sum_{i=0}^{k}\mu_{i}^{-1}\beta_{i}\phi_{i}w_{i+1}\mathop{\rightarrow}\limits^{d}N(0,\|v\|^{2}),
	\end{equation}
	thus by Cramer-Wold device, $(\ref{R})$ holds. 
	
	To prove $(\ref{c})$, let $x_{k,i}=v^{\top}\Delta_{k+1}^{-1}\mu_{i}^{-1}\beta_{i}\phi_{i}$ and $x'_{k,i}=x_{k,i}I\{\|x_{k,i}\|\leq 1\}$. Thus, $x'_{k,i}$ is $\mathcal{F}_{i}-$measurable and $\mathbb{E}_{i}\{x'_{k,i}w_{i+1}\}=0$. Besides, by Assumption $\ref{assum4}$ and $(\ref{wwww})$, we can easily have $x'_{k,i}w_{i+1}$ is an $\mathcal{L}_{2}$ sequence . Moreover, by $(\ref{phi})$ and $(\ref{R1})$, it is not difficult to obtain
$
\max\limits_{1\leq i \leq k}\{|x_{k,i}|\}\underset{k \rightarrow \infty}{\stackrel{p}{\longrightarrow}}0.
$	
Hence, we have $P\{\max\limits_{1\leq i \leq k}|x_{k,i}|> 1\}\underset{k \rightarrow \infty}{\longrightarrow}0,$ which means
\begin{equation}\label{pppp}
P\{x_{k,i}\not=x'_{k,i}, \text{for some} 1\leq i \leq k\}\underset{k \rightarrow \infty}{\longrightarrow}0.	
\end{equation}
Furthermore,  by definitions of $w_{i+1}$ in  $(\ref{star})$ and $\sigma_{i}(\cdot)$ in $(\ref{sig})$, and the fact that $\phi_{k}^{\top}\theta$ is $\mathcal{F}_{k}-$measurable,  we  have $\mathbb{E}[w_{i+1}^{2}\mid \mathcal{F}_{i}]=\sigma_{i}(\phi_{i}^{\top}\theta)$. Therefore, 
\begin{equation}\label{E}
	\begin{aligned}
		&\sum_{i=0}^{k}\mathbb{E}[(v^{\top}\Delta_{k+1}^{-1}\mu_{i}^{-1}\beta_{i}\phi_{i}w_{i+1})^{2}\mid \mathcal{F}_{i}]\\
		=&\sum_{i=0}^{k}v^{\top}\Delta_{k+1}^{-1}\mu_{i}^{-2}\beta_{i}^{2}\sigma_{i}(\phi_{i}^{\top}\theta)\phi_{i}\phi_{i}^{\top}\Delta_{k+1}^{-1}v\\
		=&v^{\top}\Delta_{k+1}^{-1}Q_{k}^{-\frac{1}{2}}Q_{k}^{\frac{1}{2}}[\sum_{i=0}^{k}b_{i}^{2}\frac{(G'(\phi_{i}^{\top}\theta))^{2}}{\sigma_{i}(\phi_{i}^{\top}\theta)}\phi_{i}\phi_{i}^{\top}]Q_{k}^{\frac{1}{2}}Q_{k}^{-\frac{1}{2}}\Delta_{k+1}^{-1}v\\
		&\underset{k \rightarrow \infty}{\stackrel{p}{\longrightarrow}}\|v\|^{2}. 
	\end{aligned}
\end{equation}
where the last step is from Lemma $\ref{lemp}$ and $(\ref{R1})$, and where $b_{i}=\frac{\beta_{i}\sigma_{i}(\phi_{i}^{\top}\theta)}{\sigma_{i}(\phi_{i}^{\top}\hat{\theta}_{i})G'(\phi_{i}^{\top}\theta)}$, which converges to $1$ almost surely by $(\ref{s})$ and $(\ref{bet})$. By $(\ref{pppp})$ and $(\ref{E})$, we then have 
\begin{equation}\label{fff}
	\sum_{i=0}^{k}\mathbb{E}[(x'_{k,i}w_{i+1})^{2}\mid \mathcal{F}_{i}]\underset{k \rightarrow \infty}{\stackrel{p}{\longrightarrow}}\|v\|^{2}. 
\end{equation}
Using $(\ref{fff})$ and Assumption $\ref{assum4}$, for any $\epsilon > 0$, it can be shown that
\begin{equation}
	\begin{aligned}
		&\sum_{i=1}^{k}\mathbb{E}_{i}\{(x'_{k,i}w_{i+1})^{2}I_{\{|x'_{k,i}w_{i+1}|>\epsilon\}}\}\\
		\leq&\sum_{i=1}^{k}\mathbb{E}_{i}\{(x'_{k,i}w_{i+1})^{2}\frac{|x'_{k,i}w_{i+1}|^{\eta}}{\epsilon^{\eta}}\}\\
		= &\sum_{i=1}^{k}\mathbb{E}_{i}\{|x'_{k,i}|^{2+\eta}\frac{|w_{i+1}|^{2+\eta}}{\epsilon^{\eta}}\}\\
		= &\sum_{i=1}^{k}\mathbb{E}_{i}\{(x'_{k,i}w_{i+1})^{2}\}
		\frac{|x'_{k,i}|^{\eta}\mathbb{E}_{i}\{|w_{i+1}|^{2+\eta}\}}{\epsilon^{\eta}\mathbb{E}_{i}\{|w_{i+1}|^{2}\}} \\
		= & O(\max\limits_{1\leq i \leq k}|x'_{k,i}|^{\eta}\sum_{i=0}^{k}\mathbb{E}_{i}\{(x'_{k,i}w_{i+1})^{2}\})\underset{k \rightarrow \infty}{\stackrel{p}{\longrightarrow}}0,
	\end{aligned}
\end{equation}
Hence by Lemma $\ref{lem9}$, we have
\begin{equation}\label{1022}			\sum_{i=0}^{k}x'_{k,i}w_{i+1}\mathop{\rightarrow}\limits^{d}N(0,\|v\|^{2}).
\end{equation}
Thus by $(\ref{pppp})$ and $(\ref{1022})$, $(\ref{c})$ holds. To conclude, combining $(\ref{P})$, $(\ref{k1})$, $(\ref{k2})$, $(\ref{lei})$ and $(\ref{R})$, we finally obtain the result $(\ref{theta})$.
	\hspace*{\fill}~\QED\par\endtrivlist\unskip

\noindent\hspace{2em}{\itshape Proof of Theorem $\ref{thm4}$: }	
Let 	
	\begin{equation}\label{102}                                 
		\begin{aligned}
			Y_{k+1}=&-2a_{k}\psi_{k}w_{k+1}+2\mu_{k}^{-1}(\bar{\psi}_{k}-\beta_{k}\phi_{k}^{\top}\tilde{\bar{\theta}}_{k})w_{k+1}\\
			&+\mu_{k}^{-1}a_{k}\beta_{k}^{2}\phi_{k}^{\top}P_{k}\phi_{k}\left(w_{k+1}^{2}-\mathbb{E}_{k}[w_{k+1}^{2}]\right).
		\end{aligned}
	\end{equation}
If $\{\sum\limits_{k=0}^{n}Y_{k},\mathcal{F}_{n}\}$ is a $\mathcal{L}_{2}-$martingale, by Lemma \ref{lem8} with $x=\frac{1}{18\sigma_{b}}$ and $y=18 \sigma_{b}\frac{1-\alpha}{\alpha}$, where $\sigma_{b}=\sup\limits_{0\leq k\leq N}\mu_{k}^{-1}\mathbb{E}_{k}[|w_{k+1}|^{2}],$ we can easily obtain that with at least probability $1-\alpha$,
\begin{equation}\label{sigmab}
		\begin{aligned}	
			&\sum_{k=0}^{n}Y_{k+1} \leq \frac{\sum_{k=0}^{n}\mathbb{E}_{k}[|Y_{k+1}|^{2}]}{18\sigma_{b}}+18\sigma_{b}\frac{1-\alpha}{\alpha},\;\forall 0\leq n\leq N.
		\end{aligned}
	\end{equation}
	If $\{\sum\limits_{k=0}^{n}Y_{k},\mathcal{F}_{n}\}$ is not a $\mathcal{L}_{2}-$martingale, let $x_{k}=\mu_{k}^{-1}(\bar{\psi}_{k}-\beta_{k}\phi_{k}^{\top}\tilde{\bar{\theta}}_{k})-a_{k}\psi_{k}$, $x'_{k}=\mu_{k}^{-1}a_{k}\beta_{k}^{2}\phi_{k}^{\top}P_{k}\phi_{k}.$ For $j=1,2,\cdots,$ let $x_{j,k}=x_{k}I_{\{|x_{k}|\leq j\}}$, $x'_{j,k}=x'_{k}I_{\{|x'_{k}|\leq j\}}$ and $Y'_{j,k+1}=x_{j,k}w_{k+1}+x'_{j,k}\left(w_{k+1}^{2}-\mathbb{E}_{k}[w_{k+1}^{2}]\right).$ Thus, $\{\sum_{k=0}^{n}Y'_{j,k},\mathcal{F}_{n}\}$ is an $\mathcal{L}_{2}-$martingale for $j=1,2,\cdots$. Hence, $(\ref{sigmab})$ can also be obtained from the fact that $\{x_{k}\}$ and $\{x'_{k}\}$ are bounded sequence almost surely and
$\lim\limits_{j\rightarrow \infty} P\{x_{j,k}\not=x_{k}\; \text{or}\; x'_{j,k}\not=x'_{k}, \text{for some}\; k\geq 0\}=0.$

Notice that
	\begin{equation}\label{mar}
		\begin{aligned}
			&\mathbb{E}[|Y_{k+1}|^{2}\mid \mathcal{F}_{k}]\leq 9a_{k}\mu_{k}^{-1}\psi_{k}^{2}\mathbb{E}_{k}[|w_{k+1}|^{2}]\\
			&+9\mu_{k}^{-2}(\bar{\psi}_{k}-\beta_{k}\phi_{k}^{\top}\tilde{\bar{\theta}}_{k})^{2}\mathbb{E}_{k}[|w_{k+1}|^{2}]\\
			&+9\mu_{k}^{-2}(a_{k}\beta_{k}^{2}\phi_{k}^{\top}P_{k}\phi_{k})^{2}\mathbb{E}_{k}\left[\left(w_{k+1}^{2}-\mathbb{E}_{k}[|w_{k+1}|^{2}]\right)^{2}\right],\;\;a.s.
		\end{aligned}
	\end{equation}
	Besides, following the similar proof idea of Remark 3.2 in \cite{g1995}, we have 
	\begin{equation}\label{a2}
		\begin{aligned}
			\sum_{k=0}^{n}(a_{k}\beta_{k}^{2}\phi_{k}^{\top}P_{k}\phi_{k})^{2}=&\sum_{k=0}^{n}a_{k}(\beta_{k}\phi_{k}^{\top})(P_{k}-P_{k+1})(\beta_{k}\phi_{k})\\
			\leq& \Phi\sum_{k=0}^{n}tr(P_{k}-P_{k+1})\leq \Phi tr(P_{0}),
		\end{aligned}
	\end{equation}
	where $\Phi=\sup\limits_{0\leq k\leq N}\{\mu_{k}^{-1}\beta_{k}^{2}\|\phi_{k}\|^{2}\}$.   Thus, by $(\ref{VV})$, $(\ref{sigmab})$, $(\ref{mar})$, $(\ref{a2})$ and Lemma $\ref{lem3}$, we know that the following holds with probability at least $1-\alpha$:
	\begin{equation}\label{52}
		\begin{aligned}
			V_{n+1}+\frac{1}{2}\sum_{k=0}^{n}a_{k}\psi_{k}^{2}
			 \leq &\sigma_{b}\log |P_{n+1}^{-1}|+\Gamma+\frac{3}{2}S_{n},\;\forall 0\leq n\leq N.
		\end{aligned}
	\end{equation}
	where $\Gamma=V_{0}+\sigma_{b}\log|P_{0}^{-1}|+\frac{\Phi tr(P_{0})\bar{\sigma}_{b}}{2\sigma_{b}}+18 \sigma_{b}\frac{1-\alpha}{\alpha}$, $\overline{\sigma}_{b}=\sup\limits_{0\leq k\leq N}\mu_{k}^{-2}\mathbb{E}_{k}\left[(w_{k+1}^{2}-\mathbb{E}_{k}[w_{k+1}^{2}])^{2}\right],$ and 
	\begin{equation}\label{ss}
			S_{n}=\sum_{k=0}^{n} \mu_{k}^{-1}(\bar{\psi}_{k}-\beta_{k}\phi_{k}^{\top}\tilde{\bar{\theta}}_{k})^{2},	
	\end{equation}	
	
	To analyze the term $S_{n}$ above, we consider the Lyapunov function
	$\bar{V}_{k}=\tilde{\bar{\theta}}_{k}^{\top}\bar{P}_{k}^{-1}\tilde{\bar{\theta}}_{k}.$ Using the similar analysis above and the  Lyapunov function analysis as in $(\ref{558})$-$(\ref{com})$, we can obtain that with probability at least $1-\alpha$, for $0\leq n \leq N$,	
	\begin{equation}\label{sigmaa}
		\begin{aligned}
			\bar{V}_{n+1}+\frac{1}{4}\sum_{k=0}^{n}\bar{a}_{k}\bar{\beta}_{k}^{2}(\phi_{k}^{\top}\tilde{\bar{\theta}}_{k})^{2}
			 \leq \bar{\mu}\sigma_{b}\log |\bar{P}_{n+1}^{-1}|+\bar{\Gamma},
		\end{aligned}
	\end{equation}
	where $\bar{\Gamma}=\bar{V}_{0}+\bar{\mu}\sigma_{b}\log |\bar{P}_{0}^{-1}|+\frac{\bar{\Phi}tr(\bar{P}_{0})\bar{\mu}\bar{\sigma}_{b}}{2\sigma_{b}}+10\bar{\mu}\sigma_{b}\frac{1-\alpha}{\alpha}$.		
		For simplicity, denote
 $T_{n}=V_{n+1}+\frac{1}{2}\sum_{k=0}^{n}a_{k}\psi_{k}^{2}$, and $ \bar{T}_{n}=\bar{V}_{n+1}+\frac{1}{4}\sum_{k=0}^{n}\bar{a}_{k}\bar{\beta}_{k}^{2}(\phi_{k}^{\top}\tilde{\bar{\theta}}_{k})^{2}$. Besides,  let
	\begin{equation}\label{e33}
		\begin{aligned}
			&E_{1}=\{\omega: \bar{T}_{n} > \bar{\mu}\sigma_{b}\log |\bar{P}_{n+1}^{-1}|+\bar{\Gamma}, \forall\; 0 \leq n\leq N\},\\
			&E_{2}
			=\{\omega: T_{n} > \sigma_{b}\log |P_{n+1}^{-1}|+\Gamma+\frac{3}{2}S_{n},\forall\; 0 \leq n\leq N\}.\\
		\end{aligned}
	\end{equation}
	From $(\ref{ss})$ and $(\ref{sigmaa})$, for any $\omega\in E_{1}^{c}$, we have
	\begin{equation}
			\begin{aligned}
		\frac{3}{2}S_{n}\leq&\sum_{k=0}^{n}\frac{3}{2} \mu_{k}^{-1}\overline{g}_{k}^{2}(\phi_{k}^{\top}\tilde{\bar{\theta}}_{k})^{2}\leq 6\gamma \bar{T}_{n}\\
		\leq & 6\gamma (\bar{\mu}\sigma_{b}\log |P_{n+1}^{-1}|+\bar{\Gamma}).
			\end{aligned}
	\end{equation}
where $\gamma=\sup\limits_{0\leq k \leq N}\frac{\mu_{k}^{-1}\overline{g}_{k}^{2}}{\bar{a}_{k}^{2}\bar{\beta}_{k}^{2}}$.
	Therefore, for any $\omega\in E_{1}^{c}\cap E_{2}^{c}$, we obtain for any $0\leq n \leq N$
	\begin{equation}\label{ccc}
		\begin{aligned}
		T_{n}\leq&  \sigma_{b}\log|P_{n+1}^{-1}|+6\gamma\bar{\mu}\sigma_{b}\log|\bar{P}_{n+1}^{-1}|+\Gamma+ 6\gamma\bar{\Gamma}\\
		\leq&(\sigma_{b}+6\gamma\bar{\mu}^{2}\sigma_{b})\log|P_{n+1}^{-1}|+\Gamma+6\gamma\bar{\Gamma},
	\end{aligned}
	\end{equation} 
where we have used the fact that $\log|\bar{P}_{n+1}^{-1}|<\bar{\mu}\log|P_{n+1}^{-1}|$.

Moreover, by $(\ref{barpsi})$, $(\ref{73})$ and $(\ref{ss})$, we also have
\begin{equation}\label{152}
		\begin{aligned}
	\frac{3}{2}S_{N}\leq &c_{0}+\sum_{k=1}^{N}\frac{6\rho^{2} }{\mu_{k}}\max(|\phi_{k}^{\top}\tilde{\bar{\theta}}_{k}|^{4}, |\phi_{k}^{\top}\tilde{\theta}_{k}|^{4})\\
	\leq &c_{0}+\Psi \sum_{k=1}^{N}(\|\tilde{\bar{\theta}}_{k} \|^{2}\bar{a}_{k}\bar{\beta}_{k}^{2}|\phi_{k}^{\top}\tilde{\bar{\theta}}_{k}|^{2}+\|\tilde{\theta}_{k}\|^{2}a_{k}\bar{\beta}_{k}^{2}|\phi_{k}^{\top}\tilde{\theta}_{k}|^{2}),
\end{aligned}
\end{equation}	
where $c_{0}=\frac{3}{2}\mu_{0}^{-1}\overline{g}_{0}^{2}(\phi_{0}^{\top}\tilde{\bar{\theta}}_{0})^{2}$ and $\Psi=\sup\limits_{0\leq k\leq N}\frac{6\mu_{k}^{-1}\rho^{2}\|\phi_{k}\|^{2}}{\bar{\beta}_{k}^{2} max\{a_{k},\bar{a}_{k}\}}$. We now analyze the RHS of $(\ref{152})$. Indeed, for any $\omega \in E_{1}^{c}$, we have for any $1\leq k \leq N$ and given $\tau>0$,
\begin{equation}
		\begin{aligned}
	\|\tilde{\bar{\theta}}_{k}\|^{2} \leq \frac{\bar{\mu}\sigma_{b}\log |\bar{P}_{k}^{-1}|+\bar{\Gamma}}{\lambda_{\min}\{\bar{P}_{k}^{-1}\}}
	\leq \frac{\bar{\mu}\sigma_{b}\log |\bar{P}_{k+1}^{-1}|+\bar{\Gamma}+1}{\lambda_{N}(\log|\bar{P}_{k+1}^{-1}|)^{2+\tau}}.
\end{aligned}
\end{equation}
where $\lambda_{N}=\inf\limits_{0\leq k \leq N}\{\frac{\lambda_{\min}\{\bar{P}_{k}^{-1}\}}{(\log|\bar{P}_{k+1}^{-1}|)^{2+\tau}},\frac{\lambda_{\min}\{P_{k}^{-1}\}}{(\log|P_{k+1}^{-1}|)^{2+\tau}}\}$. Let $D_{n}=\sum_{k=0}^{n}\bar{a}_{k}\bar{\beta}_{k}^{2}(\phi_{k}^{\top}\tilde{\bar{\theta}}_{k})^{2}, n=0, 1, \cdots$, we have $\frac{1}{4}D_{n}+1\leq \bar{\mu}\sigma_{b}\log |\bar{P}_{n+1}^{-1}|+\bar{\Gamma}+1 $ for any $\omega \in E_{1}^{c}$ and $0\leq n \leq N$. Thus, we obtain
\begin{equation}\label{1566}
		\begin{aligned}
	&\Psi\sum_{k=1}^{N}\|\tilde{\bar{\theta}}_{k} \|^{2}\bar{a}_{k}\bar{\beta}_{k}^{2}|\phi_{k}^{\top}\tilde{\bar{\theta}}_{k}|^{2}\\
	\leq&4\Psi\sum_{k=1}^{N}(\frac{1}{4}D_{k}-\frac{1}{4}D_{k-1})\frac{\bar{\mu}\sigma_{b}\log |\bar{P}_{k+1}^{-1}|+\bar{\Gamma}+1}{\lambda_{N}(\log|\bar{P}_{k+1}^{-1}|)^{2+\tau}}\\
	\leq& \frac{4\Psi}{\lambda_{N}}(\bar{\mu}\sigma_{b}+\bar{\Gamma}+1)^{2+\tau}\sum_{k=1}^{N}\int_{\frac{1}{4}D_{k-1}+1}^{\frac{1}{4}D_{k}+1}\frac{1}{t^{1+\tau}}dt\\
	\leq&\frac{4\Psi}{\tau\lambda_{N}}(\bar{\mu}\sigma_{b}+\bar{\Gamma}+1)^{2+\tau}.
\end{aligned}
\end{equation}	
	Similarly, by $(\ref{ccc})$,  we have for any $\omega \in E_{1}^{c}\cap E_{2}^{c}$, 
	\begin{equation}\label{1577}
	\begin{aligned}
		&\Psi\sum_{k=0}^{N}\|\tilde{\theta}_{k}\|^{2}a_{k}\beta_{k}^{2}|\phi_{k}^{\top}\tilde{\theta}_{k}|^{2}\\
		\leq &\frac{2\Psi}{\tau\lambda_{N}}[(1+6\gamma\bar{\mu}^{2})(\sigma_{b}+\Gamma+\frac{\bar{\Gamma}}{\bar{\mu}^{2}}+1)]^{2+\tau}.
	\end{aligned}
	\end{equation}
Hence, by $(\ref{52})$, $(\ref{152})$,  $(\ref{1566})$ and $(\ref{1577})$, for any $\omega \in E_{1}^{c}\cap E_{2}^{c},$ and $0\leq n\leq N$, we have 
\begin{equation}\label{155}
	V_{n+1}+\frac{1}{2}\sum_{k=0}^{N}a_{k}\psi_{k}^{2}\leq \sigma_{b}\log|P_{n+1}^{-1}|+\frac{2\Psi C}{\tau\lambda_{N}}+\Gamma+c_{0}.
\end{equation}	
 From  $(\ref{52})$ and $(\ref{sigmaa})$, we have $P(E_{1})\leq \alpha$ and $P(E_{2})\leq \alpha$. Thus,
\begin{equation}\label{1600}
	P(E_{1}^{c}\cap E_{2}^{c})\geq 1-2\alpha.
\end{equation}
	Therefore, $(\ref{regret})$ can be obtained from $(\ref{155})$ and $(\ref{1600})$. Furthermore, we have
	\begin{equation}\label{157}
		\begin{aligned}
			\|\tilde{\theta}_{k+1}^{ (j)}\|^{2}=&\|e_{j}^{\top}\bar{P}_{k+1}^{\frac{1}{2}}\cdot P_{k+1}^{-\frac{1}{2}}\tilde{\theta}_{k+1}\|^{2}
			\leq P_{k+1}^{(j)}\cdot V_{k+1},\;\;a.s.
		\end{aligned}
	\end{equation}
	hence (\ref{eeee}) is obtained from $(\ref{155})$, $(\ref{1600})$ and $(\ref{157})$. 	
	\hspace*{\fill}~\QED\par\endtrivlist\unskip
		
	\noindent\hspace{2em}{\itshape Proof of Proposition \ref{po1}: }
	For any given positive $\alpha$ and $t$ with $\alpha+t<1$, denote $\upsilon=\sqrt{\frac{\ln 2-\ln t}{2K}}$ and let
	\begin{equation}
		\begin{aligned}
			E_{1}=&\{\tilde{\theta}_{n}^{(j)} \in [z_{K}^{(j)}(\frac{\alpha}{2}-\upsilon), \;\;z_{K}^{(j)}(1-\frac{\alpha}{2}+\upsilon)]\},\\
			E_{2}=&\{\tilde{\theta}_{n}^{(j)}\in [z^{(j)}(\frac{\alpha}{2}),\;\;z^{(j)}(1-\frac{\alpha}{2})] \},\\
			E_{3}=&\{F^{(j)}_{K}(z_{K}^{(j)}(\frac{\alpha}{2}-\upsilon)-F^{(j)}(z_{K}^{(j)}(\frac{\alpha}{2}-\upsilon)))> -\upsilon \}\cap\\ &\{F^{(j)}_{K}(z_{K}^{(j)}(1-\frac{\alpha}{2}+\upsilon)-F^{(j)}(z_{K}^{(j)}(1-\frac{\alpha}{2}+\upsilon)))< \upsilon \}.
		\end{aligned}
	\end{equation}
	For every $\omega \in E_{3}$, we have
	\begin{equation}
		\begin{aligned}
			F^{(j)}(z_{K}^{(j)}(\frac{\alpha}{2}-\upsilon)) &< F_{K}^{(j)}(z_{K}^{(j)}(\frac{\alpha}{2}-\upsilon))+\upsilon=\frac{\alpha}{2},\\
			F^{(j)}(z_{K}^{(j)}(1-\frac{\alpha}{2}+\upsilon)) &>  F^{(j)}_{K}(z_{K}^{(j)}(1-\frac{\alpha}{2}+\upsilon))-\upsilon\\
			&=1-\frac{\alpha}{2},
		\end{aligned}
	\end{equation}
	which means
	\begin{equation}
		\begin{aligned}
			z^{(j)}(\frac{\alpha}{2})\geq z_{K}^{(j)}(\frac{\alpha}{2}-\upsilon),\;
			z^{(j)}(1-\frac{\alpha}{2})\leq z_{K}^{(j)}(1-\frac{\alpha}{2}+\upsilon).
		\end{aligned}
	\end{equation}
	Hence, we have 
	\begin{equation}\label{sub}
		(E_{2}\cap E_{3}) \subset E_{1}. 
	\end{equation}
	From the definition of $E_{2}$, we have 
	\begin{equation}\label{e2}
		P(E_{2}^{c})<\alpha.
	\end{equation}
	We now prove that 
	\begin{equation}\label{e3}
		P(E_{3}^{c})<t.
	\end{equation}
	Since $\mathbb{E}[F_{K}^{(j)}(x)]=E[\frac{1}{K}\sum_{i=1}^{K}I_{\{H_{j}(X_{i})\leq x\}}]=F^{(j)}(x)$ for any $x \in R$, by Hoffeding’s inequality, we have 
	\begin{equation}
		\begin{aligned}
			P\{F_{K}^{(j)}(z_{K}^{(j)}(\frac{\alpha}{2}-\upsilon))-F^{(j)}(z_{K}^{(j)}(\frac{\alpha}{2}-\upsilon))\leq -\upsilon \}\leq \frac{t}{2}\\
			P\{F_{K}^{(j)}(z_{K}^{(j)}(1-\frac{\alpha}{2}+\upsilon)-F^{(j)}(z_{K}^{(j)}(1-\frac{\alpha}{2}+\upsilon))\geq \upsilon \}\leq \frac{t}{2}
		\end{aligned}
	\end{equation}
	Thus $(\ref{e3})$ holds true. By $(\ref{sub})$, $(\ref{e2})$ and $(\ref{e3})$, we finally have $P(E_{1})\geq 1-\alpha-t,$ which proves  Proposition $\ref{po1}$.
	\hspace*{\fill}~\QED\par\endtrivlist\unskip

	\section{Numerical simulation}\label{ss5}
In this section, we show that the parameter estimate in the second step of our TSQN algorithm outperforms that in the first step by an empirical analysis of the sentencing problem.  Notably,  the pronounced penalties are constrained within the statutory range of penalty according to the related basic criminal facts, they can be regarded as saturated output observations. The data source is taken from the China Judgements Online, where the data set was constructed according to judgment documents of the crime of intentional injury from 2011 to 2021. For more details about the modeling and the data explanation, see \cite{judical}.

Firstly, we compare the adaptive  prediction performance of the preliminary estimates $\bar{\theta}_{k}$ in the  first step with that in the second step  of the TSQN algorithm, based on the dataset of  sentencing for serious injury cases of intentional injury, where the prediction accuracy is defined by 
 $Predection-acc(t)=1-\frac{1}{T}\sum_{t=1}^{T}\frac{|y_{t}-\hat{y}_{t}|}{|y_{t}|}. $ 
From  Fig \ref{fig3}, one can see that the prediction accuracy in the second step outperforms significantly that in the first step.

Secondly, we compare the confidence intervals of the estimates obtained respectively by the first step and second step of the TSQN algorithm with finite data length. The confidence intervals are constructed through 10000 Monte Carlo simulations in accordance with Proposition 1. Our simulations reveal that the confidence interval of the estimates obtained in the second step is significantly smaller than that obtained in the first step. Fig \ref{fig4} shows the results of the first five components of the parameter vector, with $\theta_{1}$ corresponding to the feature ``Admitting guilt \& accepting punishments", $\theta_{2}$ corresponding to the feature ``Accessary criminal", $\theta_{3}$ corresponding to the feature ``Armed", $\theta_{4}$ corresponding to the feature ``Criminal record", $\theta_{5}$ corresponding to the feature ``Aged 75 and over". Similar results can be obtained for other components of the parameter vector, for details, see \cite{judical}.
\begin{figure}[htbp]
	\centerline{\includegraphics[width=9.5cm]{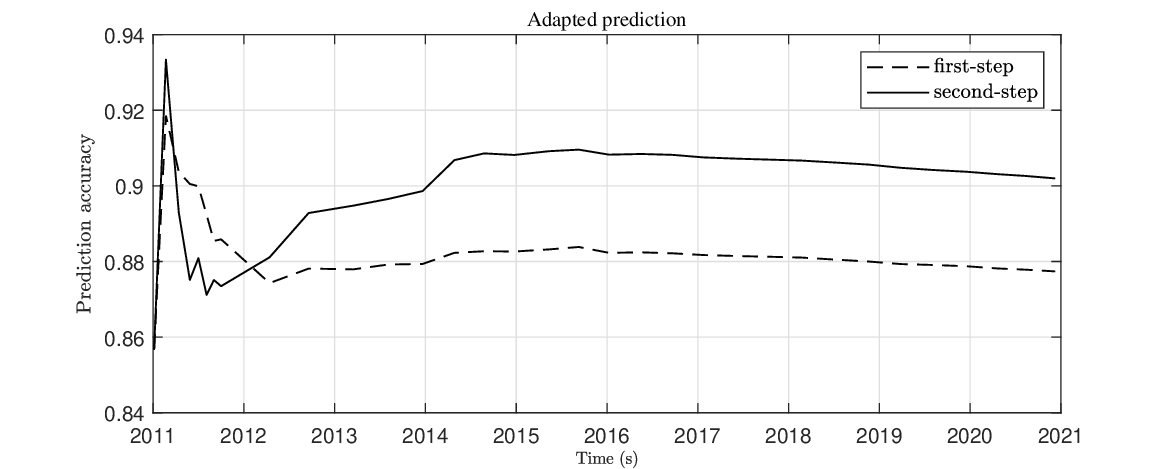}}
	\caption{Comparison of adaptive prediction accuracy.}
	\label{fig3}
\end{figure}
\begin{figure}[htbp]
	\centerline{\includegraphics[width=8.5cm]{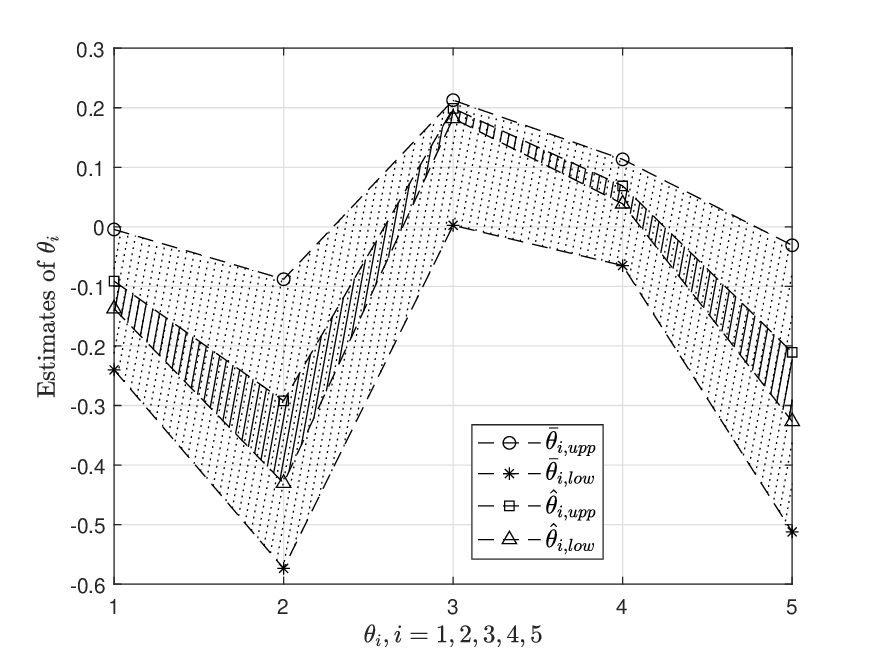}}
	\caption{Comparison of parameter estimates. $\bar{\theta}_{i,upp}$ and $\bar{\theta}_{i,low}$:  the upper and lower bounds of the 90\% confidence interval of the estimates produced by the first step; $\hat{\theta}_{i,upp}$ and $\hat{\theta}_{i,low}$: the upper and lower bounds of the 90\% confidence interval of the estimates produced by the second step; Shaded area: estimates within 90\% confidence interval.}
	\label{fig4}
\end{figure}

	\section{Conclusion}\label{ss6}
	Motivated by various application backgrounds, we have in this paper studied the problem of adaptive identification and prediction problems of stochastic dynamical systems with saturated observations. To improve the performance of the estimation algorithm designed naturally by using an adaptive single-step quasi-Newton method,  we have proposed a new adaptive two-step quasi-Newton algorithm to estimate the unknown parameters. It is shown that the strong consistency and the asymptotic normality of the estimate can be established under general non-PE conditions as the data length increases to infinity. When the data length is given and finite, it is also shown that the estimation performance can also be guaranteed with high probability by using either the Lyapunov function-based method or the Monte Carlo-based method, which appears to be more suitable for application problems where only data set with finite length is available. The numerical example also demonstrates that the performance of the proposed TSQN algorithm is better than the single-step quasi-Newton algorithm even under non-PE conditions of the data. The proposed new TSQN algorithm has also been used successfully in sentencing computation problems with real finite data set in \cite{judical} as outlined in Section V. For future investigations, there are still several interesting problems that need to be solved in theory, for example,  how to establish global convergence or estimation error bounds for adaptive estimation algorithms of more complicated stochastic regression models including multi-layer neural networks, and how to solve adaptive control problems with saturated observations for stochastic dynamical control systems, etc.

	\appendices
	
	%%%%%%%%%%%
	
	\section{}\label{i}
		In this appendix, we give three examples for the calculation of the functions $G_{k}(\cdot)$ and $\sigma_{k}(\cdot)$.
	\begin{example}
	Let us consider the case where  $l_{k}=-\infty$ and $u_{k}=\infty$ for any $k\geq 1$, then the model degenerate to the classical linear regression model, and in this case, we have $G(x)=\mathbb{E}_{k}[S(x+e_{k+1})]\equiv x,$ and  $\sigma_{k}(x)=\mathbb{E}_{k}\left[|S(x+e_{k+1})-G(x)|^{2}\right]=\mathbb{E}[e_{k+1}^{2}\mid \mathcal{F}_{k}].$
	\end{example}
    \begin{example}
	Let us consider the case where $L_{k}=l_{k}=u_{k}=0, U_{k}=1$, then the saturated function will turn to be a binary-valued function, which is widely used in classification problem. Let the noise $e_{k}$ is $\mathcal{F}_{k}-$measurable with the conditional probability distribution function $F_{k}(\cdot)$, then we have $G_{k}(x)=1-F_{k}(-x)$ and $\sigma_{k}(x)=F_{k}(-x)[1-F_{k}(-x)].$
    \end{example}
    \begin{example}
	Let us consider the case where  $L_{k}=l_{k}<u_{k}=U_{k}$ for any $k\geq 1$,  the noise sequence $\left\{e_{k}\right\}$ is independent and normally distributed with $e_{k} \sim N(0,\sigma^{2})$. Let the conditional probability distribution function and the conditional probability density function of $e_{k}$ be $F(\cdot)$ and $f(\cdot)$, respectively. Then the function $G_{k}(\cdot)$ and $\sigma_{k}(\cdot)$ can be calculated as follows:
		\begin{equation}
		\begin{aligned}
			G_{k}(x)=&u_{k}+(l_{k}-x)F(l_{k}-x)-(u_{k}-x)F(u_{k}-x)\\
			&+\sigma^{2}[f(l_{k}-x)-f(u_{k}-x)],\\
			\sigma_{k}(x)=&(u_{k}-G_{k}(x))^{2}+\sigma^{2}F(u_{k}-x)\\
			&+[(G_{k}(x)-x)^2-(u_{k}-G_{k}(x))^2]
			F(u_{k}-x)\\
			&-[(l_{k}-G_{k}(x))^2-(G_{k}(x)-x)^2-\sigma^{2}]F(l_{k}-x)\\
			&+\sigma^{2}[(l_{k}+x-2G_{k}(x))f(l_{k}-x)\\
			&-(u_{k}+x-2G_{k}(x))f(u_{k}-x)].	
		\end{aligned}	
	\end{equation}		
	\end{example}
	
	\section{}\label{ii}
	\begin{lemma}\label{lem1} (\cite{ce2001}). The projection operator given by Definition $\ref{def2}$  satisfies
		\begin{equation}\label{17}
			\|\Pi_{Q}(x)-\Pi_{Q}(y)\|_{Q} \leq \|x-y\|_{Q}\quad \forall x, y\in \mathbb{R}^{m}
		\end{equation}
	\end{lemma}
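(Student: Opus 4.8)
The plan is to treat $\langle u,v\rangle_{Q}:=u^{T}Qv$ as a genuine inner product on $\mathbb{R}^{m}$, which it is, being symmetric, bilinear and positive definite because $Q>0$. Then the classical Hilbert-space nonexpansiveness argument for metric projections onto convex sets applies verbatim, with the Euclidean geometry replaced by the $Q$-geometry induced by this inner product and the norm $\|\cdot\|_{Q}$ of the paper.

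First I would settle existence and uniqueness so that the operator of Definition \ref{def2} is well defined. Since $D$ is convex and compact and the map $y\mapsto\|x-y\|_{Q}^{2}=(x-y)^{T}Q(x-y)$ is continuous and strictly convex (strict convexity coming from $Q>0$), the minimizer $\Pi_{Q}(x)$ exists and is unique. The central step is the variational characterization of this minimizer. Writing $p=\Pi_{Q}(x)$, I would exploit the convexity of $D$: for any $z\in D$ and $t\in[0,1]$ the point $p+t(z-p)$ lies in $D$, so minimality gives $\|x-p\|_{Q}^{2}\le\|x-p-t(z-p)\|_{Q}^{2}$. Expanding the right-hand side in the $Q$-inner product, dividing by $t>0$, and letting $t\downarrow 0$ yields the obtuse-angle inequality
\begin{equation}
\langle x-p,\,z-p\rangle_{Q}\le 0,\qquad \forall z\in D.
\end{equation}

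Finally I would combine two instances of this inequality. Setting $p=\Pi_{Q}(x)$ and $q=\Pi_{Q}(y)$, taking $z=q$ in the inequality for $p$ and $z=p$ in the inequality for $q$ gives $\langle x-p,\,q-p\rangle_{Q}\le 0$ and $\langle y-q,\,p-q\rangle_{Q}\le 0$. Adding the two and rearranging the terms produces $\|p-q\|_{Q}^{2}\le\langle x-y,\,p-q\rangle_{Q}$, after which the Cauchy--Schwarz inequality for $\langle\cdot,\cdot\rangle_{Q}$ bounds the right-hand side by $\|x-y\|_{Q}\,\|p-q\|_{Q}$. Cancelling one factor of $\|p-q\|_{Q}$ (the case $p=q$ being trivial) gives exactly $\|\Pi_{Q}(x)-\Pi_{Q}(y)\|_{Q}\le\|x-y\|_{Q}$.

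I do not expect any genuine obstacle, since this is the standard nonexpansiveness of projections onto closed convex sets in a Hilbert space. The only point requiring a line of care is checking that $\langle\cdot,\cdot\rangle_{Q}$ is a bona fide inner product, so that both the first-order variational inequality and Cauchy--Schwarz are legitimate in the weighted geometry; this is immediate from $Q>0$, which is why the statement can simply be quoted from \cite{ce2001}.
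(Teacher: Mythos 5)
Your proof is correct and is precisely the standard Hilbert-space nonexpansiveness argument for metric projections onto closed convex sets that the paper invokes by citing \cite{ce2001}: the obtuse-angle variational inequality in the $\langle\cdot,\cdot\rangle_{Q}$ inner product, applied twice and combined with Cauchy--Schwarz. The paper offers no proof of its own beyond the citation, and your argument fills it in exactly as the reference does.
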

	
	\begin{lemma}\label{lem2} (\cite{cg1991}). Let $\left\{w_{n}, \mathcal{F}_{n}\right\}$ be a martingale difference sequence and $\left\{f_{n}, \mathcal{F}_{n}\right\}$ an adapted sequence.
		If $\sup_{n} 	\mathbb{E}[|w_{n+1}|^{\alpha}\mid \mathcal{F}_{n}] < \infty, a.s.$, for some $\alpha \in (0, 2]$, then as $n\rightarrow \infty$:
		\begin{equation}\label{19}
			\sum_{i=0}^{n}f_{i}w_{i+1} = O(s_{n}(\alpha)\log^{\frac{1}{\alpha}+\eta}(s_{n}^{\alpha}(\alpha)+e))\;a.s., \forall \eta >0,
		\end{equation}
		where $s_{n}(\alpha)=\left(\sum_{i=0}^{n}|f_{i}|^{\alpha}\right)^{\frac{1}{\alpha}}$.
	\end{lemma}
	\begin{lemma}\label{lem3} (\cite{lw1982}). Let \;$X_{1}, X_{2},\cdots$ be a sequence of vectors in $	\mathbb{R}^{m} (m\geq 1)$ and let $A_{n} = A_{0}+\sum_{i=1}^{n}X_{i}X_{i}^{\top}$. Let $|A_{n}|$ denote the determinant of $A_{n}$. Assume that $A_{0}$ is nonsingular, then as $n\rightarrow \infty$
		\begin{equation}\label{20}
			\sum_{k=1}^{n}\frac{X_{k}^{\top}A_{k-1}^{-1}X_{k}}{1+X_{k}^{\top}A_{k-1}^{-1}X_{k}}\leq \log|A_{n}|+\log|A_{0}|.	
		\end{equation}
	\end{lemma}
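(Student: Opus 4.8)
The plan is to reduce the stated vector inequality to a telescoping sum of the increments of $\log|A_k|$, exploiting that each $A_k$ is a rank-one update of $A_{k-1}$. In the applications (e.g.\ at $(\ref{35})$) the base matrix $A_0 = P_0^{-1}$ is positive definite and the updates $A_k = A_{k-1} + X_k X_k^{T}$ preserve positive definiteness, so I would work throughout with $A_k > 0$. The three ingredients are the matrix determinant lemma, the Sherman--Morrison formula, and a single elementary scalar inequality.

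\textbf{Determinant recursion and summand.} Since $A_k = A_{k-1} + X_k X_k^{T}$, the matrix determinant lemma gives $|A_k| = |A_{k-1}|(1 + X_k^{T} A_{k-1}^{-1} X_k)$. Writing $a_k = X_k^{T} A_{k-1}^{-1} X_k \ge 0$ and taking logarithms, the increment is $\log|A_k| - \log|A_{k-1}| = \log(1+a_k)$. Applying Sherman--Morrison to $A_k^{-1} = (A_{k-1}+X_k X_k^{T})^{-1}$ and contracting with $X_k$ on each side yields the exact identity $X_k^{T} A_k^{-1} X_k = a_k/(1+a_k)$, so in particular the summand satisfies
\[
\frac{X_k^{T} A_k^{-1} X_k}{1 + X_k^{T} A_k^{-1} X_k} \le X_k^{T} A_k^{-1} X_k = \frac{a_k}{1+a_k}.
\]

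\textbf{Scalar inequality and telescoping.} The bound $\frac{a}{1+a} \le \log(1+a)$ holds for all $a \ge 0$: the two sides agree at $a=0$, and comparing derivatives $\frac{1}{(1+a)^2} \le \frac{1}{1+a}$ shows the right-hand side dominates. Chaining this with the previous display and summing the telescoped increments, I obtain
\[
\sum_{k} \frac{X_k^{T} A_k^{-1} X_k}{1 + X_k^{T} A_k^{-1} X_k} \le \sum_{k} \log(1+a_k) = \log|A_n| - \log|A_0|,
\]
which is sharper than the asserted right-hand side whenever $|A_0| \ge 1$, and in any case is of the order $O(\log|A_n|) = O(\log\lambda_{\max}(n))$ needed in $(\ref{35})$.

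I do not expect a genuine obstacle, as this is the classical Lai--Wei determinant inequality; the only point demanding care is the boundary indexing. The displayed sum nominally runs from $k=0$, whereas the telescoping is naturally anchored at $A_0$, so I would fix the convention $A_k = A_0 + \sum_{i=1}^{k} X_i X_i^{T}$ under which the lemma is invoked and fold any residual $k=0$ term into the constant. Since each summand lies in $[0,1)$ and only the logarithmic order of the bound matters, this bookkeeping is harmless; it is exactly why the cited statement can afford the slightly looser $+\log|A_0|$ in place of the exact $-\log|A_0|$.
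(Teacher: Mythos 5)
The paper does not actually prove this lemma: it is imported verbatim from Lai and Wei \cite{lw1982}, so there is no internal proof to compare yours against. Your argument is the standard and correct derivation of that result: the determinant recursion $|A_k|=|A_{k-1}|(1+a_k)$ with $a_k=X_k^TA_{k-1}^{-1}X_k$, the Sherman--Morrison identity $X_k^TA_k^{-1}X_k=a_k/(1+a_k)$, the scalar bound $a/(1+a)\le\log(1+a)$ for $a\ge0$, and telescoping, giving $\sum_{k=1}^{n}\frac{X_k^TA_k^{-1}X_k}{1+X_k^TA_k^{-1}X_k}\le\log|A_n|-\log|A_0|$; your treatment of the $k=0$ indexing (each summand lies in $[0,1)$, so one stray term is absorbed into an additive constant) is also fine. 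One point you half-notice but should state plainly: what you prove is $\log|A_n|-\log|A_0|=\log|A_n|+\log(|A_0^{-1}|)$, whereas the lemma as printed reads $+\log(|A_0|)$. These are not interchangeable, and the printed version is actually false when $|A_0|<1$: take $m=1$, $A_0=\epsilon$, $X_1=1$, so the left side is $1/(2+\epsilon)\approx 1/2$ while the printed right side $\log(1+\epsilon)+\log\epsilon\to-\infty$. So the ``$+\log(|A_0|)$'' is a typo for $\log(|A_0^{-1}|)$ (the form in which Lai and Wei state it); your bound is the correct one, and it is all that is needed at $(\ref{35})$, where $A_0=P_0^{-1}$ is a fixed positive definite matrix and only the order $O(\log\lambda_{\max}(n))$ is used.
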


	\begin{lemma}\label{lem9}(\cite{y2003})
        For each $n\geq1$, let $\{S_{n,j}=\sum_{i=1}^{j}X_{ni}, \mathcal{F}_{j}, 1\leq j\leq n<\infty\}$ be an $\mathcal{L}_{2}$ stochastic sequence on $(\Omega, \mathcal{F}, p)$ satisfying 
        \begin{equation}
        \sum_{k=1}^{n} \mathbb{E}\left(X_{n, k} \mid \mathcal{F}_{ k-1}\right) \underset{n \rightarrow \infty}{\stackrel{p}{\longrightarrow}} 0, 
                \end{equation}
                        \begin{equation}
        \sum_{k=1}^{n} [\mathbb{E}\left(X_{n, k}^{2} \mid \mathcal{F}_{ k-1}\right)-(\mathbb{E}\left(X_{n, k} \mid \mathcal{F}_{ k-1}\right))^{2}] \underset{n \rightarrow \infty}{\stackrel{p}{\longrightarrow}} \mathcal{\eta}^{2}, 
        \end{equation}
        \begin{equation}
          \sum_{k=1}^{n} \mathbb{E}\left[X_{n, k}^{2} I\left(\left|X_{n, k}\right|>\varepsilon\right) \mid \mathcal{F}_{n, k-1}\right] \underset{n \rightarrow \infty}{\stackrel{p}{\longrightarrow}} 0,\;\;\epsilon>0, 
        \end{equation}
        for some non-negative constant $\mathcal{\eta}^{2}$, then we have $S_{n, n} \underset{n \rightarrow \infty}{\stackrel{d}{\longrightarrow}}N(0, \mathcal{\eta}^{2}). $
	\end{lemma}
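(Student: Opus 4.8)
The plan is to prove Lemma~\ref{lem9} --- a central limit theorem for a martingale-type triangular array --- by the characteristic-function (McLeish) method, i.e. to show for each fixed $t$ that $\mathbb{E}[\exp(\mathrm{i} t S_{n,n})]\to\exp(-t^2\eta^2/2)$ and then invoke L\'evy's continuity theorem. First I would strip off the conditional means: set $m_{nk}=\mathbb{E}(X_{nk}\mid\mathcal{F}_{k-1})$ and $Y_{nk}=X_{nk}-m_{nk}$, so that $\{Y_{nk},\mathcal{F}_k\}$ is a genuine martingale difference array. The first hypothesis gives $\sum_k m_{nk}\xrightarrow{p}0$, so by Slutsky it suffices to treat $\sum_k Y_{nk}$; the second hypothesis then reads $\sum_k\mathbb{E}(Y_{nk}^2\mid\mathcal{F}_{k-1})\xrightarrow{p}\eta^2$, and the conditional Lindeberg condition transfers to the $Y_{nk}$ after checking that the centering is uniformly negligible (which itself follows from Lindeberg, since $m_{nk}^2\le\mathbb{E}(X_{nk}^2\mid\mathcal{F}_{k-1})$).

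Next I would truncate to obtain uniformly small summands. Fixing $\varepsilon>0$, I replace $Y_{nk}$ by $\bar Y_{nk}=Y_{nk}I(|Y_{nk}|\le\varepsilon)$ re-centered by its conditional mean, so that $\{\bar Y_{nk},\mathcal{F}_k\}$ is again a martingale difference array with $\max_k|\bar Y_{nk}|$ small and with $\sum_k\bar Y_{nk}^2\xrightarrow{p}\eta^2$ (the discarded part being asymptotically negligible by Lindeberg). The core identity is $\exp(\mathrm{i} t\sum_k\bar Y_{nk})=U_n\exp(\sum_k r_{nk})$, where $U_n=\prod_{k=1}^n(1+\mathrm{i} t\bar Y_{nk})$ and $r_{nk}=\mathrm{i} t\bar Y_{nk}-\log(1+\mathrm{i} t\bar Y_{nk})$. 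Because $\{\prod_{j\le k}(1+\mathrm{i} t\bar Y_{nj})\}$ is a martingale of mean one, $\mathbb{E}[U_n]=1$; and since $r_{nk}=-\tfrac12 t^2\bar Y_{nk}^2+O(|t|^3|\bar Y_{nk}|^3)$ with remainder bounded by $O(\max_k|\bar Y_{nk}|\cdot\sum_k\bar Y_{nk}^2)$, the factor $R_n:=\exp(\sum_k r_{nk})$ converges in probability to the constant $\rho:=\exp(-t^2\eta^2/2)$.

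It then remains to pass expectations through the product. Using the decomposition $\exp(\mathrm{i} t\sum_k\bar Y_{nk})-\rho=U_n(R_n-\rho)+\rho(U_n-1)$ together with $\mathbb{E}[U_n]=1$, the second term has zero expectation, so the whole claim reduces to showing $\mathbb{E}[U_n(R_n-\rho)]\to0$. Since $R_n-\rho\xrightarrow{p}0$ and $R_n-\rho$ is bounded, this convergence holds provided the family $\{U_n\}$ is uniformly integrable; one then lets $\varepsilon\to0$ to recover the conclusion for the original array.

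I expect the uniform integrability of $U_n$ to be the main obstacle. One has $|U_n|^2=\prod_k(1+t^2\bar Y_{nk}^2)\le\exp(t^2\sum_k\bar Y_{nk}^2)$, but the conditional quadratic variation is controlled only in probability, not almost surely, so this bound does not immediately give $\sup_n\mathbb{E}|U_n|^2<\infty$. The resolution is to localize on the high-probability event $\{\sum_k\mathbb{E}(\bar Y_{nk}^2\mid\mathcal{F}_{k-1})\le C\}$ for large $C$, on which an iterated conditioning argument bounds $\mathbb{E}[|U_n|^2 I(\cdot)]$ uniformly, and to send $\varepsilon\to0$ and $C\to\infty$ at the end; off this event the contribution is negligible by choice of $C$. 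Undoing the truncation and the centering and invoking the continuity theorem then yields $S_{n,n}\xrightarrow{d}N(0,\eta^2)$.
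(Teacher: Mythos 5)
The paper offers no proof of this lemma at all---it is quoted as a known martingale central limit theorem from the cited reference (Chow and Teicher)---and your sketch is precisely the standard McLeish characteristic-function argument by which such results are established there, so the approach matches the intended source and is sound. The one step I would phrase differently is the uniform integrability of $U_n$: restricting to the event $\{\sum_k\mathbb{E}(\bar Y_{nk}^2\mid\mathcal{F}_{k-1})\le C\}$ does not thread cleanly through the iterated conditioning (that event is not adapted), whereas stopping the product at the first index where the predictable quadratic variation exceeds $C$ gives $\mathbb{E}\bigl|U_{n\wedge\tau_n}\bigr|^2\le e^{t^2(C+4\varepsilon^2)}$ immediately (each increment being at most $4\varepsilon^2$), with the stopped and unstopped products agreeing off an event whose probability vanishes as $C\to\infty$.
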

	\begin{lemma}\label{lem8}(\cite{y2003})
		If $\{S_{n}=\sum_{k=1}^{n}X_{k},\mathcal{F}_{n},n\geq 1\}$ is an $\mathcal{L}_{2}$ martingale with $\mathbb{E}[S_{1}]=0$ and $\mathcal{F}_{0}=( \emptyset, \Omega)$, then for any positive constants $x, y$
		\begin{equation}
			P\{S_{n}\geq x\sum_{k=1}^{n}\mathbb{E}[X_{k}^{2}\mid \mathcal{F}_{k-1}]+y, some\;\; n\geq 1\}\leq \frac{1}{1+xy}.
		\end{equation}
		\end{lemma}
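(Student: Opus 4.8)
The plan is to recognize this as a maximal inequality of Dubins--Savage type for the martingale $\{S_{n}\}$ and to prove it by exhibiting a suitable non-negative supermartingale together with the maximal inequality for non-negative supermartingales (Ville's inequality): if $\{Z_{n}\}$ is a non-negative supermartingale then $P(\sup_{n} Z_{n}\ge \lambda)\le \mathbb{E}[Z_{0}]/\lambda$. Writing $X_{k}=S_{k}-S_{k-1}$ for the martingale differences, so that $\mathbb{E}[X_{k}\mid\mathcal{F}_{k-1}]=0$, and denoting the predictable quadratic variation by $V_{n}=\sum_{k=1}^{n}\mathbb{E}[X_{k}^{2}\mid\mathcal{F}_{k-1}]$, the event to be controlled is $\{S_{n}\ge xV_{n}+y \text{ for some } n\ge 1\}$ and the target bound is $\frac{1}{1+xy}$.

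First I would reduce the whole statement to the construction of a single process. It suffices to find a non-negative supermartingale $\{Z_{n}\}$ adapted to $\{\mathcal{F}_{n}\}$ with $Z_{0}\le 1$ and with the boundary property $Z_{n}\ge 1+xy$ on $\{S_{n}\ge xV_{n}+y\}$. Indeed, setting $\tau=\inf\{n\ge 1:S_{n}\ge xV_{n}+y\}$, on $\{\tau<\infty\}$ one has $\sup_{n}Z_{n}\ge Z_{\tau}\ge 1+xy$, so the maximal inequality gives $P(\tau<\infty)\le P(\sup_{n}Z_{n}\ge 1+xy)\le Z_{0}/(1+xy)\le 1/(1+xy)$, which is exactly the claim.

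The key structural observation guiding the construction is a clean sufficient condition for the supermartingale property: if $F(s,V)$ is concave in $s$ and non-increasing in $V$, then $Z_{n}:=F(S_{n},V_{n})$ is a supermartingale. This follows because $V_{n}=V_{n-1}+\mathbb{E}[X_{n}^{2}\mid\mathcal{F}_{n-1}]$ is $\mathcal{F}_{n-1}$-measurable, so conditional Jensen applied in the $s$-variable yields $\mathbb{E}[F(S_{n-1}+X_{n},V_{n})\mid\mathcal{F}_{n-1}]\le F(S_{n-1}+\mathbb{E}[X_{n}\mid\mathcal{F}_{n-1}],V_{n})=F(S_{n-1},V_{n})$, which is at most $F(S_{n-1},V_{n-1})=Z_{n-1}$ since $V_{n}\ge V_{n-1}$ and $F$ is non-increasing in $V$. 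The problem then becomes the purely deterministic one of engineering a non-negative $F$, concave in $s$ and non-increasing in $V$, normalized by $F(0,0)=1$ and satisfying $F(s,V)\ge 1+xy$ along the line $s=xV+y$.

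I expect this last engineering step to be the main obstacle. A naive attempt to take $F$ a function of the single variable $s-xV$ (for instance the linear $1+x(s-xV)$, which meets the normalization and boundary requirements and is trivially concave) fails the non-negativity requirement: any concave function of one variable equal to $1$ at $0$ with secant slope at least $x$ on $[0,y]$ lies below the line through $(0,1)$ of slope $x$ for negative argument and hence becomes negative. A genuine two-variable rational (or polynomial) $F$ is therefore required, in which the dependence on $V$ is exploited to keep the process non-negative while preserving the boundary value along $s=xV+y$; this is the classical Dubins--Savage supermartingale, and the remaining work is to verify its concavity in $s$, its monotonicity in $V$, and its non-negativity by direct computation, after which the maximal-inequality step above closes the argument. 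Since the result is standard, one may alternatively simply invoke \cite{y2003} as the paper does.
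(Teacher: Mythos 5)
The paper offers no proof of this lemma at all --- it is quoted as the Dubins--Savage inequality directly from \cite{y2003} --- so your attempt to actually prove it is already doing more than the paper does. Your reduction to Ville's maximal inequality for non-negative supermartingales is the right frame, and your criterion for the supermartingale property of $F(S_{n},V_{n})$ (concave in $s$, non-increasing in $V$) is indeed a correct sufficient condition. The genuine gap is that this criterion is provably incompatible with the other properties you demand of $F$, so your program cannot be completed within the template you set up. The obstruction you yourself derive for one-variable functions of $s-xV$ applies verbatim to the slice $v=0$ of any admissible two-variable $F$: writing $h(s):=F(s,0)$, concavity together with $h(0)\le 1$ and $h(y)\ge 1+xy$ forces (by comparing $h(0)$ with the chord from $(s,h(s))$ to $(y,h(y))$ for $s<0$) the bound $h(s)\le 1+xs$, hence $h(s)<0$ for $s<-1/x$, contradicting non-negativity. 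So no $F$ that is concave in $s$ and non-increasing in $V$ can satisfy your normalization, boundary and positivity requirements --- rational, polynomial or otherwise --- and the assertion that ``a genuine two-variable $F$ is therefore required'' within this class is the step that fails.

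The actual Dubins--Savage construction lives outside your template: the supermartingale property is not obtained from conditional Jensen in the $s$-variable, but from an exact second-order cancellation in which the decrease of $F$ in the $v$-direction by the increment $\mathbb{E}[X_{n}^{2}\mid\mathcal{F}_{n-1}]$ offsets the increase caused by $F$ being locally \emph{convex} in $s$. Concretely one wants an inequality of the form $F(s+t,\,v+\sigma^{2})\le F(s,v)+\alpha t+\beta(t^{2}-\sigma^{2})$ valid for all real $t$ (with $\alpha,\beta$ depending on $(s,v,\sigma^{2})$), so that taking conditional expectations annihilates both correction terms; in addition the process must be stopped at the first crossing time $\tau$ to retain the needed bounds. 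Without exhibiting such an $F$ and verifying this identity, the proof is incomplete, and as written only your final fallback --- invoking \cite{y2003}, which is exactly what the paper does --- closes the argument.
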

	\subsection{Proof of Remark $\ref{re7}$, $(\ref{ssss})$ and $(\ref{muu})$}\label{A}
	\noindent\hspace{2em}{\itshape Proof of Remark $\ref{re7}$: }	
		From $(\ref{kkk})$ in the proof of Theorem $\ref{thm3}$, we have
		\begin{equation}\label{ite}
			\begin{aligned}
				\tilde{\theta}_{k+1}=&P_{k+1}P_{0}^{-1}\tilde{\theta}_{0}-P_{k+1}\sum_{i=0}^{k}\mu_{i}^{-1}\beta_{i}\phi_{i}w_{i+1}\\
				&-P_{k+1}^{\frac{1}{2}}\sum_{i=0}^{k}P_{k+1}^{\frac{1}{2}}\mu_{i}^{-1}\beta_{i}(\xi_{i}-\beta_{i})\phi_{i}\phi_{i}^{\top}\tilde{\theta}_{i}\\
               &+P_{k+1}\sum_{i=0}^{k}P_{i+1}^{-1}(s_{i}-\Pi_{P_{i+1}^{-1}}\{s_{i}\})I_{\{s_{i}\not\in D_{i}\}},\;\;a.s.
			\end{aligned}
		\end{equation}
		We now analyze the RHS of $(\ref{ite})$ term by term. Firstly, following the similar analysis as in $(\ref{k1})$ and $(\ref{k2})$, we have
		\begin{equation}\label{93}
			\|P_{k+1}P_{0}^{-1}\tilde{\theta}_{0}\|=O(\frac{1}{k}),\;a.s.
		\end{equation} 
		\begin{equation}\label{94}
		\|P_{k+1}\sum_{i=0}^{k}P_{i+1}^{-1}(s_{i}-\Pi_{P_{i+1}^{-1}}\{s_{i}\})I_{\{s_{i}\not\in D_{i}\}}\|=O(\frac{1}{k}),\;a.s.
	\end{equation} 
		Secondly, since $\|\mu_{i}^{-1}\beta_{i}\phi_{i}\|$ is bounded, from a refined martingale estimation theorem (see \cite{w1985}, \cite{g2020}), we have $\|\sum_{i=0}^{k}\mu_{i}^{-1}\beta_{i}\phi_{i}w_{i+1}\|=O(\sqrt{k\log\log k})$. Hence the second term of $(\ref{ite})$ satisfies
		\begin{equation}\label{99}
			\|P_{k+1}\sum_{i=0}^{k}\mu_{i}^{-1}\beta_{i}\phi_{i}w_{i+1}\|=O\left(\sqrt{\frac{\log\log k}{k}}\right),\;\;a.s.
		\end{equation}
		For the third term of $(\ref{ite})$, by  Cauchy-Schwarz inequality, we have
		\begin{equation}\label{popo}
				\begin{aligned}
			&\|\sum_{i=0}^{k}P_{k+1}^{\frac{1}{2}}\mu_{i}^{-1}\beta_{i}(\xi_{i}-\beta_{i})\phi_{i}\phi_{i}^{\top}\tilde{\theta}_{i}\|^{2}\\
			\leq&tr[\sum_{i=0}^{k}\mu_{i}^{-1}\beta_{i}^{2}P_{k+1}^{\frac{1}{2}}\phi_{i}\phi_{i}^{\top}P_{k+1}^{\frac{1}{2}}][\sum_{i=0}^{k}\mu_{i}^{-1}(\xi_{i}-\beta_{i})^{2}(\phi_{i}^{\top}\tilde{\theta}_{i})^{2}]\\
			=&O\left(\sum_{i=0}^{k}\mu_{i}^{-1}(\xi_{i}-\beta_{i})^{2}(\phi_{i}^{\top}\tilde{\theta}_{i})^{2}\right),\;\;a.s.
				\end{aligned}
		\end{equation} 
Notice that under condition $(\ref{re})$ and Lemma $\ref{lem6}$, we have
		\begin{equation}\label{lo}
			\|\tilde{\theta}_{k}\|^{2}=O\left(\frac{\log k}{k}\right),\;\;\|\tilde{\bar{\theta}}_{k} \|^{2}=O\left(\frac{\log k}{k}\right),\;\;a.s.
		\end{equation}
		Besides, from $(\ref{73})$, we have
		\begin{equation}\label{1333}
			(\xi_{i}-\beta_{i})^{2}(\phi_{i}^{\top}\tilde{\theta}_{i})^{2}\leq\rho^{2}\max((\phi_{i}^{\top}\tilde{\theta}_{i})^{4},(\phi_{i}^{\top}\tilde{\bar{\theta}}_{i})^{4}).
		\end{equation}
		Then from $(\ref{popo})$, $(\ref{lo})$ and $(\ref{1333})$,  we have
		\begin{equation}\label{128}
			\begin{aligned}
				&\|\sum_{i=0}^{k}P_{k+1}^{\frac{1}{2}}\mu_{i}^{-1}\beta_{i}(\xi_{i}-\beta_{i})\phi_{i}\phi_{i}^{\top}\tilde{\theta}_{i}\|^{2}\\
				=&O(\sum_{i=0}^{k}\rho^{2}\max((\phi_{i}^{\top}\tilde{\theta}_{i})^{4},(\phi_{i}^{\top}\tilde{\bar{\theta}}_{i})^{4}))
				=O(1),\;\;a.s.\\
			\end{aligned}
		\end{equation}
		Hence the third term of $(\ref{ite})$ satisfies
		\begin{equation}\label{97}
			\|P_{k+1}^{\frac{1}{2}}\sum_{i=0}^{k}P_{k+1}^{\frac{1}{2}}\mu_{i}^{-1}\beta_{i}(\xi_{i}-\beta_{i})\phi_{i}\phi_{i}^{\top}\tilde{\theta}_{i}\|=O\left(\sqrt{\frac{1}{k}}\right),\;\;a.s.
		\end{equation}
		Therefore, Remark $\ref{re7}$ is true by $(\ref{93})$, $(\ref{94})$,  $(\ref{99})$ and $(\ref{97})$.
		\hspace*{\fill}
~\QED\par\endtrivlist\unskip

\noindent\hspace{2em}{\itshape Proof of $(\ref{ssss})$:}	
	 Let us prove that for any $0\leq t\leq 2+\eta$,  
	 \begin{equation}\label{wwww}
	 	\sup\limits_{|x|\leq M_{k}, k\geq 0}\mathbb{E}_{k}[|S_{k}(x+e_{k+1})-G_{k}(x)|^{t}]<\infty, \;\;a.s.
	 \end{equation}
	 By Assumption $\ref{assum2}$, for any $|x|\leq M_{k}$, we have $|S_{k}(x+e_{k+1})-S_{k}(x)|\leq |e_{k+1}|+2c.$ Hence,
	\begin{equation}\label{ppppp}
		\begin{aligned}
			\mathbb{E}_{k}\left[|S_{k}(x+e_{k+1})-S_{k}(x)|^{t}\right]
			= O\left(\mathbb{E}_{k}\left[|e_{k+1}|^{t}\right]\right),\;\;\;a.s.
		\end{aligned}
	\end{equation}
		\begin{equation}\label{000}
		\begin{aligned}
			\mathbb{E}_{k}\left[|S_{k}(x)-G_{k}(x)|^{t}\right]&=\mathbb{E}_{k}\left[\left|\mathbb{E}_{k}[S_{k}(x)-S_{k}(x+e_{k+1})]\right|^{t}\right]\\
			&= O(\mathbb{E}_{k} \left[|e_{k+1}|^{t}\right]),\;\;\;a.s.
		\end{aligned}
	\end{equation}
	 Thus by Assumption $\ref{assum4}$, $(\ref{ppppp})$ and $(\ref{000})$, $(\ref{wwww})$ is true.
Let $t=2$ and notice that $\sup\limits_{|x|\leq M_{k}, k\geq 0}\{G_{k}(x)\}<\overline{g}_{k}M_{k}<\infty,\;a.s.,$ $(\ref{ssss})$ holds. 	
		\hspace*{\fill}~\QED\par\endtrivlist\unskip	

\noindent\hspace{2em}{\itshape Proof of $(\ref{muu})$:}
We first prove that 
\begin{equation}\label{contro}
	\inf\limits_{|x|\leq M_{k}, k\geq 0}|G_{k}(x)-L_{k}|>0,\; a.s.
\end{equation}
by contradiction. If $(\ref{contro})$ were not true, there would exist a set $\mathcal{B}$ with positive probability, such that for any $\omega \in \mathcal{B}$, we have 
	$\inf\limits_{|x|\leq M_{k}, k\geq 0}|G_{k}(x)-L_{k}|=0$ on $\mathcal{B}$.
Thus, for any $\omega \in \mathcal{B}$, there would exist a sequence $\{x_{k}, |x_{k}|\leq M_{k}\}$ such that $G_{k}(x_{k})-L_{k}\rightarrow 0$ as $k\rightarrow \infty$. Since by $(\ref{iinf})$ the function $G_{k}(x)$ is a non-decreasing function of $x$ for any $k$, and  $G_{k}(x)\geq L_{k}$ for any $x \in \mathbb{R}$, we thus have $G_{k}(x_{k}-\epsilon)-L_{k}\rightarrow 0$, where $\epsilon=M-\sup\limits_{k \geq0}\{M_{k}\}$. Therefore, $\liminf\limits_{k\rightarrow \infty}\frac{|G_{k}(x_{k})-G_{k}(x_{k}-\epsilon)|}{\epsilon}=0$, which means $\inf\limits_{|x|\leq M, k\geq 0}G'_{k}(x)=0$ on $\mathcal{B}$ and thus controdicts with $(\ref{iinf})$. Therefore, $(\ref{contro})$ is true. Similarly, we  will have 
$
	\inf\limits_{|x|\leq M_{k}, k\geq 0}|G_{k}(x)-U_{k}|>0,\; a.s.
$
Hence, there exists a variable $\delta>0,\;a.s.,$ such that
\begin{equation}
	|G_{k}(x_{k})-U_{k}|>\delta,\;\;	|G_{k}(x_{k})-L_{k}|>\delta,\;\;a.s.
\end{equation}
Besides, let 
\begin{equation}\label{eps}
	\epsilon_{k}=\min\{(\frac{\underline{e}_{k}^{2}}{3\overline{e}_{k}^{2}})^{\frac{2+\eta}{\eta}}, (\frac{\underline{e}_{k}^{2}}{3m_{k}\overline{e}_{k}})^{2}\},
\end{equation}
where $m_{k}=\sup\limits_{0\leq j\leq k}2(1+\overline{g}_{j})M_{j}$,  $\underline{e}_{k}^{2}=\inf\limits_{0\leq j\leq k} \mathbb{E}_{j}\{e_{j+1}^{2}\},$ and $\overline{e}_{k}=\sup\limits_{0\leq j\leq k}[\mathbb{E}_{j}\{|e_{j+1}|^{2+\eta}\}]^{\frac{1}{2+\eta}}$. Then from Assumption $\ref{assum4}$, we  have 
\begin{equation}\label{epsil}
\inf\limits_{k\geq 0}\{\epsilon_{k}\}>0,\;a.s.
\end{equation}
Moreover,  denote a sequence of minima $\{x_{k}, |x_{k}|\leq M_{k}\}$ such that $\sigma_{k}(x_{k})=\inf\limits_{|x|\leq M_{k}}\sigma_{k}(x),\;a.s.,$ we can easily know that $x_{k}$ is $\mathcal{F}_{k}-$measurable, we now show that
\begin{equation}\label{iii}
\sigma_{k}(x_{k})>\min\{\frac{1}{3}\underline{e}_{k}^{2}, \delta^{2}\epsilon_{k}\},\;a.s.,
\end{equation}
which will give the desirable result $(\ref{muu})$ by Assumption $\ref{assum4}$ and $(\ref{epsil})$. Indeed, let $\mathcal{A}_{k}=\{\omega: l_{k}\leq x_{k}+e_{k+1} \leq u_{k}\}$ and $\mathcal{B}_{k}=\{\omega: \mathbb{E}_{k}[I_{\mathcal{A}_{k}^{c}}]\geq \epsilon_{k}\}$.
By Cauchy-Schwarz inequality and $(\ref{eps})$, we have for any $k\geq 0$,
\begin{equation}\label{1433}
		\begin{aligned}
	&\mathbb{E}_{k}[e_{k+1}^{2}I_{\mathcal{A}_{k}^{c}}I_{\mathcal{B}_{k}^{c}}]\\ \leq&\left(\mathbb{E}_{k}[|e_{k+1}|^{2+\eta}]\right)^{\frac{2}{2+\eta}} \cdot \left(\mathbb{E}_{k}[I_{\mathcal{A}_{k}^{c}}]\right)^{\frac{\eta}{2+\eta}}\cdot I_{\mathcal{B}_{k}^{c}}\leq \frac{1}{3}\underline{e}_{k}^{2}I_{\mathcal{B}_{k}^{c}},\;a.s.\\
	&\mathbb{E}_{k}[e_{k+1}I_{\mathcal{A}_{k}^{c}}I_{\mathcal{B}_{k}^{c}}]\\ 
	\leq&\left(\mathbb{E}_{k}[|e_{k+1}|^{2}]\right)^{\frac{1}{2}} \cdot \left(\mathbb{E}_{k}[I_{\mathcal{A}_{k}^{c}}]\right)^{\frac{1}{2}}I_{\mathcal{B}_{k}^{c}}\leq \frac{1}{3m_{k}}\underline{e}_{k}^{2} I_{\mathcal{B}_{k}^{c}},\;a.s.
		\end{aligned}
\end{equation}
Therefore, by $(\ref{1433})$
we have for any $k$,
\begin{equation}\label{aaa}
		\begin{aligned}
	&\sigma_{k}(x_{k})=\mathbb{E}_{k}\left[|S_{k}(x_{k}+e_{k+1})-G_{k}(x_{k})|^{2}\right]\\
	\geq&\mathbb{E}_{k}\left[|S_{k}(x_{k}+e_{k+1})-G_{k}(x_{k})|^{2}I_{\mathcal{A}_{k}}I_{\mathcal{B}_{k}^{c}}\right]\\
	&+\mathbb{E}_{k}\left[|S_{k}(x_{k}+e_{k+1})-G_{k}(x_{k})|^{2}I_{\mathcal{A}_{k}^{c}}I_{\mathcal{B}_{k}}\right]\\
	\geq &\mathbb{E}_{k}\left[|x_{k}+e_{k+1}-G_{k}(x_{k})|^{2}I_{\mathcal{A}_{k}}I_{\mathcal{B}_{k}^{c}}\right]+\delta^{2}\mathbb{E}_{k}\left[I_{\mathcal{A}_{k}^{c}}\right]I_{\mathcal{B}_{k}}\\
	\geq&\mathbb{E}_{k}[e_{k+1}^{2}]I_{\mathcal{B}_{k}^{c}}-\mathbb{E}_{k}[e_{k+1}^{2}I_{\mathcal{A}_{k}^{c}}I_{\mathcal{B}_{k}^{c}}]+\delta^{2}\epsilon_{k}I_{\mathcal{B}_{k}}\\
	&+2|x_{k}-G_{k}(x_{k})|\mathbb{E}_{k}[e_{k+1}I_{\mathcal{A}_{k}^{c}}I_{\mathcal{B}_{k}^{c}}]\\
	\geq & \frac{1}{3}\underline{e}_{k}^{2}I_{\mathcal{B}_{k}^{c}}+\delta^{2}\epsilon_{k}I_{\mathcal{B}_{k}}\geq \min\{\frac{1}{3}\underline{e}_{k}^{2}, \delta^{2}\epsilon_{k}\},
	\;\;\;\;a.s.
		\end{aligned}
\end{equation}
Therefore, $(\ref{iii})$ is true and $(\ref{muu})$ is finally obtained.
	\hspace*{\fill}~\QED\par\endtrivlist\unskip


\begin{thebibliography}{00}
		
		\bibitem{mc1943}
		W. S. McCulloch and W. Pitts,
		 ``A logical calculus of the ideas immanent in nervous activity,'' 
		  \emph{The Bulletin of Mathematical Biophysics,} vol. 5, no. 10, pp. 115-133,  Dec. 1943. 
	
	   	\bibitem{gs1990}
	    S. I. Gallant,
	    ``Perceptron-based learning algorithms,''
	    \emph{IEEE Transactions on Neural Networks,} vol. 1, no. 2,  pp. 179-191, Jun. 1990.
	
	
              \bibitem{sj2004}
		J. Sun, Y. W. Kim and L. W,
		``Aftertreatment control and adaptation for automotive lean burn engines with HEGO sensors,"
		\emph{International Journal of Adaptive Control and Signal Processing},
		vol. 18, no. 2, pp. 145-166, Mar. 2004.



		\bibitem{hf2009}
		H. F. Grip~\emph{et al.},
		``Vehicle sideslip estimation,"
		\emph{IEEE Control Systems Magazine,
		vol. 29, no. 5, pp. 36-52, Oct. 2009.}
		
		\bibitem{tj1958}
		J. Tobin,
		``Estimation of relationships for limited dependent variables,''
		\emph{ Econometrica: Journal of the Econometric Society}, vol. 26, no. 1, pp. 24-36, Jan. 1958.

		
		\bibitem{mj2020}
		M. S. Jeon and J. H. Lee, 
		``Estimation of willingness-to-pay for premium economy class by type of service,"
		\emph{ Journal of Air Transport Management}, vol. 84, May. 2020.
		
		\bibitem{cm2013}
		A. Bykhovskaya, 
		``Time series approach to the evolution of networks: prediction and estimation,'' 
		\emph{Journal of Business $\&$ Economic Statistics}, vol. 41, pp. 170-183, 2023.
				
		\bibitem{judical}
		F. Wang,  L. Zhang and  L. Guo, 
		``Applications of nonlinear recursive identification theory in sentencing data analyses,''  
		\emph{SCIENTIA SINICA Informationis},
		vol. 52, no. 10, 2022, https://doi.org/10.1360/SSI-2022-0325.

		\bibitem{cg1991}
H. F. Chen and L. Guo, 
\emph{Identification and Stochastic Adaptive Control,}
Birkh{\"a}suser, Boston, 1991.

		\bibitem{al2021}
		A. L. Bruce, A. Goel and D. S. Bernstein, 
		``Necessary and sufficient regressor conditions for the global asymptotic stability of recursive least squares,''
		\emph{Systems $\&$ Control Letters}, vol. 157, Nov. 2021.

             \bibitem{ss2023}
		Shadab S et al., 
		``Finite-time parameter estimation for an online monitoring of transformer: A system identification perspective," 
		\emph{International Journal of Electrical Power $\&$ Energy Systems},  vol. 145,  Feb. 2023.

		\bibitem{Lj1977}
	     L. Ljung, 
		``Analysis of recursive stochastic algorithms,''
		\emph{ IEEE Transactions on Automatic Control},  vol. 22, no. 4, pp. 551-575, Aug. 1977.


		\bibitem{ll1976}
		L. Ljung, 
		``Consistency of the least-squares identification method,''
		\emph{  IEEE Transactions on Automatic Control}, vol. 21, no. 15, pp. 779-781, Oct. 1976.
		
		\bibitem{mj1978}
		J. B. Moore, 
		``On strong consistency of least squares identification algorithms,''
		\emph{  Automatica},  vol. 14, no. 5, pp. 505-509, Sep. 1978.
				
		\bibitem{lw1982}	
		T. L. Lai and  C. Z. Wei,
		``Least squares estimates in stochastic regression models with applications to identification and control of dynamic systems,''
		\emph{ The Annals of Statistic}, vol. 10, no. 1, pp. 154-166, Mar. 1982.
		
		\bibitem{hj1976}
		J. J. Heckman, 
	    ``The common structure of statistical models of truncation, sample selection and limited dependent variables and a simple estimator for such models,''
		\emph{ Annals of Economic and Social Measurement}, vol. 5, no. 4, pp. 475-492, Oct. 1976.

		
		\bibitem{1412020}
		L. Guo, 
		``Feedback and uncertainty: Some basic problems and results,''
	    \emph{ Annual Reviews in Control},  vol. 49, pp. 27-36, May. 2020.
		
		\bibitem{PJ1984}
		 J. L. Powell,
		``Least absolute deviations estimation for the censored regression model,''
		\emph{ Journal of Econometrics},  vol. 25, no. 3, pp. 303-325, Jul. 1984.
		
		\bibitem{ly1992}
		T. L. Lai and Z. Ying, 
		``Asymptotically efficient estimation in censored and truncated regression models,''
		\emph{ Statistica Sinica},  vol. 2, no. 1, pp. 17-46, Jan. 1992.
		
				
		
		\bibitem{ZG2003}
		L. Y. Wang, J. F. Zhang and  G. G. Yin, 
		``System identification using binary sensors,''
		\emph{ IEEE Transactions on Automatic Control},  vol. 48, no. 11, pp. 1892-1907, Nov. 2003.
		
		\bibitem{YZ2006}
		 L. Y. Wang, G. G. Yin and  J. F. Zhang,
		``Joint identification of plant rational models and noise distribution functions using binary-valued observation,''
		\emph{ Automatica},  vol. 42, no. 4, pp.  535-547, Apr. 2006.
		
		\bibitem{YG2007}
		 L. Y. Wang and G. G. Yin, 
		``Asymptotically efficient parameter estimation using quantized output observations,''
		\emph{ Automatica},  vol. 43, no. 7, pp. 1178-1191, Jul. 2007.
		
		\bibitem{GZ2013}
		J. Guo  and Y. Zhao, 
		``Recursive projection algorithm on FIR system identification with binary-valued observations,''
		\emph{Automatica}, vol. 49, no. 11, pp. 3396-3401, Nov. 2013.
		
		\bibitem{yy2015}
		Y. Guo~\emph{et al.},
		``Identification of nonlinear systems with non-persistent excitation using an iterative forward orthogonal least squares regression algorithm,"
		 \emph{International Journal of Modelling, Identification and Control}, vol. 23, no. 1, pp. 1-7, Apr. 2015.
				
		\bibitem{sm2018}
		S. Mei, Y. Bai and A. Montanari,
		``The Landscape of empirical risk for nonconvex losses,"
		\emph{The Annals of Statistics}, vol. 46, no. 6A, pp. 2747-2774, Dec. 2018.

	
				
		\bibitem{ZZ2022} 
		 L. Zhang,  Y. Zhao and  L. Guo,
		``Identification and adaptation with binary-valued observations under non-persistent excitation,''
		\emph{ Automatica}, vol. 138: 110158,  Apr. 2022. 
		

		 
		
\bibitem{kv2019}
	     V. Kontonis,  C. Tzamos and M. Zampetakis, 
		``Efficient truncated statistics with unknown truncation,'' 
		in \emph{IEEE 60th Annual Symposium on Foundations of Computer Science (FOCS)}, Baltimore, MD, USA,  Nov. 2019,  pp. 1578-1595.
		
		\bibitem{dc2021}
		C. Daskalakis~\emph{et al.}, 
		``Efficient truncated linear regression with unknown noise variance,''
		\emph{Advances in Neural Information Processing Systems}, vol. 34, pp. 1952-1963, 2021. 
		
		
				
		
		\bibitem{po2021} 
		O. Plevrakis, 
		``Learning from censored and dependent data: The case of linear dynamics,'' 
		in \emph{Proc. 34th Conference on Learning Theory,}  2021,  pp. 3771-3787.
		
	
		\bibitem{gl1993}
		L. Guo, L. Ljung  and P. Priouret,
		``Performance analysis of the forgetting factor RLS algorithm,'' 
		\emph{International Journal of Adaptive Control and Signal Processing}, vol. 7, no. 6, pp. 525-537, Nov. 1993.
		
		
		
		\bibitem{g1995}
		L. Guo,
		``Convergence and logarithm laws of self-tuning regulators,''  
		\emph{Automatica}, vol. 31, no. 3, pp. 435-450, Mar. 1995.
		
		\bibitem{g2020}
		L. Guo,  
		\emph{Time-Varying Stochastic Systems: Stability and Adaptive Theory, 2}nd ed. Beijing: Science Press, 2020.
		
			\bibitem{w1985}	
		C. Z. Wei,
		``Asymptotic properties of least-squares estimates in stochastic regression models,''  
		\emph{The Annals of Statistics}, vol. 13, no. 4, pp. 1498-1508, Dec. 1985.
		
		\bibitem{lai1986}	
		L. T. Lai,
		``Asymptotically efficient adaptive control in stochastic regression models,''  
		\emph{Advances in Applied Mathematics}, vol. 7, no. 1, pp. 23-45, Mar. 1986.
		
		
		\bibitem{sm2003}
		J. Shao, 
		\emph{Mathematical Statistics, 2}nd ed. 
		New York: Springer, 2003. 
		
	
		\bibitem{dp2014}
		 D. P. Kroese and J. C. C. Chan,
		\emph{Statistical Modeling and Computation,}  
		New York: Springer, 2014.
		
		
		\bibitem{ce2001}
		W. Cheney, 
		\emph{Analysis for Applied Mathematics,} 
		New York: Springer, 2001.
		 
		\bibitem{y2003}
		Y. S. Chow  and  H. Teicher,
		\emph{Probability Theory: Independence, Interchangeability, Martingales, 3}th ed. 
		New York: Springer, 1997.	
		
				
	\end{thebibliography}
	\end{document}